\documentclass[12pt]{article}
\usepackage{amsthm,amsmath}
\usepackage[margin=1in]{geometry}
\usepackage{natbib}
\usepackage{titlesec}
\usepackage{dsfont, amssymb, amscd, caption,  color}
\usepackage{graphicx,subfigure}
\usepackage{framed}
\usepackage[titletoc,title]{appendix}

\usepackage{times,soul}
\usepackage{bm}
\titleformat{\section}{\bfseries\centering}{\thesection.}{0.5em}{}

\usepackage{tikz}

\newtheorem{theorem}{Theorem}
\newtheorem{lemma}{Lemma}

\newtheorem{proposition}{Proposition}
\theoremstyle{definition}
\newtheorem{definition}{Definition}
\newtheorem{assumption}{Assumption}
\newtheorem{condition}{Condition}

\theoremstyle{remark}
\newtheorem{remark}{Remark}

\newcommand{\bpm}{\begin{pmatrix}}
\newcommand{\epm}{\end{pmatrix}}
\newcommand{\bbm}{\begin{bmatrix}}
\newcommand{\ebm}{\end{bmatrix}}
\def\T{{ \mathrm{\scriptscriptstyle T} }}
\DeclareMathOperator*{\subjectto}{subject\ to}
\newcommand{\vertiii}[1]{{\left\vert\kern-0.25ex\left\vert\kern-0.25ex\left\vert #1 
    \right\vert\kern-0.25ex\right\vert\kern-0.25ex\right\vert}}
\usepackage{setspace}

\begin{document}

\title{Network Reconstruction From High Dimensional Ordinary Differential Equations}
\author{Shizhe Chen, Ali Shojaie, and Daniela M. Witten\footnote{ 
Shizhe Chen is Graduate Student, Department of Biostatistics, University of Washington, WA 98195 (e-mail: shizhe.chen@gmail.com); 
Ali Shojaie is Associate Professor, Department of Biostatistics, and Adjunct Associate Professor, Department of Statistics, University of Washington, WA 98195 (e-mail: ashojaie@u.washington.edu); 
and Daniela M. Witten is Associate Professor,  Departments of Biostatistics and Statistics, University of Washington, WA 98195  (e-mail: dwitten@u.washington.edu). 
We thank the associate editor and two anonymous reviewers for helpful comments. 
We thank the authors of \cite{brunel2014}, \cite{hall2014}, \cite{henderson2014},  and \cite{wu2014} for sharing their code for their proposals, and for responding to our inquiries. 
We thank the Allen Institute for Brain Science for providing the data set analyzed in Section~\ref{sec::CI}.
A.S. was supported by NSF grant DMS-1561814 and NIH grants 1K01HL124050-01A1 and 1R01GM114029-01A1, and  D.W. was supported by NIH Grant DP5OD009145, NSF CAREER Award DMS-1252624, and an Alfred P. Sloan Foundation Research Fellowship. 
} 
}
\date{}
\maketitle

\begin{center}
Department of Biostatistics \\
University of Washington\\
Box 357232\\
Seattle, WA 98195-7232\\
\end{center}

\begin{abstract}
We consider the task of learning a dynamical system from high-dimensional time-course data. For instance, we might wish to estimate a gene regulatory network from gene expression data measured at discrete time points. We model the dynamical system non-parametrically as a system of additive ordinary differential equations. Most existing methods for parameter estimation in ordinary differential equations estimate the derivatives from noisy observations. This is known to be challenging and inefficient. We propose a novel approach that does not involve derivative estimation. We show that the proposed method can consistently recover the true network structure even in high dimensions, and we demonstrate empirical improvement over competing approaches.  

\noindent \textbf{Keywords} {Additive model}; {Group lasso}; {High dimensionality}; {Ordinary differential equation}; {Variable selection consistency}

\end{abstract}

\section{INTRODUCTION}\label{sec::intro}

Ordinary differential equations (ODEs) have been widely used to model dynamical systems in many fields, including chemical engineering \citep{biegler1986},  genomics \citep{chou2009}, neuroscience \citep{izhikevich2007}, and infectious diseases \citep{wu2005}.  A system of ODEs takes the form
\begin{equation}
X'(t;\theta)\equiv \bbm \frac{d X_1(t;\theta)}{dt} \\ \vdots \\ \frac{dX_p(t;\theta)}{dt} \ebm = \bbm f_1(X(t;\theta), \theta ) \\ \vdots \\  f_p(X(t;\theta), \theta ) \ebm  \equiv f(X(t;\theta),\theta); \quad t \in [0,1],
\label{eqn::ODE_model}
\end{equation} 
where $X(t;\theta)=(X_1(t;\theta), \ldots, X_p(t;\theta))^{\T}$ denotes a set of variables, and the form of the functions $f=(f_1, \ldots, f_p)^{\T}$ may be known or unknown. 
In \eqref{eqn::ODE_model}, $t$ indexes time. 
Typically, there is also an initial condition of the form $X(0;\theta)=C$, where $C$ is a $p$-vector.  
In practice, the system \eqref{eqn::ODE_model} is often observed on discrete time points subject to measurement errors. 
Let $Y_i \in \mathbb{R}^p$ be the measurement of the system at time $t_i$ such that 
\begin{equation} \label{eqn::noisy}
Y_i= X(t_i;\theta^*)+\epsilon_i, \quad i=1, \ldots, n,
\end{equation}
where $\theta^*$ denotes the true set of parameter values and the random $p$-vector $\epsilon_i$ represents independent measurement errors. 
In what follows, for notational simplicity, we sometimes suppress the dependence of $X(t;\theta)$ on $\theta$, i.e., $X(t) \equiv X(t;\theta)$ in \eqref{eqn::ODE_model} and  $X^*(t) \equiv X(t;\theta^*)$ in \eqref{eqn::noisy}. 

In the context of high-dimensional time-course data arising from biology, it can be of interest to recover the structure of a system of ODEs --- that is, to determine which features regulate each other.
If $f_j$ in \eqref{eqn::ODE_model} is a function of $X_k$, then we say that $X_k$ \emph{regulates} $X_j$ in the sense that $X_k$ controls the changes of $X_j$ through its derivative $X'_j$.
For instance, biologists might want to infer gene regulatory networks from noisy time-course gene expression data.
In this case, the number of variables $p$ exceeds the number of time points $n$; we refer to this as the high-dimensional setting. 

In high-dimensional statistics, sparsity-inducing penalties such as the lasso \citep{tibshirani1996} and the group lasso \citep{yuan2006} have been well-studied. Such penalties have also been extensively used to recover the structure of probabilistic graphical models (e.g., \citealp{yuan2007,friedman2008, meinshausen2010, voorman2014}). 
However, model selection in high-dimensional ODEs remains a relatively open problem, with the exception of some notable recent work \citep{lu2011,henderson2014,wu2014}.  
In fact, the tasks of parameter estimation and model selection in ODEs from noisy data are very challenging, even in the classical statistical setting where $n > p$ (see e.g., 
\citealp{ramsay2007, brunel2008, liang2008,qi2010,  xue2010, gugushvili2012, hall2014, zhang2015}). Moreover, the problem of high-dimensionality is compounded if the form of the function $f$ in \eqref{eqn::ODE_model} is unknown, leading to both statistical and computational issues. 

In this paper, we propose an efficient procedure for structure recovery of an ODE system of the form \eqref{eqn::ODE_model} from noisy observations of the form \eqref{eqn::noisy}, in the setting where the functional form of $f$ is unknown. In Section~\ref{sec::method}, we review existing methods. In Section~\ref{sec::ourmethod}, we propose a new structure recovery procedure.  In Section~\ref{sec::theory}, we study the theoretical properties of our proposal. In Section~\ref{sec::simulation}, we apply our procedure to simulated data. In Section~\ref{sec::RDA}, we apply it to \textit{in silico} gene expression data generated by GeneNetWeaver \citep{schaffter2011} and to calcium imaging data.  We conclude with a discussion in Section~\ref{sec::discussion}. Proofs and additional details are provided in the supplementary material.

\section{LITERATURE REVIEW}\label{sec::method}
In this section, we review existing statistical methods for parameter estimation and/or model selection in ODEs. 
Most of the methods reviewed in this section are proposed for the low-dimensional setting.
Even though they may not be directly applicable to the high-dimensional setting, they lay the foundation for the development of model selection procedures in high-dimensional additive ODEs.

\subsection{Notation}

Without loss of generality, assume that  $0=t_1 < t_2 < \ldots < t_n=1$.  We let $Y_{ij}$ indicate the observation of the $j$th variable at the $i$th time point, $t_i$. We use $\mathcal{X}(h)$ to denote a nonparametric class of functions on $[0,1]$ indexed by some smoothing parameter(s) $h$. We use $Z(\cdot)$ to represent an arbitrary function belonging to $\mathcal{X}(\cdot)$. We use $\|\cdot\|_2$ to denote the $\ell_2$-norm of a vector or a matrix, and $\vertiii{f}$ to denote the $\ell_2$-norm of a function $f$ on the interval $[0,1]$, i.e. $\vertiii{f}^2 \equiv \int_0^1 f^2(t) \, dt$.  
We use an asterisk to denote true values---for instance, $\theta^*$ denotes the true value of $\theta$ in \eqref{eqn::ODE_model}.
We use $\Lambda_{\min}(A)$ and $\Lambda_{\max}(A)$ to denote the minimum and maximum eigenvalues of a square matrix $A$, respectively.

\subsection{Methods that assume a known form of $f$}\label{sec::knownf}

\subsubsection{Gold standard approach}\label{sec::gold}

To begin, we suppose that the function $f$ in \eqref{eqn::ODE_model} takes a known form. \cite{benson1979} and \cite{biegler1986} proposed to estimate the unknown parameter $\theta^*$ in \eqref{eqn::noisy} by solving the problem 
\begin{subequations}
\label{eqn::gold}
\begin{align}
\hat{\theta}^{\text{gold}}= \underset{\theta }{\arg \min} & \ \sum_{i=1}^n \|Y_i - X(t_i;\theta)\|^2_2 \label{eqn::gold_objective}\\      
\subjectto & \quad X'(t;\theta)= f(X(t;\theta), \theta ), \quad t \in [0,1]. \label{eqn::gold_constraint}
\end{align}
\end{subequations} 
Note that $X(\cdot;\theta)$ in \eqref{eqn::gold} is a fixed function given $\theta$, although an analytic expression may not be available.  
The resulting estimator $\hat{\theta}^{\text{gold}}$ has appealing theoretical properties: for instance, when the measurement errors $\epsilon_i$ in \eqref{eqn::noisy} are Gaussian, then $\hat{\theta}^{\text{gold}}$ is the maximum likelihood estimator, and is $\sqrt{n}$-consistent.
In this sense, \eqref{eqn::gold} can thus be considered  the \emph{gold standard} approach. 
However, solving \eqref{eqn::gold} is often computationally challenging.

\subsubsection{Two-step collocation methods}\label{sec::two-step}

In order to overcome the computational challenges associated with solving \eqref{eqn::gold}, \emph{collocation} methods have been employed by a number of authors \citep{varah1982, ellner2002,   ramsay2007, brunel2008, cao2008,liang2008, cao2011, lu2011, gugushvili2012,  brunel2014,  hall2014, henderson2014,    wu2014, dattner2015, zhang2015}. 

The two-step collocation procedure first proposed by \cite{varah1982} involves fitting a smoothing estimate $\hat{X}(\cdot;h)$ to the observations $Y_1,\ldots, Y_n$ in \eqref{eqn::noisy} with a smoothing parameter  $h$, and then plugging $\hat{X}(\cdot;h)$ and its derivative with respect to $t$ into \eqref{eqn::ODE_model} in order to estimate $\theta$. This amounts to solving the optimization problem 
\begin{subequations}
\label{eqn::two}
\begin{align}
\hat{\theta}^{\text{TS}} & =  \underset{\theta}{\arg \min} \ \int_0^1 \left\| \hat{X}'(t;{h})-f\big(\hat{X}(t;{h}),\theta\big)\right\|^2_2 \, dt, \label{eqn::two_objective}\\
\intertext{where }
& \hat{X}(\cdot;h)= \underset{ Z(\cdot) \in \mathcal{X}(h) }{\arg \min}  \sum_{i=1}^n \|Y_i - Z(t_i)\|_2^2. \label{eqn::two_constraint}
\end{align}
\end{subequations}
The two-step procedure \eqref{eqn::two} has a clear advantage over the gold standard approach \eqref{eqn::gold} because the former decouples the estimation of $\theta$ and $X$. 
However, this advantage comes at a cost: due to the presence of $\hat{X}'$ in \eqref{eqn::two_objective}, the properties of the estimator $\hat{\theta}^{\text{TS}} $ in \eqref{eqn::two} rely heavily on the smoothing estimates obtained in \eqref{eqn::two_constraint}, and $\sqrt{n}$-consistency has only been shown for certain values of the smoothing parameter $h$ that are hard to choose in practice \citep{brunel2008,liang2008,gugushvili2012}. 

\cite{dattner2015} proposed an improvement to \eqref{eqn::two} for a special case of \eqref{eqn::ODE_model}.
To be more specific,  they assume that $f_j(X(t),\theta)$ in \eqref{eqn::ODE_model} is a linear function of $\theta$, which leads to 
\begin{equation}
X'(t)\equiv \bbm \frac{d X_1(t)}{dt} \\ \vdots \\ \frac{dX_p(t)}{dt} \ebm = \bbm g_1^{\T}(X(t))\theta \\ \vdots \\  g_p^{\T}(X(t)) \theta\ebm   \equiv  g(X(t))\theta; \quad t \in [0,1],
\label{eqn::ODE_linear}
\end{equation}
where $g(X(t))$ is a known function of $X(t)$. 
Integrating both sides of \eqref{eqn::ODE_linear} gives 
\begin{equation}
X(t)= \left\{ \int_0^t g(X(u)) \, du \right\} \theta + C,
\label{eqn::ODE_linear_int}
\end{equation}
where $C\equiv X(0;\theta)$. The unknown parameter $\theta^*$ is estimated by solving 
\begin{subequations}
\label{eqn::linear}
\begin{align}
\hat{\theta}^{\text{LM}}= \underset{\theta}{\arg \min} & \int_0^1 \left\|\hat{X}(t;{h}) - \left\{\int_0^t g\big(\hat{X}(u;{h})\big) \, du\right\} \theta - C \right\|_2^2 \, dt, \label{eqn::linear_object}\\
\intertext{where }
& \hat{X}(\cdot;h)= \underset{Z(\cdot)\in \mathcal{X}(h)}{\arg \min} \sum_{i=1}^n \|Y_i- Z(t_i)\|_2^2.\label{eqn::linear_constraint}
\end{align}
\end{subequations}
The optimization problem \eqref{eqn::linear_object} has an analytical solution, given the smoothing estimates from \eqref{eqn::linear_constraint}. Compared with the two-step procedure \eqref{eqn::two}, this approach requires an estimate of the integral, $\int_0^t g\big(\hat{X}(u;{h})\big) \, du$ in \eqref{eqn::linear_object}, rather than an estimate of  the derivative, $\hat{X}'(t;h)$. This has profound effects on the asymptotic behaviour of the estimator $\hat{\theta}^{\text{LM}}$. $\sqrt{n}$-consistency of $\hat{\theta}^{\text{LM}}$ has been established under mild conditions, and it has been found that the choice of smoothing parameter $h$ is less crucial than for other methods \citep{gugushvili2012}.

Recently, \cite{brunel2014} and \cite{hall2014} have considered alternatives to the loss function in \eqref{eqn::two_objective}.
Let $\mathbb{C}^1 (0,1)$ be the set of functions that are first-order differentiable on $(0,1)$ and equal zero on the boundary points $0$ and $1$. 
Then \eqref{eqn::ODE_model} implies that, 
for any $\phi \in \mathbb{C}^1(0,1)$,
\begin{equation}\label{eqn::ODE_variation}
\int_0^1 f(X(t), \theta) \phi(t)dt + \int_0^1 X(t) \phi'(t)dt = 0.
\end{equation}
Equation \eqref{eqn::ODE_variation} is referred to as the \emph{variational formulation} of the ODE. 
A least squares loss based on \eqref{eqn::ODE_variation} takes the form 
\begin{equation}\label{eqn::var_objective}
\hat{\theta}^{\text{V}} =  \underset{\theta}{\arg \min}  \frac{1}{L}\sum_{l=1}^L \left\|\int_0^1 f\big( \hat{X}(t; h), \theta\big) \phi_l(t)dt + \int_0^1 \hat{X}(t; h) \phi'_l(t)dt \right\|^2_2, 
\end{equation}
where $\hat{X}(t;h)$ is defined in \eqref{eqn::two_constraint} and $\{ \phi_l, l=1,\ldots, L \}$ is a finite set of functions in $\mathbb{C}^1(0,1)$ \citep{brunel2014}.
In \cite{hall2014}, the loss function is the sum of the loss functions in \eqref{eqn::two_constraint} and \eqref{eqn::var_objective}, so that $\theta$ and the optimal bandwidth $h$ are estimated simultaneously. 
It is immediately clear that the derivative ${X}'(\cdot;\theta)$ is not needed in  \eqref{eqn::var_objective}, which can lead to substantial improvement compared to the two-step procedure in \eqref{eqn::two}. 
A minor drawback of \eqref{eqn::var_objective} is that the variational formulation \eqref{eqn::ODE_variation} is enforced on a finite set of functions $\{\phi_l, l=1,\ldots, L \}$ rather than on the whole class $\mathbb{C}^1 (0,1)$.
Under suitable assumptions, the estimator $\hat{\theta}^{\text{V}}$ is $\sqrt{n}$-consistent \citep{brunel2014,hall2014}.

\subsubsection{The generalized profiling method}\label{sec::profiling}

Another collocation-based method is the generalized profiling method of \cite{ramsay2007}. Instead of the smoothing estimate $\hat{X}(\cdot;h)$ in \eqref{eqn::two_constraint}, the generalized profiling method uses a smoothing estimate $\check{X}(\cdot;h,\theta)$ that minimizes the weighted sum of a data-fitting loss and a model-fitting loss for any given $\theta$. In greater detail, 
\begin{subequations}
\label{eqn::GPM}
\begin{equation}
 \hat{\theta}^{\text{GP}}_{\lambda}  = \underset{\theta}{\arg \min}  \sum_{i=1}^n \left\|Y_i-\check{X}(t_i;{h},\theta)\right\|_2^2, \label{eqn::GPM_objective}
\end{equation}
where 
\begin{equation}
 \check{X}(\cdot;h,\theta)= \underset{Z(\cdot) \in \mathcal{X}(h) }{\arg \min}  \frac{1}{n}\sum_{i=1}^n \|Y_i-Z(t_i)\|_2^2 + \lambda \int_0^1 \| {Z}'(t)- f({Z}(t),\theta) \|^2_2 \, dt. \label{eqn::GPM_constraint}
\end{equation}
\end{subequations}
In \cite{ramsay2007}, the authors solve \eqref{eqn::GPM_objective} iteratively for a non-decreasing sequence of $\lambda$'s in \eqref{eqn::GPM_constraint}. 
$\sqrt{n}$-consistency of the limiting estimator was later established by \cite{qi2010}.
\cite{zhang2015} proposed a model selection procedure by applying an \emph{ad hoc} lasso procedure \citep{wang2007} to the estimates from \eqref{eqn::GPM}.

\subsection{Methods that do not assume the form of $f$}\label{sec::highd}

A few authors have recently considered modeling large-scale dynamical systems from biology using ODEs \citep{henderson2014, wu2014}, under the assumption that the right-hand side of \eqref{eqn::ODE_model} is additive, 
\begin{equation}\label{eqn::additivity}
X'_j(t) = \theta_{j0}  + \sum_{k=1}^p f_{jk}(X_k(t)),  \quad \theta_{j0} \in \mathbb{R}.
\end{equation}
\cite{henderson2014} and \cite{wu2014} approximate the unknown $f_{jk}$ with a truncated basis expansion. 
Consider a finite basis,  $\psi(x)=( \psi_1(x), \ldots, \psi_{M}(x))^{\T}$, such that 
\begin{equation}\label{eqn::fjk_approx}
f_{jk}(a_k)= \psi(a_k)^{\T}\theta_{jk} + \delta_{jk}(a_k), \quad \theta_{jk} \in \mathbb{R}^M,
\end{equation}
where $\delta_{jk}(a_k)$ denotes the residual. 
Using \eqref{eqn::fjk_approx}, a system of additive ODEs of the form \eqref{eqn::additivity} can be written as 
\begin{equation}
X'_j(t) = \theta_{j0}+\sum_{k=1}^p  \psi(X_k(t))^{\T} \theta_{jk}+\sum_{k=1}^p \delta_{jk}(X_k(t)), \quad j=1,\ldots,p.
\label{eqn::ODE_basis_finite}
\end{equation} 
\cite{henderson2014} and \cite{wu2014} consider the problem of estimating and selecting the non-zero elements $\theta_{jk}$ in \eqref{eqn::ODE_basis_finite}. 
Roughly speaking, they propose to solve optimization problems of the form   
\begin{subequations}\label{eqn::np}
\begin{equation}
\begin{split}
\hat{\theta}^{\text{NP}}_j =  \underset{\theta_{j0} \in \mathbb{R}, \theta_{jk} \in \mathbb{R}^M }{\arg \min}  \  & \int_0^1 \left\| \hat{X}'_j(t;{h})-\theta_{j0}-\sum_{k=1}^p  \psi\big(\hat{X}_k(t;h)\big)^{\T} \theta_{jk}\right\|^2_2 \, dt \\
&+ \lambda_n \sum_{k=1}^p \left[ \int_0^1 \{ \psi\big(\hat{X}_k(t;h)\big)^{\T}\theta_{jk}\}^2 \, dt\right]^{1/2}, 
\end{split}
\label{eqn::np_objective}
\end{equation}
for $j=1,\ldots,p$, where
\begin{equation} 
  \hat{X}(\cdot;h)= \underset{ Z(\cdot) \in \mathcal{X}(h) }{\arg \min}  \sum_{i=1}^n \|Y_i - Z(t_i)\|_2^2. \label{eqn::np_constraint}
 \end{equation}
\end{subequations}
In \eqref{eqn::np_objective}, a standardized group lasso penalty  
forces all elements in $\theta_{jk}$ to be either zero or non-zero when $\lambda_n$ is large, thereby providing variable selection.

The proposals of \cite{henderson2014} and \cite{wu2014} are slightly more involved than \eqref{eqn::np}: an extra $\ell_2$-penalty is applied to the $\theta_{jk}$'s in \eqref{eqn::np_objective} in \cite{henderson2014}, whereas in \cite{wu2014} \eqref{eqn::np_objective} is followed by tuning parameter selection using Bayesian information criterion (\textsc{bic}), an adaptive group lasso regression, and a regular lasso.  We refer the reader to \cite{henderson2014} and \cite{wu2014}  for further details.

\section{PROPOSED APPROACH}\label{sec::ourmethod}

We consider the problem of model selection in high-dimensional ODEs. 
As in \cite{henderson2014} and \cite{wu2014}, we assume an additive ODE model \eqref{eqn::additivity}.
We use a finite basis $\psi(\cdot)$ to approximate the additive components $f_{jk}$ as in \eqref{eqn::fjk_approx}, leading to an ODE system that is linear in the unknown parameters \eqref{eqn::ODE_basis_finite}.
Following the example of \cite{dattner2015}, we exploit this linearity by integrating both sides of \eqref{eqn::ODE_basis_finite}, which yields 
\begin{equation}\label{eqn::ODE_integrated}\small 
X_j(t)=X_j(0) + \theta_{j0} t +\sum_{k=1}^p \Psi_k(t)^{\T} \theta_{jk} + \sum_{k=1}^p\int_0^t \delta_{jk}(X_k(u)) \, du, 
\end{equation}
where $\Psi_k(t)$ denotes the  integrated basis such that
\begin{equation}\label{eqn::Psi}
\Psi_k(t)= (\Psi_{k1}(t), \ldots, \Psi_{kM}(t))^{\T} = \int_0^t \psi(X_k(u)) \, du, \ k =1, \ldots, p,
\end{equation}
and $\Psi_0(t)=t$. 
Our method, called \textit{Graph Reconstruction via Additive Differential Equations} (GRADE), then solves the following problem for $j=1, \ldots, p$:
\begin{subequations}\label{eqn::us}
\begin{equation}
\small
\begin{split}
\hat{\theta}_j= \underset{C_{j0} \in \mathbb{R}, \theta_{j0} \in \mathbb{R}, \ \theta_{j1},\ldots, \theta_{jp} \in \mathbb{R}^{M}}{\arg \min} \ &  \frac{1}{2n} \sum_{i=1}^n \left\{Y_{ij} - C_{j0}- \theta_{j0}\hat{\Psi}_0(t_i) -\sum_{k=1}^p  \theta_{jk}^{\T} \hat{\Psi}_{k}(t_i)\right\}^2 \\
&+  \lambda_{n,j} \sum_{k=1}^p \left[ \frac{1}{n} \sum_{i=1}^n  \big\{\theta_{jk}^{\T}\hat{\Psi}_{k}(t_i)\big\}^2 \right]^{1/2},  
\end{split} \label{eqn::us_objective} 
\end{equation}
where
\begin{equation}
 \hat{X}(\cdot;h) = \underset{{Z}(\cdot) \in \mathcal{X}(h)}{\arg \min} \sum_{i=1}^n \|Y_i- Z(t_i)\|_2^2, \label{eqn::us_initial}\\
\end{equation}
and
\begin{equation}
 \hat{\Psi}_0(t)=t; \ \hat{\Psi}_k(t) = \int_0^t \psi(\hat{X}_k(u;h)) \, du, \ k=1, \ldots, p. \label{eqn::Psihat}
\end{equation}
\end{subequations}
In \eqref{eqn::us_objective}, $\lambda_{n,j}$ is a non-negative sparsity-inducing tuning parameter.
We may sometimes use $\lambda_{n,j} \equiv \lambda_n$ for $j=1,\ldots,p$ for simplicity. 
If the true function $f_{jk}^*$ in \eqref{eqn::additivity} is non-zero, we say that the $k$th variable $X^*_k$ is a true regulator of $X_j^*$. We let $S_j\equiv \{k: \|f^*_{jk}\|_2 \neq 0, k=1, \ldots, p\}$ denote the set of true regulators. We let the estimated index set of regulators be $\hat{S}_j\equiv \{k: \big\|\hat{\theta}_{jk}\big\|_2 \neq 0, k=1, \ldots, p\}$. We then  reconstruct the network using $\hat{S}_j, j=1,\ldots, p$.

Both \eqref{eqn::us_objective} and \eqref{eqn::us_initial}  can be implemented efficiently using existing software (e.g., \citealp{locfit, grplasso}). In our theoretical analysis in Section~\ref{sec::theory}, we use local polynomial regression to obtain the smoothing estimate in \eqref{eqn::us_initial}.  We  use generalized cross-validation (\textsc{gcv}) on the loss \eqref{eqn::us_initial} to select the smoothing tuning parameter $h$. We use \textsc{bic} to select the number of bases $M$ for $\psi$ and $\hat{\Psi}$ in \eqref{eqn::Psihat}, and the sparsity tuning parameter $\lambda_n$ in \eqref{eqn::us_objective}. 

In some studies, time-course data is collected from multiple samples, or experiments. Let $R$ denote the total number of experiments, and $Y^{(r)}$  the observations in the $r$th experiment. We assume that the same ODE system \eqref{eqn::ODE_basis_finite} applies across all experiments with the same true parameter $\theta^*_{jk}$. We allow a different set of initial values for each experiment. Assume that each experiment consists of measurements on the same set of time points. This leads us to modify \eqref{eqn::us} as follows:
\begin{equation}
\footnotesize
\label{eqn::multiple}
\begin{aligned}
\hat{\theta}_j= \underset{C_{j0}^{(r)} \in \mathbb{R}, \theta_{j0} \in \mathbb{R}, \ \theta_{j1},\ldots, \theta_{jp} \in \mathbb{R}^{M}}{\arg \min} & \frac{1}{2Rn} \sum_{r=1}^R \sum_{i=1}^n  \left\{Y_{ij}^{(r)} -C^{(r)}_{j0}- \theta_{j0} \hat{\Psi}_0(t_i)- \sum_{k=1}^p  \theta^{\T}_{jk}\hat{\Psi}^{(r)}_k(t_i) \right\}^2\\
 & + \lambda_n \sum_{k=1}^p \left[ \frac{1}{Rn} \sum_{r=1}^R \sum_{i=1}^n  \big\{\theta_{jk}^{\T}\hat{\Psi}^{(r)}_{k}(t_i)\big\}^2 \right]^{1/2},
\end{aligned}
\end{equation}
where
\begin{equation*}
\hat{X}^{(r)}(\cdot;h) = \underset{{Z}(\cdot) \in \mathcal{X}(h)}{\arg \min} \sum_{i=1}^n \|Y_i^{(r)}- Z(t_i)\|_2^2, \ r=1,\ldots, R,
\end{equation*}
\begin{equation*}
\hat{\Psi}_0(t)=t; \ \hat{\Psi}^{(r)}_k(t) = \int_0^t \psi\big(\hat{X}^{(r)}_k(u;h)\big) \, du, \ k=1, \ldots, p.
\end{equation*}
In Sections~\ref{sec::theory}, \ref{sec::sparse}, and \ref{sec::linear}, we will assume that only one experiment is available, so that our proposal takes the form \eqref{eqn::us}. In Sections~\ref{sec::LV} and \ref{sec::RDA}, we will apply our proposal to data from multiple experiments using \eqref{eqn::multiple}.

\begin{remark}
 To facilitate the comparison of GRADE \eqref{eqn::us} with other methods, we introduce an intermediate variable,  
\begin{equation}
\label{eqn::intermeidiate}
\tilde{X}_j(t;h,\theta) \equiv  C_{j0}+\theta_{j0}t+ \sum_{k=1}^p   \theta^{\T}_{jk} \hat{\Psi}_k(t),
\end{equation} 
following from \eqref{eqn::ODE_integrated}. Plugging \eqref{eqn::intermeidiate} into the loss function in \eqref{eqn::us_objective} yields $\sum_{i=1}^n \big\{Y_{ij} - \tilde{X}_j(t_i;h, \theta)\big\}^2$. In the gold standard \eqref{eqn::gold}, the ODE system \eqref{eqn::ODE_model} is strictly satisfied due to the constraint in \eqref{eqn::gold_constraint}. In the two-step procedure \eqref{eqn::two_objective} and \eqref{eqn::np_objective}, the smoothing estimate $\hat{X}(\cdot;h)$ does not satisfy \eqref{eqn::ODE_model}. GRADE stands in between:   the initial estimate $\hat{X}(\cdot;h)$  in \eqref{eqn::us_initial} is solely based on the observations, while the intermediate estimate $\tilde{X}(\cdot;h,\theta)$ is calculated by plugging $\hat{X}(\cdot;h)$ into the additive ODE \eqref{eqn::ODE_basis_finite}. 
\end{remark}

\section{THEORETICAL PROPERTIES}\label{sec::theory}

In this section, we establish variable selection consistency of the GRADE estimator \eqref{eqn::us}. Technical proofs of the statements in this section are available in  Section~\ref{sec::proofs} in the supplementary material.  We use $s_j$ to denote the cardinality  of $S_j$, and set $s=\max_j\{s_j\}$. For ease of presentation, we let $S_j^0=\{0\} \cup S_j$, so that $\Psi_{S^0_j}(t)=\big(\Psi_0(t), \Psi_{S_j}^{\T}(t)\big)^{\T}= \big(t,\Psi_{S_j}^{\T}(t)\big)^{\T}$ is an $(s_j M +1)$-vector. 

The proposed method \eqref{eqn::us} differs from the standard sparse additive model \citep{ravikumar2009} in that the regressors $\hat{\Psi}_k(t)$ in \eqref{eqn::Psihat} are estimated from smoothing estimates $\hat{X}(\cdot;h)$ \eqref{eqn::us_initial} instead of the true trajectories $X^*$ in \eqref{eqn::noisy}. We use local polynomial regression to compute $\hat{X}(\cdot;h)$ in \eqref{eqn::us_initial} (see e.g., Equation 1.67 of \citealp{tsybakov2009} for details on parameterization).  To establish variable selection consistency, it is necessary to obtain a bound for the difference between $\hat{X}(\cdot;h)$ and $X^*$. This is addressed in Theorem~\ref{thm::localP}. Using the bound in Theorem~\ref{thm::localP}, we then establish variable selection consistency of the estimator in \eqref{eqn::us} for high-dimensional ODEs in Theorem~\ref{thm::main}.

In this study, we assume that the measurement errors in \eqref{eqn::noisy} are normally distributed. Generalizations to bounded or sub-Gaussian errors are straightforward. 

\begin{assumption}\label{asmp::errors}
The measurement errors in \eqref{eqn::noisy} are independent, and $\epsilon_{ij} \sim N(0,\sigma^2), i=1,\ldots,n, j=1,\ldots, p$. 
\end{assumption}

We also require the true trajectories $X_j^*$ in \eqref{eqn::noisy} to be smooth.  

\begin{assumption}\label{asmp::smoothness}
Assume that the solutions $X_j^*, 1\leq j \leq p,$ belong to a H\"{o}lder class $\Sigma(\beta_1,L_1)$, where $\beta_1 \geq 3$.
\end{assumption}

In addition, we need some regularity assumptions to hold for the smoothing estimation \eqref{eqn::us_initial}. 
These assumptions are common and not crucial to this study, and are hence deferred to Section~\ref{sec::proof_local} in the supplementary material (or see Section~1.6.1 in \citealp{tsybakov2009}). We arrive at the following concentration inequality for $\vertiii{\hat{X}-X^*}$.

\begin{theorem}\label{thm::localP}
Suppose that Assumptions~\ref{asmp::errors}--\ref{asmp::smoothness} and \ref{asmp::LP1}--\ref{asmp::LP3} in the supplementary material are satisfied.
Let $\hat{X}_j$ in \eqref{eqn::us_initial} be the local polynomial regression estimator of order $\ell=\lfloor \beta_1 \rfloor$ with bandwidth 
\begin{equation}
h_n\propto n^{(\alpha-1)/(2\beta_1+1)}
\end{equation}
for some positive $\alpha<1$. Then, for each $j=1, \ldots, p$, 
\begin{equation}
\label{eqn::Xhat_single}
 \vertiii{\hat{X}_j- X_j^*}^2 \leq C_2 n^{\frac{2\beta_1 }{2\beta_1+1}(\alpha-1)} 
\end{equation}
holds with probability at least  $1-2\exp\big\{- n^{\alpha}/(2C_3\sigma^2)\big\}$, for some constants $C_2$ and $C_3$.
\end{theorem}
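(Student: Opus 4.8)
The estimator $\hat X_j$ in \eqref{eqn::us_initial} is a linear smoother, $\hat X_j(t)=\sum_{i=1}^n W_{ni}(t)\,Y_{ij}$, with weights $W_{ni}(t)$ determined by the design $\{t_i\}$, the kernel, and $h_n$. The plan is to begin from the classical properties of local polynomial weights of order $\ell=\lfloor\beta_1\rfloor$ (see Section~1.6.1 of \citealp{tsybakov2009}), which under the regularity assumptions hold uniformly over $t\in[0,1]$ for $n$ large: (i) the weights reproduce polynomials of degree $\le\ell$; (ii) $W_{ni}(t)=0$ whenever $|t_i-t|>h_n$; (iii) $\sum_i|W_{ni}(t)|\le C$; and (iv) $\sum_i W_{ni}^2(t)\le C/(nh_n)$. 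Properties (iii)--(iv) are where the deferred regularity assumptions enter, ensuring the local Gram matrix has smallest eigenvalue bounded away from zero.

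First I would carry out the bias--variance split $\hat X_j(t)-X_j^*(t)=b_n(t)+v_n(t)$, where $b_n(t)=\sum_i W_{ni}(t)X_j^*(t_i)-X_j^*(t)$ is deterministic and $v_n(t)=\sum_i W_{ni}(t)\epsilon_{ij}$. For $b_n$, expand $X_j^*$ to order $\ell$ about $t$, use polynomial reproduction (i) to annihilate the polynomial part, the support property (ii) to localise, and the H\"older bound on the $\ell$th derivative of $X_j^*\in\Sigma(\beta_1,L_1)$ (Assumption~\ref{asmp::smoothness}) on the remainder; this gives $\sup_{t}|b_n(t)|\le C h_n^{\beta_1}$, hence $\vertiii{b_n}^2\le C h_n^{2\beta_1}\asymp n^{\frac{2\beta_1}{2\beta_1+1}(\alpha-1)}$, already of the target order and with no probabilistic cost.

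The probability statement comes from $v_n$. Since $v_n(\cdot)$ is linear in $\epsilon_j=(\epsilon_{1j},\dots,\epsilon_{nj})^{\T}\sim N(0,\sigma^2 I_n)$ by Assumption~\ref{asmp::errors}, one can write $\vertiii{v_n}^2=\epsilon_j^{\T}A\,\epsilon_j$ with $A=\int_0^1 W_n(t)W_n(t)^{\T}\,dt\succeq 0$, where $W_n(t)=(W_{n1}(t),\dots,W_{nn}(t))^{\T}$. Being an average of rank-one positive semidefinite matrices, $A$ satisfies $\Lambda_{\max}(A)\le\operatorname{tr}(A)=\int_0^1\sum_i W_{ni}^2(t)\,dt\le C/(nh_n)$ by (iv), and hence $\|A\|_F^2\le\Lambda_{\max}(A)\operatorname{tr}(A)\le(C/(nh_n))^2$. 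Now $\epsilon_j^{\T}A\epsilon_j/\sigma^2$ is a weighted sum of independent $\chi^2_1$ variables with weights the eigenvalues of $A$, so a standard tail bound for Gaussian quadratic forms gives, for every $x>0$,
\begin{equation*}
\mathbb{P}\!\left(\vertiii{v_n}^2\ge\sigma^2\big[\operatorname{tr}(A)+2\|A\|_F\sqrt{x}+2\Lambda_{\max}(A)\,x\big]\right)\le e^{-x}.
\end{equation*}
Choosing $x=n^{\alpha}/(2C_3\sigma^2)$ and substituting the bounds above together with $h_n\propto n^{(\alpha-1)/(2\beta_1+1)}$, a short exponent computation shows that $\sigma^2\Lambda_{\max}(A)\,x\asymp n^{\frac{2\beta_1}{2\beta_1+1}(\alpha-1)}$ while $\sigma^2\operatorname{tr}(A)$ and $\sigma^2\|A\|_F\sqrt{x}$ are of strictly smaller order in $n$ (by factors $n^{-\alpha}$ and $n^{-\alpha/2}$, respectively). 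Hence $\vertiii{v_n}^2\le C\,n^{\frac{2\beta_1}{2\beta_1+1}(\alpha-1)}$ with probability at least $1-\exp\{-n^{\alpha}/(2C_3\sigma^2)\}$; combining with the deterministic bias bound via $\vertiii{\hat X_j-X_j^*}^2\le 2\vertiii{b_n}^2+2\vertiii{v_n}^2$ and enlarging $C_2$ yields \eqref{eqn::Xhat_single}.

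I expect the main obstacle to be the uniform bias bound of the second step: the polynomial-reproduction and support properties have to be combined carefully with the Taylor remainder and the H\"older modulus to extract the exact rate $h_n^{\beta_1}$, and one must check that the regularity assumptions deliver properties (iii)--(iv) uniformly in $t$, including near the endpoints of $[0,1]$, for all $n$ beyond a threshold that can be absorbed into the constants. The remaining steps are essentially bookkeeping: $\Lambda_{\max}(A)\le\operatorname{tr}(A)$ is immediate for the positive semidefinite $A$, the quadratic-form tail bound is classical, and only the exponent arithmetic needs to be verified so that the deviation term matches the dominant squared-bias rate while the failure probability decays like $\exp(-c n^{\alpha}/\sigma^2)$.
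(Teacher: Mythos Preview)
Your proposal is correct and follows the same overall architecture as the paper: the same bias--variance split $\hat X_j - X_j^* = b_n + v_n$ via the local polynomial weights, and the same bias control $\sup_t|b_n(t)|\le C h_n^{\beta_1}$ from polynomial reproduction plus the H\"older remainder (the paper cites this as Proposition~1.13 in \citealp{tsybakov2009}).

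The genuine difference is in how the stochastic term $\vertiii{v_n}^2$ is handled. The paper does not form the Gram matrix $A=\int_0^1 W_n(t)W_n(t)^{\T}\,dt$; instead, for each fixed $t$ it treats $g(\epsilon_j/\sigma,t,h)=\sum_i W_{ni}(t)\epsilon_{ij}$ as a Lipschitz function of the standard Gaussian vector $\epsilon_j/\sigma$, with Lipschitz constant $\sigma C_3(nh_n)^{-1/2}$ coming from the weight bounds, and applies the Gaussian Lipschitz concentration inequality (Theorem~5.6 in \citealp{boucheron2013}) to get a pointwise tail bound on $|g(\epsilon_j/\sigma,t,h)|$, which it then carries into the integral. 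Your route through the quadratic-form (Hanson--Wright/Laurent--Massart) inequality works directly on the integrated quantity $\vertiii{v_n}^2=\epsilon_j^{\T}A\epsilon_j$, using only $\operatorname{tr}(A)\le C/(nh_n)$ and $\Lambda_{\max}(A)\le\operatorname{tr}(A)$. This buys you a cleaner passage from concentration to the $L^2[0,1]$ bound: you never need to argue from a pointwise-in-$t$ event to a bound on $\int_0^1 g^2\,dt$, which is the one place where the paper's argument is somewhat informal. The paper's Lipschitz route, on the other hand, is slightly more elementary in that it avoids quoting a quadratic-form tail bound. Either way the exponent arithmetic and the final rate coincide.
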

The concentration inequality in Theorem~\ref{thm::localP} is derived using concentration bounds for Gaussian errors  \citep{boucheron2013}. Using Theorem~\ref{thm::localP}, we see that the bound \eqref{eqn::Xhat_single} holds uniformly for $j=1,\ldots,p$ with probability at least $ 1-2p\exp\big\{- n^{\alpha}/(2C_3\sigma^2)\big\}$. The bound in Theorem~\ref{thm::localP} thus holds uniformly for $j=1,\ldots, p$ with probability converging to unity if $p=o\big(\exp\big\{n^{\alpha}/(2C_3\sigma^2)\big\}\big)$.

For the methods outlined in \eqref{eqn::np} \citep{henderson2014,wu2014}, variable selection consistency depends on the convergence of $\vertiii{\hat{X}'-(X^{*})'}$ and $\vertiii{\hat{X}-X^*}$. In contrast, our method depends only on the convergence rate of $\vertiii{\hat{X}-X^*}$. It is known that the convergence of $\vertiii{\hat{X}'-(X^{*})'}$ is slower than that of $\vertiii{\hat{X}-X^*}$, see e.g. \cite{gugushvili2012}. As a result, the rate of convergence of $\hat{\theta}_{jk}$ from \eqref{eqn::np} is slower than that of our proposed method \eqref{eqn::us}.
 
In order to establish the main result, we need the following additional assumptions. Recall the definition of $\Psi_j(t)$ from \eqref{eqn::Psi}; for convenience, we suppress the dependence of $\Psi(t)$ on $t$ in what follows. 

\begin{assumption}\label{asmp::additivity}
For $j=1,\ldots,p$, $(X^*_j)'$ is an additive function of $X^*_k$, $k=1,\ldots, p$. In other words,
\begin{equation}
\big(X^*_j \big)'(t) = \theta^*_{j0}  + \sum_{k=1}^p f^*_{jk}\big(X^*_k(t)\big),  \quad \theta^*_{j0} \in \mathbb{R}, \ j=1,\ldots, p,
\end{equation}
where $ \int_0^1 f^*_{jk}\big(X^*_k(t)\big) dt =0$ for all $j,k$. 
Furthermore, the functions $f_{jk}^* \ (1 \leq j,k \leq p)$  belong to a Sobolev class $W(\beta_2,L_2)$ on a finite interval with $\beta_2 \geq 3$.
\end{assumption}

\begin{assumption}\label{asmp::coherence_pop}
The eigenvalues of  $\int_0^1 {\Psi}_{S_j^0} {\Psi}_{S_j^0}^{\T} \, dt$ are bounded from above by $C_{\max}$ and  bounded from below by a positive number $C_{\min}$, and for $k \notin S_j^0$, the eigenvalues of  $\int_0^1 {\Psi}_{k} {\Psi}_{k}^{\T} \, dt$ are bounded from below by $C_{\min}$. In other words,
\begin{equation}
0< C_{\min} \leq \Lambda_{\min} \left( \int_0^1 {\Psi}_{S_j^0} {\Psi}_{S_j^0}^{\T} \, dt\right) \leq \Lambda_{\max} \left( \int_0^1 {\Psi}_{S_j^0} {\Psi}_{S_j^0}^{\T} \, dt\right) \leq C_{\max}, 
\end{equation}
and 
\begin{equation}
C_{\min} \leq \Lambda_{\min} \left( \int_0^1 {\Psi}_{k} {\Psi}_{k}^{\T} \, dt\right), \quad \text{for} \quad k \notin S_j^0.
\end{equation}
\end{assumption}

\begin{assumption}\label{asmp::irrepresentability_pop}
Assume that 
\begin{equation}
\max_{k \notin S_j^{0}} \left\|  \left( \int_0^1 {\Psi}_{k} {\Psi}_{S_j^0}^{\T} \, dt \right) \left( \int_0^1 {\Psi}_{S_j^0} {\Psi}_{S_j^0}^{\T} \, dt \right)^{-1}   \right\|_2 \leq  \xi.
\end{equation}
\end{assumption}

The first part of Assumption~\ref{asmp::coherence_pop} ensures identifiability among the $s_j+1$ elements in the set $\{t, X^*_{S_j}\}$, and the second part ensures that ${\Psi}_{k}$ is non-degenerate for $k \notin S_j^0$. 
Assumption~\ref{asmp::irrepresentability_pop} restricts the association between the elements in the set $\{t, X^*_{S_j}\}$ and the elements in the set $X^*_{S_j^c}$. 
Note that in order for the parameters in an additive model such as \eqref{eqn::ODE_basis_finite} to be identifiable, there must be  no concurvity among the variables \citep{buja1989}. This is guaranteed by Assumptions~\ref{asmp::coherence_pop} and \ref{asmp::irrepresentability_pop}, which appear often in the literature of lasso regression \citep{meinshausen2006,  zhao2006, ravikumar2009,  wainwright2009,  lee2013}. We refer the readers to \cite{miao2011} for a detailed discussion of the identifiability of the parameters in an ODE model. 

The next assumption characterizes the relationships between the quantities in Assumptions~\ref{asmp::coherence_pop} and \ref{asmp::irrepresentability_pop} and the sparsity tuning parameter $\lambda_n$ in \eqref{eqn::us_objective}. Similar assumptions have been made in lasso-type regression \citep{meinshausen2006,  zhao2006, ravikumar2009,  wainwright2009, lee2013}.  

\begin{assumption}\label{asmp::thetamin}
 Assume that
\begin{equation*}
f_{\min}>  \lambda_n \frac{4\sqrt{2sC_{\max}}}{C_{\min}} \quad  \text{and} \quad 
\xi <  \frac{1}{4}\sqrt{\frac{C_{\min}}{s C_{\max}}},
\end{equation*}
where $f_{\min} \equiv \min_{k\in S_j} \left\{\int_0^1 \big[f_{jk}^*(X_k^*(t))\big]^2 dt \right\}^{1/2}$ is the minimum regulatory effect.
\end{assumption}
 
Furthermore, we impose some regularity conditions on the bases $\psi(\cdot)$; these are deferred to Assumption~\ref{asmp::psibound} in the supplementary material. 

We arrive at the following theorem. 

\begin{theorem}\label{thm::main}
Suppose that Assumptions~\ref{asmp::errors}--\ref{asmp::thetamin} and \ref{asmp::LP1}--\ref{asmp::psibound} in the supplementary material hold, and let 
\begin{equation*}
{h}_n \propto \  n^{(\alpha-1)/(2\beta_1+1)}, \quad
M \propto \    n^{\frac{2\beta_1(1-\alpha)}{(2\beta_1+1)(2\beta_2+1)}}, \quad
\lambda_n \propto \  n^{-\frac{\beta_1(2\beta_2-1) (1-\alpha)}{(2\beta_1+1)(2\beta_2+1)}+2\gamma},  
\end{equation*}
where $0<\alpha<1$, $0<\gamma<H_1(\beta_1,\beta_2,\alpha)$, and $H_1(\beta_1,\beta_2,\alpha)$ is a constant that depends only on $\beta_1, \beta_2$ and $\alpha$. Then as $n$ increases,  
the proposed procedure \eqref{eqn::us} correctly recovers the true graph, i.e., $\hat{S}_j=S_j$ for all $j=1,\ldots, p$, with probability converging to $1$, if $s=O(n^{\gamma})$ and $pn\exp(-C_4 n^{\alpha}/\sigma^2)=o(1)$ for some constant $C_4$. 
\end{theorem}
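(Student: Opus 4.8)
The plan is to use the primal--dual witness (oracle) construction for the group lasso, adapted to the setting where the design is estimated from smoothing fits. The first step is to rewrite the integrated model \eqref{eqn::ODE_integrated}, evaluated at the design points $t_1 < \dots < t_n$, as a group-lasso regression
$$Y_{ij} = C_{j0}^* + \theta_{j0}^* t_i + \sum_{k\in S_j}\hat\Psi_k(t_i)^\T \theta_{jk}^* + W_{ij},$$
where the free intercept absorbs $X_j^*(0)$ and the effective noise $W_{ij} = \epsilon_{ij} + e_{ij} + r_{ij}$ collects the measurement error $\epsilon_{ij}$, the smoothing perturbation $e_{ij} = \sum_{k\in S_j}\{\Psi_k^*(t_i) - \hat\Psi_k(t_i)\}^\T\theta_{jk}^*$, and the finite-basis truncation bias $r_{ij} = \sum_{k\in S_j}\int_0^{t_i}\delta_{jk}(X_k^*(u))\,du$. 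By standard approximation theory for the Sobolev class in Assumption~\ref{asmp::additivity}, $\|\delta_{jk}\|_\infty \lesssim M^{-\beta_2}$, so $|r_{ij}| \lesssim s M^{-\beta_2}$; and by Theorem~\ref{thm::localP} together with the boundedness/Lipschitz properties of $\psi$ (Assumption~\ref{asmp::psibound} in the supplement), $e_{ij}$ is controlled by $\vertiii{\hat X_k - X_k^*}$ up to $M$-dependent factors, on the high-probability event of Theorem~\ref{thm::localP}.

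The second step is to transfer the population coherence conditions (Assumptions~\ref{asmp::coherence_pop}--\ref{asmp::irrepresentability_pop}) to empirical analogues for the estimated design. Writing $\hat\Sigma_{S_j^0} = \tfrac1n\sum_i\hat\Psi_{S_j^0}(t_i)\hat\Psi_{S_j^0}(t_i)^\T$ and similarly for the off-support cross-blocks, I would bound $\vertiii{\hat\Sigma_{S_j^0} - \int_0^1\Psi_{S_j^0}\Psi_{S_j^0}^\T\,dt}$ (operator norm) by splitting into (i) a Riemann-sum discretization error, controlled using the smoothness in Assumption~\ref{asmp::smoothness} and the spacing of the $t_i$, and (ii) a design-estimation error from $\hat\Psi - \Psi^*$, controlled via Theorem~\ref{thm::localP} and Assumption~\ref{asmp::psibound}, tracking the factors of $M$ and $s$. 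For the prescribed $h_n$ and $M$ this perturbation is $o(C_{\min})$, so the empirical Gram matrix on $S_j^0$ has eigenvalues in $[C_{\min}/2,\,2C_{\max}]$ and an empirical irrepresentability bound of order $2\xi$ holds, after a union bound over the $p$ coordinates.

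The third (primal) step is to solve the group-lasso problem restricted to $S_j^0$ and show its solution $\tilde\theta_j$ is close to $\theta_j^*$: using the empirical lower eigenvalue bound and a bound on $\tfrac1n\hat\Psi_{S_j^0}^\T W$ (Gaussian concentration for the $\epsilon$-part, deterministic bounds for $e$ and $r$), one obtains $\max_{k\in S_j}\|\tilde\theta_{jk} - \theta_{jk}^*\|_2$ of order $\lambda_n\sqrt{s}/C_{\min}$ up to the bias terms. Combined with Assumption~\ref{asmp::thetamin} ($f_{\min}$ large relative to $\lambda_n\sqrt{sC_{\max}}/C_{\min}$), this keeps $[\tfrac1n\sum_i(\tilde\theta_{jk}^\T\hat\Psi_k(t_i))^2]^{1/2}$ bounded away from zero for every $k\in S_j$, so $S_j \subseteq \hat S_j$. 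The fourth (dual) step is to verify that extending $\tilde\theta_j$ by zeros on $S_j^c$ satisfies the group-lasso KKT conditions strictly: for $k\notin S_j^0$, the relevant subgradient norm is bounded, via the empirical irrepresentability bound and the $f_{\min}$-versus-$\xi$ inequality in Assumption~\ref{asmp::thetamin}, by something strictly below $1$, provided the noise cross-term $\tfrac1{n\lambda_n}\|\hat\Psi_k^\T W\|_2$ is small --- which holds for the prescribed $\lambda_n$ and feeds the condition $pn\exp(-C_4 n^\alpha/\sigma^2)=o(1)$ into the final union bound. This gives $\hat S_j \subseteq S_j$, and combining the two inclusions over all $j$ yields the claim.

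The main obstacle is the tight bookkeeping in the second and third steps: the sup-norms and Lipschitz constants of $\psi$ and of the integrated basis $\Psi$ grow with $M$, the number of relevant blocks scales as $s=O(n^\gamma)$, and the smoothing error from Theorem~\ref{thm::localP} must be propagated through all of these while still leaving $\lambda_n$ large enough for irrepresentability yet small enough for the $f_{\min}$ condition. The exact exponents prescribed for $h_n$, $M$, and $\lambda_n$ are precisely what make the competing error terms --- smoothing bias, truncation bias $\sim M^{-\beta_2}$, variance inflation $\sim M/n$, and the $\lambda_n$-level terms --- balance, and the constraint $\gamma < H_1(\beta_1,\beta_2,\alpha)$ is exactly the condition under which a valid choice of all three exists; verifying this feasibility, rather than any single concentration inequality, is the crux.
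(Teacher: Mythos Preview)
Your proposal is correct and follows essentially the same route as the paper: a primal--dual witness argument for the standardized group lasso (the paper isolates this as a separate errors-in-variables theorem), combined with perturbation lemmas that transfer Assumptions~\ref{asmp::coherence_pop}--\ref{asmp::irrepresentability_pop} to their empirical/estimated-design analogues and a deviation bound decomposing the residual exactly into your $\epsilon+e+r$ pieces. Two small adjustments to align with the paper's assumptions: the truncation error $\delta_{jk}$ is controlled in $L^2$ (Assumption~\ref{asmp::psibound}), not in sup-norm, so the bound on $r_{ij}$ goes through Cauchy--Schwarz rather than a pointwise estimate; and under Assumption~\ref{asmp::psibound} the constants $B,D$ bounding $|\psi_m|,|\psi_m'|$ are uniform in $m$, so the basis sup-norms do not grow with $M$---the $M$-factors in the bookkeeping enter only through the block dimensions, which simplifies the rate verification you flag as the crux.
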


Because the regressors $\hat{\Psi}$ are estimated, establishing variable selection consistency requires extra attention. To prove Theorem~\ref{thm::main}, we must first establish variable selection consistency of group lasso regression with errors in variables. This  generalizes the recent work on errors in variables for lasso regression \citep{loh2012}. Theorem~\ref{thm::main} ensures that the proposed method is able to recover the true graph exactly, given sufficiently dense observations in a finite time interval if the graph is sparse. The number of variables in the system can grow exponentially fast with respect to $n$, which means that the result holds for the ``large $p$, small $n$" scenario. 

Theorem~\ref{thm::main} does not provide us with practical guidance for selecting the bandwidth $h_n$ for the local polynomial regression estimator $\hat{X}_j$. The next result mirrors   Theorem~\ref{thm::main} for the bandwidths selected by cross-validation or \textsc{gcv}, which  converge to $h_n \propto n^{-1/(2\beta_1+1)}$ asymptotically (see \citealp{xia2002,tsybakov2009} for details). 

\begin{proposition}\label{prop::main}
Suppose that Assumptions~\ref{asmp::errors}--\ref{asmp::thetamin} and \ref{asmp::LP1}--\ref{asmp::psibound} in the supplementary material hold, and let 
\begin{equation*}
{h}_n \propto \  n^{-1/(2\beta_1+1)}, \quad
M \propto \   n^{\frac{1}{2\beta_2+1}(\frac{2\beta_1}{2\beta_1+1}-\alpha)}, \quad \text{and} \quad 
\lambda_n \propto \  n^{-\frac{2\beta_2-1}{4\beta_2+2}(\frac{2\beta_1}{2\beta_1+1}-\alpha)+2\gamma},  
\end{equation*}
where $0<\alpha<\frac{2\beta_1}{2\beta_1+1}$, $0<\gamma<H_2(\beta_1,\beta_2,\alpha)$, and $H_2(\beta_1,\beta_2,\alpha)$ is a constant that depends only on $\beta_1, \beta_2$ and $\alpha$. Then as $n$ increases, the proposed procedure \eqref{eqn::us} correctly recovers the true graph, i.e., $\hat{S}_j=S_j$ for all $j=1,\ldots, p$, with probability converging to $1$, if  $s=O(n^{\gamma})$ and $pn\exp(-C_4 n^{\alpha}/\sigma^2)=o(1)$ for some constant $C_4$. 
\end{proposition}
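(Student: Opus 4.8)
The plan is to deduce Proposition~\ref{prop::main} from the proof of Theorem~\ref{thm::main}, by replacing the concentration bound of Theorem~\ref{thm::localP}---which was stated for the oversmoothing bandwidth $h_n\propto n^{(\alpha-1)/(2\beta_1+1)}$---with its counterpart for the \textsc{gcv}/cross-validation bandwidth $h_n\propto n^{-1/(2\beta_1+1)}$, and then checking that the remainder of the argument is unaffected once the relevant rate exponent is adjusted.

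The first step is to prove the following analogue of Theorem~\ref{thm::localP}: for every fixed $\alpha\in\big(0,\tfrac{2\beta_1}{2\beta_1+1}\big)$ and each $j=1,\ldots,p$,
\begin{equation*}
\vertiii{\hat X_j-X_j^*}^2 \ \le\ C\, n^{-\left(\frac{2\beta_1}{2\beta_1+1}-\alpha\right)}
\end{equation*}
holds with probability at least $1-2\exp(-cn^\alpha/\sigma^2)$, where $\hat X_j$ is the local polynomial estimator of order $\ell=\lfloor\beta_1\rfloor$ with $h_n\propto n^{-1/(2\beta_1+1)}$. I would use the bias--variance decomposition in terms of the equivalent kernel weights of the smoother: under Assumption~\ref{asmp::smoothness} and the regularity Assumptions~\ref{asmp::LP1}--\ref{asmp::LP3}, the deterministic bias satisfies $\|\mathrm{bias}\|^2=O(h_n^{2\beta_1})=O(n^{-2\beta_1/(2\beta_1+1)})$, while the stochastic part is a Gaussian quadratic form $\epsilon_j^{\T}W_n\epsilon_j$ in the measurement errors, where $W_n$ (built from the equivalent kernel weights) obeys $\operatorname{tr}(W_n)=O\big((nh_n)^{-1}\big)$, $\|W_n\|_2=O(n^{-1})$ and $\|W_n\|_F^2=O\big(n^{-2}h_n^{-1}\big)$; in particular the mean of the stochastic part is $O\big(\sigma^2(nh_n)^{-1}\big)=O\big(\sigma^2 n^{-2\beta_1/(2\beta_1+1)}\big)$. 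A Hanson--Wright / Gaussian-chaos tail bound (see \citealp{boucheron2013}) with deviation of order $n^{-(2\beta_1/(2\beta_1+1)-\alpha)}$, i.e.\ a factor $n^\alpha$ above the mean, then gives the claimed probability; the restriction $\alpha<\tfrac{2\beta_1}{2\beta_1+1}$ is exactly what keeps this inflated rate $o(1)$. Since the \textsc{gcv}/cross-validation bandwidth satisfies $\hat h/h_n\to1$ in probability \citep{xia2002,tsybakov2009}, the bound continues to hold (on an event of probability tending to one) when $\hat X_j$ is computed with the data-driven bandwidth, by restricting to $\{c_1 h_n\le\hat h\le c_2 h_n\}$ and using the stability of all the above rates under constant-factor changes of the bandwidth. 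A union bound gives the inequality uniformly over $j$ with probability at least $1-2p\exp(-cn^\alpha/\sigma^2)$.

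The second step is the observation that the bandwidth enters the proof of Theorem~\ref{thm::main} only through \eqref{eqn::Xhat_single}, i.e.\ through the exponent $\rho_1:=\frac{2\beta_1}{2\beta_1+1}(1-\alpha)$ governing $\vertiii{\hat X_j-X_j^*}^2$, and that the tuning parameters in Proposition~\ref{prop::main} are precisely those of Theorem~\ref{thm::main} with $\rho_1$ replaced by $\rho_2:=\frac{2\beta_1}{2\beta_1+1}-\alpha$: in both statements $M\propto n^{\rho/(2\beta_2+1)}$ and $\lambda_n\propto n^{-\frac{2\beta_2-1}{2(2\beta_2+1)}\rho+2\gamma}$, with $\rho=\rho_1$ or $\rho=\rho_2$. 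Substituting the new concentration bound into the rest of the Theorem~\ref{thm::main} argument---the control of $\vertiii{\hat\Psi_k-\Psi_k}$ in terms of $\vertiii{\hat X_k-X_k^*}$ and $M$ via Assumption~\ref{asmp::psibound}, the basis-truncation error $\int_0^t\delta_{jk}$ bounded through the Sobolev smoothness in Assumption~\ref{asmp::additivity}, the errors-in-variables group lasso / primal--dual witness analysis (generalizing \citealp{loh2012}) under Assumptions~\ref{asmp::coherence_pop}--\ref{asmp::thetamin}, and the Gaussian tail bounds on the noise cross-terms---carries through verbatim with $\rho_1$ replaced by $\rho_2$. The range $0<\gamma<H_2(\beta_1,\beta_2,\alpha)$ arises exactly as $H_1$ did: $H_2$ is the largest $\gamma$ for which the exponent of $\lambda_n$ stays negative (so $\lambda_n\to0$) while $\lambda_n$ still dominates every perturbation, truncation and noise term in the witness construction. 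This yields $\hat S_j=S_j$ for all $j$ with probability at least $1-O\big(pn\exp(-C_4 n^\alpha/\sigma^2)\big)\to1$, under $s=O(n^\gamma)$ and $pn\exp(-C_4 n^\alpha/\sigma^2)=o(1)$.

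I expect the main obstacle to be the first step: making the quadratic-form concentration quantitatively sharp requires the smoother-matrix estimates $\|W_n\|_2=O(n^{-1})$, $\operatorname{tr}(W_n)\asymp(nh_n)^{-1}$ and $\|W_n\|_F^2=O(n^{-2}h_n^{-1})$ to hold uniformly under Assumptions~\ref{asmp::LP1}--\ref{asmp::LP3} (and along the relevant window of bandwidths), and handling cleanly the randomness of the \textsc{gcv}/cross-validation bandwidth. Once these are established, the second step is essentially bookkeeping, since it only re-instantiates the machinery already built for Theorem~\ref{thm::main} with the single substitution $\rho_1\to\rho_2$.
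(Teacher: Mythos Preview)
Your overall strategy matches the paper's exactly: first establish an analogue of Theorem~\ref{thm::localP} for the bandwidth $h_n\propto n^{-1/(2\beta_1+1)}$ (this is the paper's Proposition~\ref{prop::localP}), and then note that the proof of Theorem~\ref{thm::main} depends on the smoothing step only through the exponent governing $\vertiii{\hat X_j-X_j^*}^2$, so the entire argument carries over with $\rho_1=\tfrac{2\beta_1}{2\beta_1+1}(1-\alpha)$ replaced by $\rho_2=\tfrac{2\beta_1}{2\beta_1+1}-\alpha$. Your Step~2 is precisely what the paper does.

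The one genuine difference is in Step~1. You propose to control the stochastic term $\int_0^1 g^2(\epsilon_j/\sigma,u,h)\,du$ as a Gaussian quadratic form $\epsilon_j^{\T}W_n\epsilon_j$ via a Hanson--Wright bound, together with operator/trace/Frobenius estimates on $W_n$. The paper takes a shorter route: it simply reuses the bias--variance bound already derived in the proof of Theorem~\ref{thm::localP},
\[
\vertiii{\hat X_j-X_j^*}^2 \ \le\ 2q_1^2 h^{2\beta_1}+2n^{\alpha-1}h^{-1},
\]
which was obtained from the Lipschitz concentration (Theorem~5.6 of \citealp{boucheron2013}) applied to the \emph{linear} functional $a\mapsto\sigma\sum_i a_i W_{ni}(u;h)$, and then substitutes $h_n\propto n^{-1/(2\beta_1+1)}$ to get $\vertiii{\hat X_j-X_j^*}^2\le C_2 n^{\alpha-2\beta_1/(2\beta_1+1)}$. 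Your quadratic-form approach is valid and yields the same rate, but the paper's argument avoids the additional smoother-matrix estimates you flag as the main obstacle. Also, the paper treats $h_n\propto n^{-1/(2\beta_1+1)}$ as a deterministic choice (justified only by the remark that \textsc{gcv}/CV bandwidths converge to this rate) and does not carry out the extra argument you sketch for transferring the bound to a data-driven $\hat h$; so that part of your proposal goes beyond what the paper proves.
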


We note that selecting the values of $M$ and $\lambda_n$ that yield the rate specified in Proposition~\ref{prop::main} is challenging in practice.  
The rate of convergence of the sparsity tuning parameter $\lambda_n$ is slower in Proposition~\ref{prop::main} compared to  Theorem~\ref{thm::main}. This results in an increase in  the minimum regulatory effect $f_{\min}$ because of the relation between $f_{\min}$ and $\lambda_n$ in Assumption~\ref{asmp::thetamin}. 

\section{NUMERICAL EXPERIMENTS}\label{sec::simulation}

We study the empirical performance of our proposal in three different scenarios in the following subsections.
In what follows, given a set of initial conditions and a system of ODEs, numerical solutions of the ODEs are obtained using the Euler method with step size $0.001$.
Observations are drawn from the solutions at an evenly-spaced time grid $\{iT/n; i=1,\ldots, n\}$ with independent $N(0,1)$ measurement errors, unless specified otherwise. 
To facilitate the comparison of GRADE with other methods, we fit the smoothing estimates $\hat{X}$ in \eqref{eqn::us_initial} using smoothing splines with bandwidth chosen by \textsc{gcv}. 
We use cubic splines with two internal knots as the basis functions in \eqref{eqn::Psihat} in Sections~\ref{sec::sparse} and \ref{sec::LV}. Linear basis functions are used in Section~\ref{sec::linear}. 
The integral $\hat{\Psi}_k(t)=\int_0^t \psi\big(\hat{X}_k(u;{h})\big) \, du$ in \eqref{eqn::Psihat} is calculated numerically with step size $0.01$.

\subsection{Variable selection in additive ODEs}\label{sec::sparse}

In this simulation, we compare GRADE with NeRDS \citep{henderson2014} and SA-ODE \citep{wu2014} described in \eqref{eqn::np}. 
We consider the following system of additive ODEs, for $k=1,\ldots, 5$:  
\begin{equation}
\small
\begin{cases}
& X_{2k-1}'(t)= \theta_{2k-1,0}+\psi(X_{2k-1}(t))^{\T}\theta_{2k-1,2k-1} + \psi(X_{2k}(t))^{\T}\theta_{2k-1,2k}\\
& X_{2k}'(t)= \theta_{2k,0}+\psi(X_{2k-1}(t))^{\T}\theta_{2k,2k-1} + \psi(X_{2k}(t))^{\T}\theta_{2k,2k} 
\end{cases}, t \in [0,20],
\label{eqn::FHN}
\end{equation}
where $\psi(x)=(x,x^2,x^3)^{\T}$ is the cubic monomial basis.  
The parameters and initial conditions are chosen so that the solution trajectories are identifiable under an additive model \citep{buja1989}. 
Detailed specification of \eqref{eqn::FHN} can be found in Section~\ref{sec::data_appendix} of the supplementary material.

After generating data according to \eqref{eqn::FHN} and introducing noise, we apply GRADE, NeRDS, and SA-ODE to recover the directed graph encoded in \eqref{eqn::FHN}.  
Both NeRDS and SA-ODE are implemented using code provided by the authors. NeRDS and SA-ODE use smoothing splines to estimate $\hat{X}$ and $\hat{X}'$ in \eqref{eqn::np_constraint}, and cubic splines with two internal knots as the basis $\psi$ in \eqref{eqn::np_objective}. As mentioned briefly in Section~\ref{sec::method}, NeRDS applies an additional smoothing penalty which amounts to an $\ell_2$ penalty on $\theta_{jk}$ in \eqref{eqn::np_objective}, controlled by a parameter selected using \textsc{gcv} \citep{henderson2014}. We apply GRADE using the same smoothing estimates and basis functions as NeRDS and SA-ODE.
To facilitate a direct comparison to NeRDS, we apply GRADE both with and without an additional $\ell_2$-type penalty on the $\theta_{jk}$'s in \eqref{eqn::us_objective}. 
We apply all methods for a range of values of the sparsity-inducing tuning parameter (e.g., $\lambda_n$ in \eqref{eqn::us_objective}), in order to yield a  recovery curve of varying sparsity.

We summarize the simulation results in Figure~\ref{fig::comparison}, where the numbers of true edges selected are displayed against the total numbers of selected edges over a range of sparsity tuning parameters.   
We see that GRADE outperforms the other two methods, which corroborates our theoretical findings in Section~\ref{sec::theory} that our proposed method is more efficient than methods such as NeRDS and SA-ODE which involve derivative estimation (see e.g., comments below Theorem 1).

\begin{figure}[ht]
\centering
   \centering
   \subfigure{\includegraphics[scale=1]{./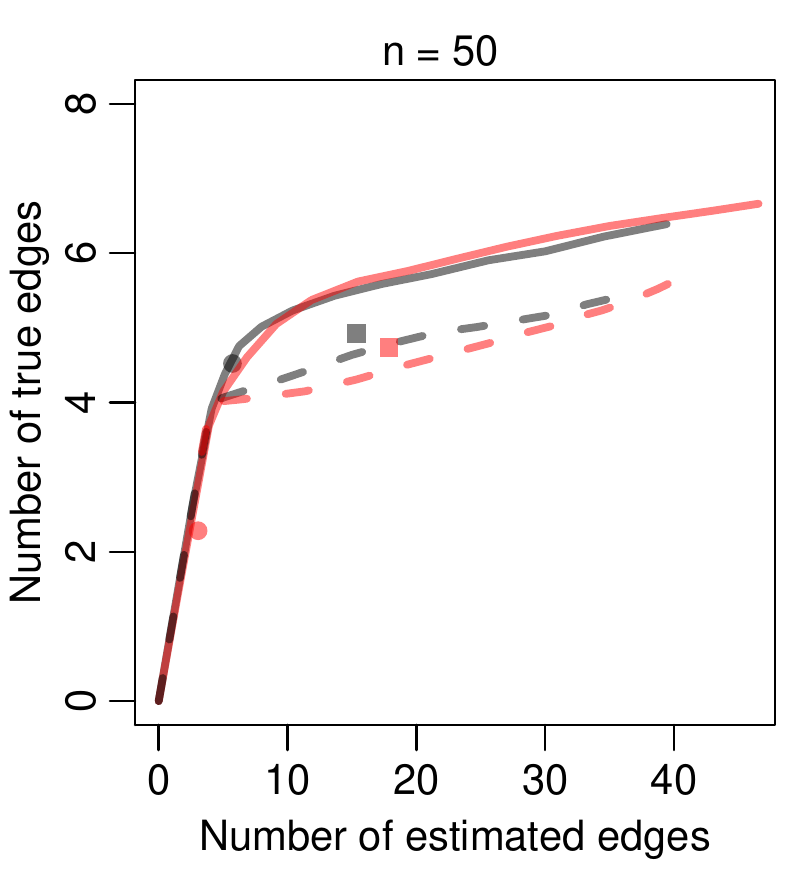}}
   \subfigure{\includegraphics[scale=1]{./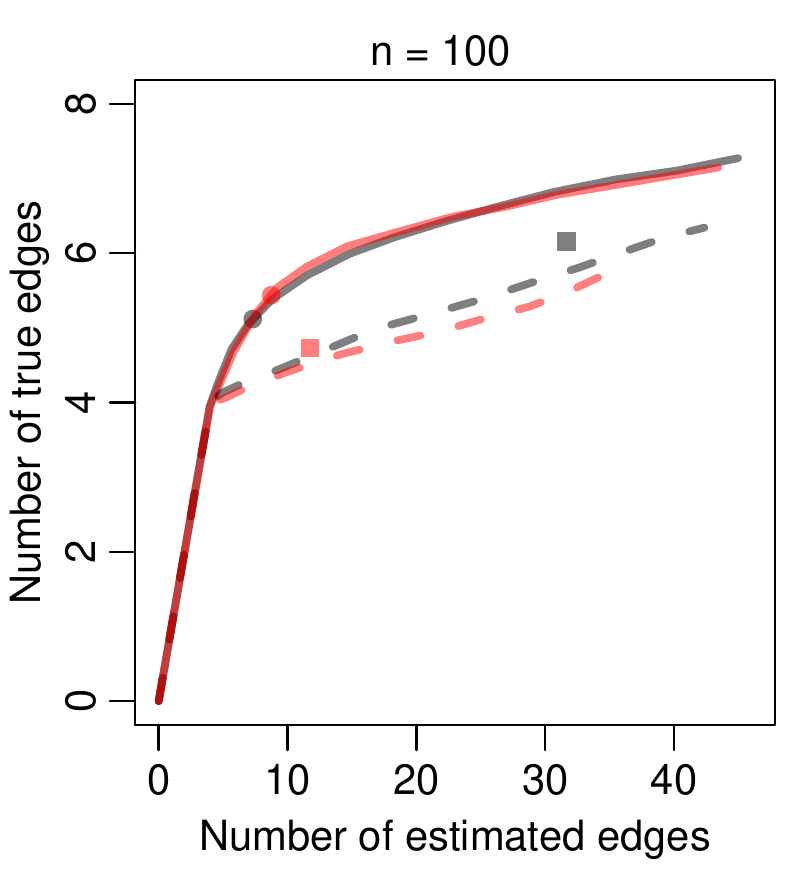}}
\caption{Performance of network recovery methods on the system of additive ODEs in \eqref{eqn::FHN}, averaged over 400 simulations.  The four curves represent SA-ODE (\protect\includegraphics[height=0.5em]{./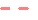}), NeRDS (\protect\includegraphics[height=0.5em]{./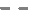}),  and GRADE without (\protect\includegraphics[height=0.5em]{./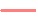}) and with (\protect\includegraphics[height=0.5em]{./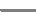}) the additional smoothing penalty in \eqref{eqn::us_objective} used by NeRDS.  Each point on the curves corresponds to average performance for a given sparsity tuning parameter $\lambda_n$  in \eqref{eqn::np_objective} or  \eqref{eqn::us_objective}. The symbols indicate the sparsity tuning parameter $\lambda_n$ selected using \textsc{bic} (SA-ODE, \protect\includegraphics[height=0.5em]{./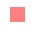}, and GRADE, \protect\includegraphics[height=0.5em]{./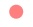} and \protect\includegraphics[height=0.5em]{./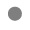}) or \textsc{gcv} (NeRDS, \protect\includegraphics[height=0.5em]{./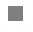}).} 
\label{fig::comparison}
\end{figure}

\subsection{Variable selection in linear ODEs}\label{sec::linear}

In this simulation, we compare GRADE to two recent proposals by \cite{brunel2014} and \cite{hall2014}.
Recall from Section~\ref{sec::two-step} that \cite{brunel2014} and \cite{hall2014} are proposed to estimate a few unknown parameters in an ODE system of known form.
Hence, we consider a simple linear ODE system, for $k=1,\ldots,4$, 
\begin{equation}\label{eqn::linearODE}
\begin{cases}
& X_{2k-1}'(t) = 2k \pi X_{2k}(t)\\
& X_{2k}'(t)= - 2k \pi X_{2k-1}(t)\\
\end{cases}, t \in [0,1].
\end{equation}
For each $k=1,\ldots,4$, we set the initial condition to be $(X_{2k-1}(0), X_{2k}(0))=(\sin(y_k), \cos(y_k))$ where $y_k\sim N(0,1)$.  
The solutions to \eqref{eqn::linearODE} take the form of sine and cosine functions of frequencies ranging from $2\pi$ to $8 \pi$. 
The graph corresponding to \eqref{eqn::linearODE} is sparse, with only eight directed edges out of $64$ possible edges. 
We fit the model 
\begin{equation}\label{eqn::linearmodel}
X'(t)=\Theta X(t) + C,
\end{equation}
where $\Theta$ is an unknown $8 \times 8$ matrix and $C$ is an $8$-vector.
We apply the method in \cite{brunel2014} using the code provided by the authors. 
We implement the method in \cite{hall2014} in R based on the authors' code in Fortran. 
Because the loss function in \cite{hall2014} is not convex, we use five sets of random initial values and report the best performance.
Since both \cite{brunel2014} and \cite{hall2014} yield dense estimates for $\Theta$ in \eqref{eqn::linearmodel}, in order to examine how well these methods recover the true graph, we threshold the estimates at a range of values in order to obtain a variable selection path.  
We apply GRADE using the linear basis function $\psi(x)=x$. 

Results are shown in Figure~\ref{fig::linear}. 
We can see that GRADE outperforms the methods in \cite{brunel2014} and \cite{hall2014}. 
This is likely due to the fact that GRADE exploits the sparsity of the true graph with a sparsity-inducing penalty.
In principle, \cite{brunel2014} and \cite{hall2014} could be generalized in order to include penalties on the parameters. 
We leave this to future research.

\begin{figure}[ht]
\centering
   \centering
   \includegraphics[scale=1]{./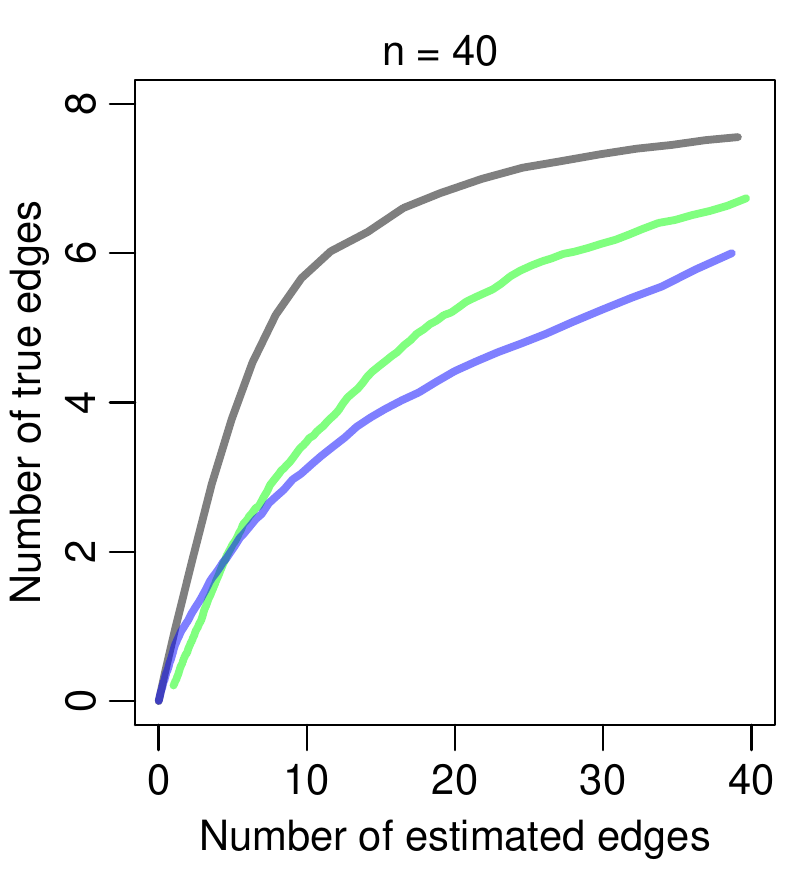} \includegraphics[scale=1]{./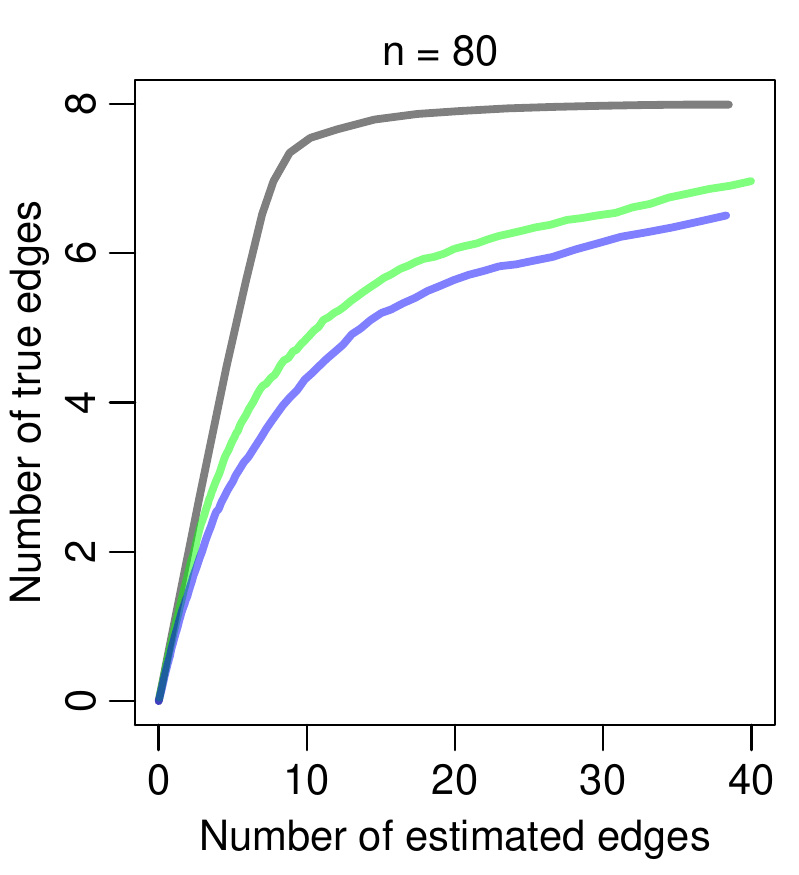}
\caption{Network recovery on the system of linear ODEs \eqref{eqn::linearODE}, averaged over 200 simulated data sets.  The three curves represent GRADE (\protect\includegraphics[height=0.5em]{./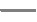}), \cite{hall2014} (\protect\includegraphics[height=0.5em]{./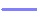}), \cite{brunel2014} (\protect\includegraphics[height=0.5em]{./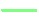}).}
\label{fig::linear}
\end{figure} 

\subsection{Robustness of GRADE to the additivity assumption}\label{sec::LV}

The GRADE method assumes that  the true underlying model is additive (Assumption~\ref{asmp::additivity}). 
However, in many systems, the additivity assumption is violated; for instance,  multiplicative effects may be present in gene regulatory networks \citep{ma2009}. 
In this subsection, we  investigate the performance of GRADE in a setting where the true model is non-additive.  
We consider the following system of ODEs, for $k=1,\ldots, 5$, 
\begin{equation}\label{eqn::LV}
\begin{cases}
& X_{2k-1}'(t) = f_{2k-1}\left(X_{2k-1}(t),X_{2k}(t)\right) \equiv 2 X_{2k-1}(t) - v X_{2k-1}(t)X_{2k}(t)\\
& X_{2k}'(t) = f_{2k}\left(X_{2k-1}(t),X_{2k}(t) \right) \equiv v X_{2k-1}(t)X_{2k}(t) - 2 X_{2k}(t)\\
\end{cases}, t \in [0, 5], 
\end{equation}
where $v$ is a positive constant. 
For each $k=1,\ldots,5$, the pair of equations \eqref{eqn::LV} is a special case of the Lotka-Volterra equations \citep{volterra1928}, which represent the dynamics between predators ($X_{2k}$) and prey ($X_{2k-1}$). 
The parameter $v$ defines the interaction between the two populations. 
For $v \neq 0$, both $X'_{2k-1}$ and $X'_{2k}$ are non-additive functions of $X_{2k-1}$ and $X_{2k}$.
We define two types of directed edges, where $\mathcal{E}_1\equiv\{ (X_{j}, X_{j}), j=1,\ldots, 10\}$ and $\mathcal{E}_2\equiv\{ (X_{2k-1}, X_{2k}), (X_{2k}, X_{2k-1}), \   k=1,\ldots, 5\}$ represent the self-edges and non-self-edges, respectively. Figure~\ref{fig::recoveryLV}(a) contains an illustration of the graph and edge types for each pair of equations. 
In what follows, we investigate how well GRADE recovers these two types of edges as we change the parameter $v$, i.e., as the additivity assumption is violated.   

Since measurement error is not essential to the current discussion, we generate data according to \eqref{eqn::LV} without adding noise.
To ensure that the trajectories are identifiable, we generate $R=2$ sets of random initial values drawn from $N_{10}(0, 2{I}_{10})$, where ${I}_{10}$ is a $10 \times 10$ identity matrix.  
In order to quantify the amount of signal in an edge that GRADE can  detect, we introduce the quantity
\begin{equation}\label{eqn::dfdxl2}
D_{j,k}(v)  = \mathbb{E} \left[ R \int_0^T \left\{ \frac{\partial f_{j} }{\partial X_{k}} \left(t; X(0)\right) \right\}^2  dt\right],
\end{equation} 
where the expectation is taken with respect to the random initial values $X(0)$ and $R$ is the number of initial values.
The measure $D_{j,k}$ in \eqref{eqn::dfdxl2} is a loose analogy to $\left\{\int_0^1 \big[f_{jk}^*(X_k^*(t))\big]^2 dt \right\}^{1/2}$ used in Assumption~\ref{asmp::thetamin}. 
Note that if no edge is present from $X_{k}$ to $X_{j}$, then $\partial f_j/\partial  X_k \equiv 0$ and hence  $D_{j,k}(v)=0$. 
One immediately notes that, as $R$ increases, the regulatory effect for a true edge increases proportionally to $R$, while the regulatory effect of a non-edge remains zero. 
For the self-edges in $\mathcal{E}_1$ and the non-self-edges in $\mathcal{E}_2$, we can define $D^{(1)}(v)$ and $D^{(2)}(v)$ as 
\begin{equation}\label{eqn::signals_twotypes}
D^{(1)}(v)= \underset{k=1,\ldots,10}{\min} D_{k,k}(v), \quad \text{and} \quad  
D^{(2)}(v)= \underset{k=1,\ldots,5}{\min} \{D_{2k-1,2k}(v),D_{2k,2k-1}(v)\},
\end{equation}
where we use the minimum because variable selection is limited by the minimum regulatory effect (see Assumption~\ref{asmp::thetamin}). 
With a slight abuse of definition, we refer to \eqref{eqn::signals_twotypes} as the minimum regulatory effects in a non-additive model.

We apply GRADE using the formulation in \eqref{eqn::multiple}. 
The sparsity parameter $\lambda$ is chosen so that there are $20$ directed edges in the estimated network. 
We record the number of estimated edges that are in $\mathcal{E}_1$ and $\mathcal{E}_2$. 
The edge recovery performance is shown in Figure~\ref{fig::recoveryLV}(b). 
In Figure~\ref{fig::recoveryLV}(c), we display the minimum regulatory effects defined in \eqref{eqn::signals_twotypes}.
Edge recovery and minimum regulatory effects show a similar trend as a function of $r$ in \eqref{eqn::LV}. 
This suggests that \eqref{eqn::signals_twotypes}, and thus \eqref{eqn::dfdxl2}, is a reasonable measure of the additive components of the regulatory effect of the edges. 
The slight deviation between the trends reflects the fact that the measure defined in \eqref{eqn::dfdxl2} is not a direct counterpart of $\left\{\int_0^1 \big[f_{jk}^*(X_k^*(t))\big]^2 dt \right\}^{1/2}$ in a non-additive model.
The edge recovery improves when a larger value of $R$ is used, though these results are omitted due to space constraints. 
Our results indicate that GRADE can recover the true graph even when the additivity assumption is violated, provided that the regulatory effects \eqref{eqn::dfdxl2} for the true edges are sufficiently large.

\begin{figure}[ht]
\centering
   \centering
   \subfigure{\includegraphics[scale=1]{./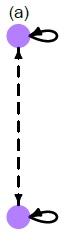}}\  \subfigure{\includegraphics[scale=1]{./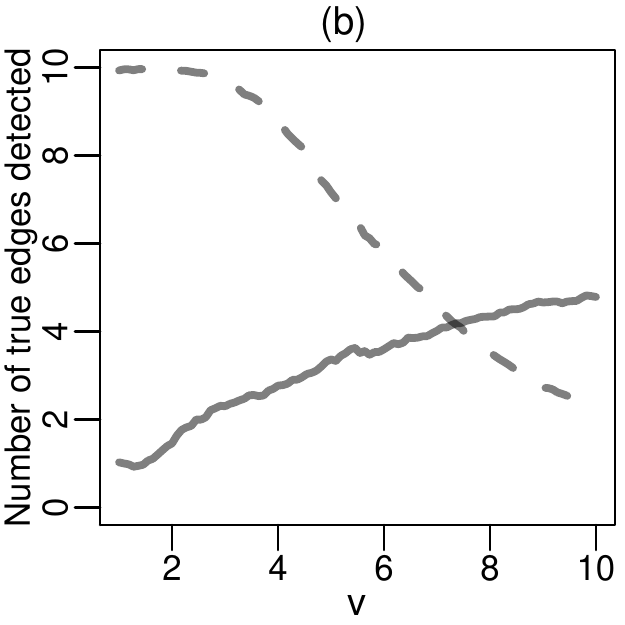}}\    \subfigure{\includegraphics[scale=1]{./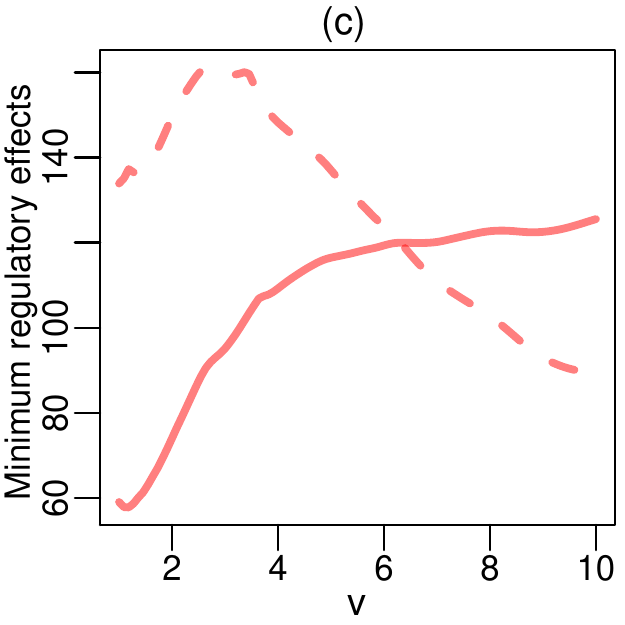}} 
\caption{(a): The graph encoded by a pair of Lotka-Volterra equations as given in \eqref{eqn::LV}. Self-edges  (\protect\includegraphics[height=0.5em]{./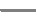}) and non-self-edges 
(\protect\includegraphics[height=0.5em]{./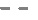}) are shown.  (b): Self-edge (\protect\includegraphics[height=0.5em]{./blacksolid-F3-colored.jpg}) and non-self-edge (\protect\includegraphics[height=0.5em]{./blackdashed-F3-colored.jpg}) recovery of GRADE, averaged over $200$ simulated data sets. (c): Minimum signals defined in \eqref{eqn::signals_twotypes}, for self-edges, $D^{(1)}(\cdot)$ (\protect\includegraphics[height=0.5em]{./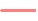}), and non-self-edges, $D^{(2)}(\cdot)$ (\protect\includegraphics[height=0.5em]{./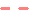}). } 
\label{fig::recoveryLV}
\end{figure}

\section{APPLICATIONS}\label{sec::RDA}

\subsection{Application to \emph{in silico} gene expression data}\label{sec::dream}

GeneNetWeaver (GNW) provides an \textit{in silico} benchmark for assessing the performance of network recovery methods \citep{schaffter2011}, and was used in the third DREAM challenge \citep{marbach2009}. 
GNW is based upon real gene regulatory networks of yeast and \textit{E. coli}. 
It extracts sub-networks from the yeast or \textit{E. coli} gene regulatory networks, and assigns a system of ODEs to the extracted network. 
This system of ODEs is non-additive, and includes unobserved variables \citep{marbach2010}. Therefore,  the assumptions of GRADE are violated in the GNW data.  
 
To mimic real-world laboratory experiments, GNW provides several data generation mechanisms. 
In this study, we consider data from the \textit{perturbation} experiments. 
The perturbation experiments are similar to the data generating mechanisms used in Section~\ref{sec::LV}, where initial conditions of the ODE system are perturbed in order to emulate the diversity of trajectories from multiple independent experiments.

We investigate ten networks from GNW that have been previously studied in \cite{henderson2014}, of which five have $10$ nodes and five have $100$ nodes. 
For each network,  GNW provides one set of noiseless gene expression data consisting of $R$ perturbation experiments where the trajectories are measured at $n=21$ evenly-spaced time points in $[0,1]$. 
Here $R=10$ for the five 10-node networks and $R=100$ for the five 100-node networks.
As in \cite{henderson2014},  we add independent $N\big(0,0.025^2\big)$ measurement errors to the data at each timepoint. 

We apply NeRDS as described in \cite{henderson2014}.  
We apply GRADE using the formulation \eqref{eqn::multiple} to handle observations from multiple experiments, with the smoothing estimates $\hat{X}$ in \eqref{eqn::us_initial} fit using smoothing splines with bandwidth chosen by \textsc{gcv},  and using cubic splines with two internal knots as the basis functions in \eqref{eqn::Psihat}. 
The integral $\hat{\Psi}_k(t)=\int_0^t \psi\big(\hat{X}_k(u;{h})\big) \, du$ in \eqref{eqn::Psihat} is calculated numerically with step size $0.01$.
Finally, we apply an additional $\ell_2$-type penalty to the $\theta_{jk}$'s in \eqref{eqn::multiple} in order to match the setup of NeRDS. 
The tuning parameter for this penalty is set to be $0.1$.

Results are shown in Table~\ref{tab::AUC}. 
Recall that the data generating mechanism violates crucial assumptions for both NeRDS and GRADE. 
We see in Table~\ref{tab::AUC} that 
NeRDS outperforms GRADE in one network, while GRADE outperforms NeRDS in the other nine networks.   
This suggests that GRADE is a competitive exploratory tool for reconstructing gene regulatory networks. 

\begin{table}[ht]
\caption{Area Under ROC Curves for NeRDS and GRADE}
\label{tab::AUC}
\begin{center}
\begin{tabular}{ c | c | c | c | c }
  \hline
  & \multicolumn{2}{|c|}{$p=10$} & \multicolumn{2}{|c}{$p=100$} \\
 & NeRDS & GRADE & NeRDS & GRADE \\ 
  \hline
Ecoli1 & $0.450$ $(0.438, 0.462)$ & $\bm{0.545}$ $(0.534, 0.557)$ & $0.624$ $(0.622, 0.627)$ & $\bm{0.670}$ $(0.667, 0.673)$ \\ 
  Ecoli2 & $0.512$ $(0.502, 0.523)$ & $\bm{0.643}$ $(0.634, 0.653)$ & $0.637$ $(0.635, 0.640)$ & $\bm{0.653}$ $(0.650, 0.656)$ \\ 
  Yeast1 & $0.486$ $(0.476, 0.495)$ & $\bm{0.679}$ $(0.666, 0.691)$ & $0.610$ $(0.607, 0.612)$ & $\bm{0.636}$ $(0.635, 0.638)$ \\ 
  Yeast2 & $0.525$ $(0.518, 0.532)$ & $\bm{0.607}$ $(0.600, 0.613)$ & $0.568$ $(0.566, 0.569)$ & $\bm{0.584}$ $(0.582, 0.585)$ \\ 
  Yeast3 & $0.467$ $(0.460, 0.474)$ & $\bm{0.576}$ $(0.566, 0.587)$ & $\bm{0.617}$ $(0.616, 0.619)$ & $0.567$ $(0.566, 0.568)$ \\ 
   \hline
\end{tabular}
The average area under the curves and $90\%$ confidence intervals, over 100 simulated data sets. 
Networks and data generating mechanisms are described in Section~\ref{sec::dream}.
Boldface indicates the method with larger AUC. 
\end{center}
\end{table}

\subsection{Application to calcium imaging recordings}\label{sec::CI}

In this section, we consider the task of learning regulatory relationships among populations of neurons. 
We investigate the calcium imaging recording data from the Allen Brain Observatory project conducted by the Allen Institute for Brain Science\footnote{Website: \copyright 2016 Allen Institute for Brain Science. Allen Brain Observatory [Internet]. Available from: \texttt{http://observatory.brain-map.org}.}.
Here, we investigate one of the experiments in the project.
In this experiment, calcium fluorescence levels (a surrogate for neuronal activity) are recorded at $30$ Hz on a region of the primary visual cortex while the subject mouse is shown forty visual stimuli.
The forty visual stimuli are combinations of eight spatial orientations and five temporal frequencies.
Each stimulus lasts for two seconds and is repeated 15 times.
The recorded videos are processed by the Allen Institute to identify individual neurons.  
In this particular experiment, there are $575$ neurons. 
Each neuron's activity is defined as the average calcium fluorescence level of the pixels that it covers in the video.

It is known that the activities of individual neurons are noisy and sometimes misleading \citep{cunningham2014}.
As an alternative, neuronal populations can be studied (see e.g.,  Part Three of \citealp{gerstner2014}). 
We define 25 neuronal populations by dividing the recording region into a $5 \times 5$ grid, where each population contains roughly 20 neurons.  
We use GRADE to capture the functional connectivity among the $25$ neuronal populations. 
Note that functional connectivity is distinct from physical connectivity. 
Functional connectivity involves the relationships among neuronal populations that can be observed through neuron activities and may change across stimuli, whereas physical connectivity consists of synaptic interactions. 

We estimate the functional connectivity corresponding to three different but related stimuli, consisting of frequencies of $1$ Hz, $2$ Hz, and $4$ Hz, each at a spatial orientation of $90^\circ$.
For each stimulus, we have calcium fluorescence levels of the $p=25$ neuronal populations for each of $R=15$ repetitions.
Since each repetition spans two seconds and the calcium fluorescence is recorded at $30$ Hz, there are  $60$ timepoints per repetition. 
We apply GRADE using the formulation in \eqref{eqn::multiple} in order to reconstruct the functional connectivity under each of the three stimuli. 
We use smoothing splines with bandwidth $h$ selected with \textsc{gcv} in order to estimate $\hat{X}$ in \eqref{eqn::us_initial}, and use cubic splines with $4$ internal knots as the basis functions $\psi(\cdot)$ in \eqref{eqn::Psihat}. 
The sparsity parameter  ${\lambda}_{j,n}$ for each nodewise regression in \eqref{eqn::multiple} is selected using \textsc{bic} for each $j=1,\ldots, 25$.
For ease of visualization, we prefer a sparse network, and so we fit GRADE using tuning parameter values $\alpha ({\lambda}_{1,n},  \ldots, {\lambda}_{p,n})$, where the scalar $\alpha$ is selected so that each of the estimated networks contains approximately 25 edges.

Estimated functional connectivities are shown in Figure~\ref{fig::recoveryRDA}. 
We see that, in all three networks, the $24$th neuronal population regulates many other neuronal populations, indicating that this region may contain neurons that are sensitive to this spatial orientation.
Furthermore, we see that the adjacent connectivity networks in Figure~\ref{fig::recoveryRDA} are somewhat similar to each other, whereas the networks at $1$ Hz and $4$ Hz have few similarities. 
This agrees with the observation in neuroscience that neurons in the mouse primary visual cortex are responsive to a somewhat narrow range of temporal frequencies near their peak frequencies (see, e.g., \citealp{gao2010}).
 
\begin{figure}[ht]
\centering
   \centering
   \includegraphics[scale=1]{./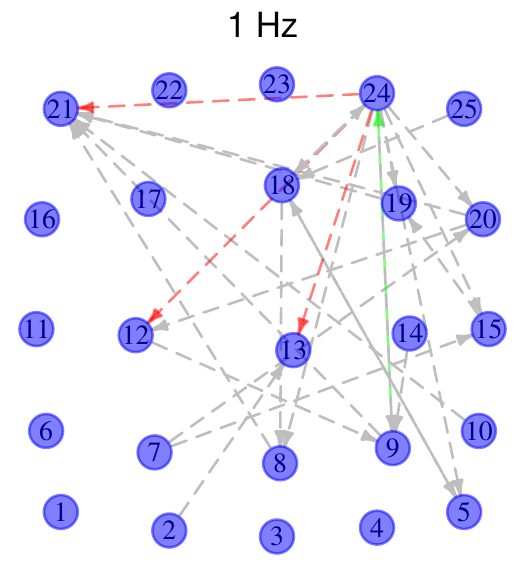}\ \includegraphics[scale=1]{./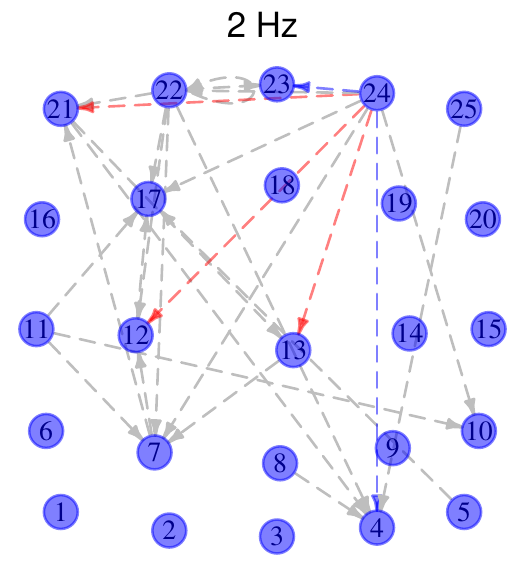}\ \includegraphics[scale=1]{./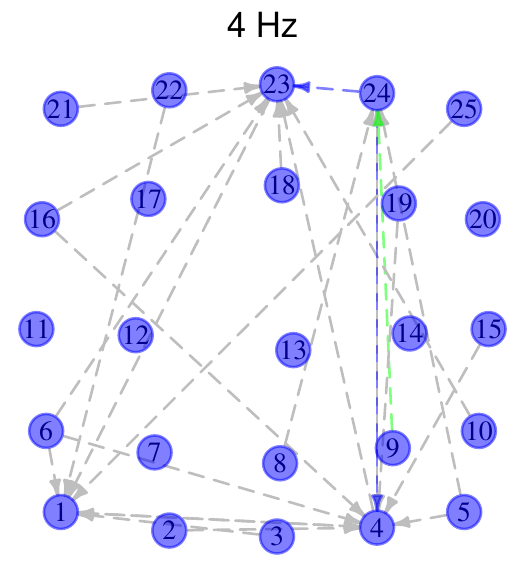}
\caption{Estimated functional connectivities among neuronal populations from the calcium imaging data described in Section~\ref{sec::CI}. 
Each node is positioned near the center of the neuronal population it represents, with jitter added for ease of display. 
The three red edges are shared between the estimated networks at $1$ Hz and $2$ Hz; the two blue edges are shared between estimated networks at $2$ Hz and $4$ Hz; the single green edge is shared between the estimated networks at $1$ Hz and $4$ Hz. For reference, given two Erd\"{o}s-R\`{e}nyi graphs consisting of $25$ nodes and $25$ edges, the probability of having three or more shared edges is $0.07$, and the probability of having two or more shared edges is $0.26$.}
\label{fig::recoveryRDA}
\end{figure} 

\section{DISCUSSION}\label{sec::discussion}

In this paper, we propose a new approach, GRADE, for estimating a system of high-dimensional additive ODEs. GRADE involves estimation of an integral rather than a derivative. We show that estimating the integral is superior to estimating the derivatives both theoretically and empirically.  We leave an extension of our work to non-additive ODEs to future research. 

In this paper, we have not addressed the issue of experimental design. Given a finite set of resources, one may choose to design an experiment to measure $n$ observations on a very dense time grid, or on a coarse time grid. Alternatively, one might choose to measure $n/R$ observations for $R$ distinct experiments from a single ODE system \eqref{eqn::ODE_model}, each with a different initial condition. 
This presents a trade-off that is especially interesting in the context of ODEs:
using a dense time grid improves the quality of the smoothing estimates $\hat{X}$, as seen in Sections~\ref{sec::sparse} and \ref{sec::linear}, while running multiple experiments enhance the identifiability of the true structure, as seen in Section~\ref{sec::LV}. 
We leave a more detailed treatment of these issues to future work.

 \bibliographystyle{chicago}
\bibliography{paper-ref}

\begin{thebibliography}{}

\bibitem[\protect\citeauthoryear{Benson}{Benson}{1979}]{benson1979}
Benson, M. (1979).
\newblock Parameter fitting in dynamic models.
\newblock {\em Ecological Modelling\/}~{\em 6\/}(2), 97 -- 115.

\bibitem[\protect\citeauthoryear{Biegler, Damiano, and Blau}{Biegler
  et~al.}{1986}]{biegler1986}
Biegler, L.~T., J.~J. Damiano, and G.~E. Blau (1986).
\newblock Nonlinear parameter estimation: A case study comparison.
\newblock {\em AIChE Journal\/}~{\em 32\/}(1), 29--45.

\bibitem[\protect\citeauthoryear{Boucheron, Lugosi, and Massart}{Boucheron
  et~al.}{2013}]{boucheron2013}
Boucheron, S., G.~Lugosi, and P.~Massart (2013).
\newblock {\em Concentration inequalities}.
\newblock Oxford University Press, Oxford.
\newblock A nonasymptotic theory of independence, With a foreword by Michel
  Ledoux.

\bibitem[\protect\citeauthoryear{Brunel}{Brunel}{2008}]{brunel2008}
Brunel, N. J.-B. (2008).
\newblock Parameter estimation of {ODE}'s via nonparametric estimators.
\newblock {\em Electron. J. Stat.\/}~{\em 2}, 1242--1267.

\bibitem[\protect\citeauthoryear{Brunel, Clairon, and d'Alch{\'e} Buc}{Brunel
  et~al.}{2014}]{brunel2014}
Brunel, N. J.-B., Q.~Clairon, and F.~d'Alch{\'e} Buc (2014).
\newblock Parametric estimation of ordinary differential equations with
  orthogonality conditions.
\newblock {\em J. Amer. Statist. Assoc.\/}~{\em 109\/}(505), 173--185.

\bibitem[\protect\citeauthoryear{Buja, Hastie, and Tibshirani}{Buja
  et~al.}{1989}]{buja1989}
Buja, A., T.~J. Hastie, and R.~J. Tibshirani (1989).
\newblock Linear smoothers and additive models.
\newblock {\em Ann. Statist.\/}~{\em 17\/}(2), 453--555.

\bibitem[\protect\citeauthoryear{Cao, Wang, and Xu}{Cao et~al.}{2011}]{cao2011}
Cao, J., L.~Wang, and J.~Xu (2011).
\newblock Robust estimation for ordinary differential equation models.
\newblock {\em Biometrics\/}~{\em 67\/}(4), 1305--1313.

\bibitem[\protect\citeauthoryear{Cao and Zhao}{Cao and Zhao}{2008}]{cao2008}
Cao, J. and H.~Zhao (2008).
\newblock Estimating dynamic models for gene regulation networks.
\newblock {\em Bioinformatics\/}~{\em 24\/}(14), 1619--1624.

\bibitem[\protect\citeauthoryear{Chou and Voit}{Chou and Voit}{2009}]{chou2009}
Chou, I.-C. and E.~O. Voit (2009).
\newblock Recent developments in parameter estimation and structure
  identification of biochemical and genomic systems.
\newblock {\em Math. Biosci.\/}~{\em 219\/}(2), 57--83.

\bibitem[\protect\citeauthoryear{Cunningham and Byron}{Cunningham and
  Byron}{2014}]{cunningham2014}
Cunningham, J.~P. and M.~Y. Byron (2014).
\newblock Dimensionality reduction for large-scale neural recordings.
\newblock {\em Nature neuroscience\/}~{\em 17\/}(11), 1500--1509.

\bibitem[\protect\citeauthoryear{Dattner and Klaassen}{Dattner and
  Klaassen}{2015}]{dattner2015}
Dattner, I. and C.~A.~J. Klaassen (2015).
\newblock Optimal rate of direct estimators in systems of ordinary differential
  equations linear in functions of the parameters.
\newblock {\em Electron. J. Stat.\/}~{\em 9\/}(2), 1939--1973.

\bibitem[\protect\citeauthoryear{Ellner, Seifu, and Smith}{Ellner
  et~al.}{2002}]{ellner2002}
Ellner, S.~P., Y.~Seifu, and R.~H. Smith (2002).
\newblock Fitting population dynamic models to time-series data by gradient
  matching.
\newblock {\em Ecology\/}~{\em 83\/}(8), 2256--2270.

\bibitem[\protect\citeauthoryear{Friedman, Hastie, and Tibshirani}{Friedman
  et~al.}{2008}]{friedman2008}
Friedman, J.~H., T.~J. Hastie, and R.~J. Tibshirani (2008).
\newblock Sparse inverse covariance estimation with the graphical lasso.
\newblock {\em Biostatistics\/}~{\em 9\/}(3), 432--441.

\bibitem[\protect\citeauthoryear{Gao, DeAngelis, and Burkhalter}{Gao
  et~al.}{2010}]{gao2010}
Gao, E., G.~C. DeAngelis, and A.~Burkhalter (2010).
\newblock Parallel input channels to mouse primary visual cortex.
\newblock {\em The Journal of neuroscience\/}~{\em 30\/}(17), 5912--5926.

\bibitem[\protect\citeauthoryear{Gershgorin}{Gershgorin}{1931}]{gershgorin1931}
Gershgorin, S. (1931).
\newblock \"uber die abgrenzung der eigenwerte einer matrix.
\newblock {\em Bulletin de l'Acad\'emie des Sciences de l'URSS. Classe des
  sciences math\'ematiques et na\/}~(6), 749--754.

\bibitem[\protect\citeauthoryear{Gerstner, Kistler, Naud, and
  Paninski}{Gerstner et~al.}{2014}]{gerstner2014}
Gerstner, W., W.~M. Kistler, R.~Naud, and L.~Paninski (2014).
\newblock {\em Neuronal dynamics: From single neurons to networks and models of
  cognition}.
\newblock Cambridge University Press.

\bibitem[\protect\citeauthoryear{Gugushvili and Klaassen}{Gugushvili and
  Klaassen}{2012}]{gugushvili2012}
Gugushvili, S. and C.~A.~J. Klaassen (2012).
\newblock {$\sqrt n$}-consistent parameter estimation for systems of ordinary
  differential equations: bypassing numerical integration via smoothing.
\newblock {\em Bernoulli\/}~{\em 18\/}(3), 1061--1098.

\bibitem[\protect\citeauthoryear{Hall and Ma}{Hall and Ma}{2014}]{hall2014}
Hall, P. and Y.~Ma (2014).
\newblock Quick and easy one-step parameter estimation in differential
  equations.
\newblock {\em J. R. Stat. Soc. Ser. B. Stat. Methodol.\/}~{\em 76\/}(4),
  735--748.

\bibitem[\protect\citeauthoryear{Henderson and Michailidis}{Henderson and
  Michailidis}{2014}]{henderson2014}
Henderson, J. and G.~Michailidis (2014).
\newblock Network reconstruction using nonparametric additive ode models.
\newblock {\em PLoS ONE\/}~{\em 9\/}(4), e94003.

\bibitem[\protect\citeauthoryear{Izhikevich}{Izhikevich}{2007}]{izhikevich2007}
Izhikevich, E.~M. (2007).
\newblock {\em Dynamical systems in neuroscience: the geometry of excitability
  and bursting}.
\newblock Computational Neuroscience. MIT Press, Cambridge, MA.

\bibitem[\protect\citeauthoryear{Lee, Sun, and Taylor}{Lee
  et~al.}{2013}]{lee2013}
Lee, J.~D., Y.~Sun, and J.~E. Taylor (2013).
\newblock On model selection consistency of penalized {M}-estimators: a
  geometric theory.
\newblock In C.~Burges, L.~Bottou, M.~Welling, Z.~Ghahramani, and K.~Weinberger
  (Eds.), {\em Advances in Neural Information Processing Systems 26}, pp.\
  342--350. Curran Associates, Inc.

\bibitem[\protect\citeauthoryear{Liang and Wu}{Liang and Wu}{2008}]{liang2008}
Liang, H. and H.~Wu (2008).
\newblock Parameter estimation for differential equation models using a
  framework of measurement error in regression models.
\newblock {\em J. Amer. Statist. Assoc.\/}~{\em 103\/}(484), 1570--1583.

\bibitem[\protect\citeauthoryear{Loader}{Loader}{2013}]{locfit}
Loader, C. (2013).
\newblock {\em locfit: Local Regression, Likelihood and Density Estimation.}
\newblock R package version 1.5-9.1.

\bibitem[\protect\citeauthoryear{Loh and Wainwright}{Loh and
  Wainwright}{2012}]{loh2012}
Loh, P.-L. and M.~J. Wainwright (2012).
\newblock High-dimensional regression with noisy and missing data: provable
  guarantees with nonconvexity.
\newblock {\em Ann. Statist.\/}~{\em 40\/}(3), 1637--1664.

\bibitem[\protect\citeauthoryear{Lu, Liang, Li, and Wu}{Lu
  et~al.}{2011}]{lu2011}
Lu, T., H.~Liang, H.~Li, and H.~Wu (2011).
\newblock High-dimensional {ODE}s coupled with mixed-effects modeling
  techniques for dynamic gene regulatory network identification.
\newblock {\em J. Amer. Statist. Assoc.\/}~{\em 106\/}(496), 1242--1258.

\bibitem[\protect\citeauthoryear{Ma, Trusina, El-Samad, Lim, and Tang}{Ma
  et~al.}{2009}]{ma2009}
Ma, W., A.~Trusina, H.~El-Samad, W.~A. Lim, and C.~Tang (2009).
\newblock Defining network topologies that can achieve biochemical adaptation.
\newblock {\em Cell\/}~{\em 138\/}(4), 760--773.

\bibitem[\protect\citeauthoryear{Ma and Li}{Ma and Li}{2010}]{ma2010}
Ma, Y. and R.~Li (2010).
\newblock Variable selection in measurement error models.
\newblock {\em Bernoulli\/}~{\em 16\/}(1), 274--300.

\bibitem[\protect\citeauthoryear{Marbach, Prill, Schaffter, Mattiussi,
  Floreano, and Stolovitzky}{Marbach et~al.}{2010}]{marbach2010}
Marbach, D., R.~J. Prill, T.~Schaffter, C.~Mattiussi, D.~Floreano, and
  G.~Stolovitzky (2010).
\newblock Revealing strengths and weaknesses of methods for gene network
  inference.
\newblock {\em Proceedings of the National Academy of Sciences\/}~{\em
  107\/}(14), 6286--6291.

\bibitem[\protect\citeauthoryear{Marbach, Schaffter, Mattiussi, and
  Floreano}{Marbach et~al.}{2009}]{marbach2009}
Marbach, D., T.~Schaffter, C.~Mattiussi, and D.~Floreano (2009).
\newblock Generating realistic \textit{in silico} gene networks for performance
  assessment of reverse engineering methods.
\newblock {\em Journal of Computational Biology\/}~{\em 16\/}(2), 229--239.

\bibitem[\protect\citeauthoryear{Meier}{Meier}{2014}]{grplasso}
Meier, L. (2014).
\newblock {\em grplasso: Fitting user specified models with group lasso
  penalty}.
\newblock R package version 0.4-4.

\bibitem[\protect\citeauthoryear{Meinshausen and B{\"u}hlmann}{Meinshausen and
  B{\"u}hlmann}{2006}]{meinshausen2006}
Meinshausen, N. and P.~B{\"u}hlmann (2006).
\newblock High-dimensional graphs and variable selection with the lasso.
\newblock {\em Ann. Statist.\/}~{\em 34\/}(3), 1436--1462.

\bibitem[\protect\citeauthoryear{Meinshausen and B\"{u}hlmann}{Meinshausen and
  B\"{u}hlmann}{2010}]{meinshausen2010}
Meinshausen, N. and P.~B\"{u}hlmann (2010).
\newblock Stability selection.
\newblock {\em J. R. Stat. Soc. Ser. B Stat. Methodol.\/}~{\em 72\/}(4),
  417--473.

\bibitem[\protect\citeauthoryear{Miao, Xia, Perelson, and Wu}{Miao
  et~al.}{2011}]{miao2011}
Miao, H., X.~Xia, A.~Perelson, and H.~Wu (2011).
\newblock On identifiability of nonlinear {ODE} models and applications in
  viral dynamics.
\newblock {\em SIAM Rev.\/}~{\em 53\/}(1), 3--39.

\bibitem[\protect\citeauthoryear{Qi and Zhao}{Qi and Zhao}{2010}]{qi2010}
Qi, X. and H.~Zhao (2010).
\newblock Asymptotic efficiency and finite-sample properties of the generalized
  profiling estimation of parameters in ordinary differential equations.
\newblock {\em Ann. Statist.\/}~{\em 38\/}(1), 435--481.

\bibitem[\protect\citeauthoryear{Ramsay, Hooker, Campbell, and Cao}{Ramsay
  et~al.}{2007}]{ramsay2007}
Ramsay, J.~O., G.~Hooker, D.~Campbell, and J.~Cao (2007).
\newblock Parameter estimation for differential equations: a generalized
  smoothing approach.
\newblock {\em J. R. Stat. Soc. Ser. B Stat. Methodol.\/}~{\em 69\/}(5),
  741--796.
\newblock With discussions and a reply by the authors.

\bibitem[\protect\citeauthoryear{Ravikumar, Lafferty, Liu, and
  Wasserman}{Ravikumar et~al.}{2009}]{ravikumar2009}
Ravikumar, P.~K., J.~Lafferty, H.~Liu, and L.~Wasserman (2009).
\newblock Sparse additive models.
\newblock {\em J. R. Stat. Soc. Ser. B Stat. Methodol.\/}~{\em 71\/}(5),
  1009--1030.

\bibitem[\protect\citeauthoryear{Rosenbaum and Tsybakov}{Rosenbaum and
  Tsybakov}{2010}]{rosenbaum2010}
Rosenbaum, M. and A.~B. Tsybakov (2010).
\newblock Sparse recovery under matrix uncertainty.
\newblock {\em Ann. Statist.\/}~{\em 38\/}(5), 2620--2651.

\bibitem[\protect\citeauthoryear{Roth and Fischer}{Roth and
  Fischer}{2008}]{roth2008}
Roth, V. and B.~Fischer (2008).
\newblock The group-lasso for generalized linear models: uniqueness of
  solutions and efficient algorithms.
\newblock In {\em Proceedings of the 25th International Conference on Machine
  Learning}, ICML '08, New York, NY, USA, pp.\  848--855. ACM.

\bibitem[\protect\citeauthoryear{Schaffter, Marbach, and Floreano}{Schaffter
  et~al.}{2011}]{schaffter2011}
Schaffter, T., D.~Marbach, and D.~Floreano (2011).
\newblock Genenetweaver: in silico benchmark generation and performance
  profiling of network inference methods.
\newblock {\em Bioinformatics\/}~{\em 27\/}(16), 2263--2270.

\bibitem[\protect\citeauthoryear{Simon and Tibshirani}{Simon and
  Tibshirani}{2012}]{simon2012}
Simon, N. and R.~J. Tibshirani (2012).
\newblock Standardization and the group lasso penalty.
\newblock {\em Statist. Sinica\/}~{\em 22\/}(3), 983--1001.

\bibitem[\protect\citeauthoryear{Tibshirani}{Tibshirani}{1996}]{tibshirani1996}
Tibshirani, R.~J. (1996).
\newblock Regression shrinkage and selection via the lasso.
\newblock {\em J. Roy. Statist. Soc. Ser. B\/}~{\em 58\/}(1), 267--288.

\bibitem[\protect\citeauthoryear{Tsybakov}{Tsybakov}{2009}]{tsybakov2009}
Tsybakov, A.~B. (2009).
\newblock {\em Introduction to nonparametric estimation}.
\newblock Springer Series in Statistics. Springer, New York.
\newblock Revised and extended from the 2004 French original, Translated by
  Vladimir Zaiats.

\bibitem[\protect\citeauthoryear{Varah}{Varah}{1982}]{varah1982}
Varah, J.~M. (1982).
\newblock A spline least squares method for numerical parameter estimation in
  differential equations.
\newblock {\em SIAM J. Sci. Statist. Comput.\/}~{\em 3\/}(1), 28--46.

\bibitem[\protect\citeauthoryear{Volterra}{Volterra}{1928}]{volterra1928}
Volterra, V. (1928).
\newblock Variations and fluctuations of the number of individuals in animal
  species living together.
\newblock {\em J. Cons. Int. Explor. Mer\/}~{\em 3\/}(1), 3--51.

\bibitem[\protect\citeauthoryear{Voorman, Shojaie, and Witten}{Voorman
  et~al.}{2014}]{voorman2014}
Voorman, A.~L., A.~Shojaie, and D.~M. Witten (2014).
\newblock Graph estimation with joint additive models.
\newblock {\em Biometrika\/}~{\em 101\/}(1), 85--101.

\bibitem[\protect\citeauthoryear{Wainwright}{Wainwright}{2009}]{wainwright2009}
Wainwright, M.~J. (2009).
\newblock Sharp thresholds for high-dimensional and noisy sparsity recovery
  using {$\ell\sb 1$}-constrained quadratic programming ({L}asso).
\newblock {\em IEEE Trans. Inform. Theory\/}~{\em 55\/}(5), 2183--2202.

\bibitem[\protect\citeauthoryear{Wang and Leng}{Wang and Leng}{2007}]{wang2007}
Wang, H. and C.~Leng (2007).
\newblock Unified {LASSO} estimation by least squares approximation.
\newblock {\em J. Amer. Statist. Assoc.\/}~{\em 102\/}(479), 1039--1048.

\bibitem[\protect\citeauthoryear{Weyl}{Weyl}{1912}]{weyl1912}
Weyl, H. (1912).
\newblock Das asymptotische {V}erteilungsgesetz der {E}igenwerte linearer
  partieller {D}ifferentialgleichungen (mit einer {A}nwendung auf die {T}heorie
  der {H}ohlraumstrahlung).
\newblock {\em Math. Ann.\/}~{\em 71\/}(4), 441--479.

\bibitem[\protect\citeauthoryear{Wu}{Wu}{2005}]{wu2005}
Wu, H. (2005).
\newblock Statistical methods for {HIV} dynamic studies in {AIDS} clinical
  trials.
\newblock {\em Stat. Methods Med. Res.\/}~{\em 14\/}(2), 171--192.

\bibitem[\protect\citeauthoryear{Wu, Lu, Xue, and Liang}{Wu
  et~al.}{2014}]{wu2014}
Wu, H., T.~Lu, H.~Xue, and H.~Liang (2014).
\newblock Sparse additive ordinary differential equations for dynamic gene
  regulatory network modeling.
\newblock {\em J. Amer. Statist. Assoc.\/}~{\em 109\/}(506), 700--716.

\bibitem[\protect\citeauthoryear{Xia and Li}{Xia and Li}{2002}]{xia2002}
Xia, Y. and W.~K. Li (2002).
\newblock Asymptotic behavior of bandwidth selected by the cross-validation
  method for local polynomial fitting.
\newblock {\em J. Multivariate Anal.\/}~{\em 83\/}(2), 265--287.

\bibitem[\protect\citeauthoryear{Xue, Miao, and Wu}{Xue et~al.}{2010}]{xue2010}
Xue, H., H.~Miao, and H.~Wu (2010).
\newblock Sieve estimation of constant and time-varying coefficients in
  nonlinear ordinary differential equation models by considering both numerical
  error and measurement error.
\newblock {\em Ann. Statist.\/}~{\em 38\/}(4), 2351--2387.

\bibitem[\protect\citeauthoryear{Yuan and Lin}{Yuan and Lin}{2006}]{yuan2006}
Yuan, M. and Y.~Lin (2006).
\newblock Model selection and estimation in regression with grouped variables.
\newblock {\em J. R. Stat. Soc. Ser. B Stat. Methodol.\/}~{\em 68\/}(1),
  49--67.

\bibitem[\protect\citeauthoryear{Yuan and Lin}{Yuan and Lin}{2007}]{yuan2007}
Yuan, M. and Y.~Lin (2007).
\newblock Model selection and estimation in the {G}aussian graphical model.
\newblock {\em Biometrika\/}~{\em 94\/}(1), 19--35.

\bibitem[\protect\citeauthoryear{Zhang, Cao, and Carroll}{Zhang
  et~al.}{2015}]{zhang2015}
Zhang, X., J.~Cao, and R.~J. Carroll (2015).
\newblock On the selection of ordinary differential equation models with
  application to predator-prey dynamical models.
\newblock {\em Biometrics\/}~{\em 71\/}(1), 131--138.

\bibitem[\protect\citeauthoryear{Zhao and Yu}{Zhao and Yu}{2006}]{zhao2006}
Zhao, P. and B.~Yu (2006).
\newblock On model selection consistency of {L}asso.
\newblock {\em J. Mach. Learn. Res.\/}~{\em 7}, 2541--2563.

\end{thebibliography}

\newpage 

\begin{center}
\Large  
 {\textbf{Supplementary Material for Network Reconstruction From High Dimensional Ordinary Differential Equations} }
\end{center}
\vspace{5mm}

\begin{appendices}

\section{PROOFS}\label{sec::proofs}
\subsection{Outline}

In this section, we prove Theorems~\ref{thm::localP} and \ref{thm::main} from Section~\ref{sec::theory} in the main paper. The remaining subsections are organized as follows. In Section~\ref{sec::proof_local}, we list the additional assumptions for Theorem~\ref{thm::localP}  in the main paper and give the proof of Theorem~\ref{thm::localP}  in the main paper. In Section~\ref{sec::group}, we prove a theorem  on variable selection consistency for group lasso regression with errors in variables, which itself is of independent interest. In  Section~\ref{sec::lemmas}, we introduce Assumption~\ref{asmp::psibound} on the bases $\psi(\cdot)$, and several technical lemmas that are useful in proving Theorem~\ref{thm::main}  in the main paper. In Section~\ref{sec::proof_main}, we finish the proof of Theorem~\ref{thm::main}  in the main paper. And in Section~\ref{sec::proof_prop}, we prove Proposition~\ref{prop::main} in the main paper. The proofs of the technical lemmas presented in Section~\ref{sec::lemmas} are provided in Section~\ref{sec::lemmas_proof}.

\subsection{Proof of Theorem~\ref{thm::localP}}\label{sec::proof_local}

In this section, we follow closely the notation in Section~1.6 of \cite{tsybakov2009}. We first present some necessary notation and assumptions. 
Denote the local polynomial estimator of degree $\ell$ as 
\begin{equation} \label{eqn::LPell}
\hat{X}(t;h)=\sum_{i=1}^n Y_i W_{ni}(t;h),
\end{equation}
where 
\begin{equation} \label{eqn::W_def}
 W_{ni}(t;h) =\frac{1}{nh} U^{\T}(0) B^{-1}_{nt} U\left(\frac{t_i-t}{h}\right) K\left( \frac{t_i-t}{h}\right),
\end{equation}

\begin{equation*}
\begin{aligned}
&B_{nt} =\frac{1}{nh}  \sum_{i=1}^n U\left(\frac{t_i-t}{h}\right) U^{\T}\left(\frac{t_i-t}{h}\right) K\left( \frac{t_i-t}{h}\right),\\
&U(u) = \big(1, u, u^2/2!, \ldots, u^{\ell}/{\ell}!\big)^{\T},
\end{aligned}
\end{equation*}
and $K(\cdot)$ is a kernel function. In \eqref{eqn::W_def}, $W_{ni}(t;h)$ is the weight for observation $Y_{i}$ in \eqref{eqn::LPell}, which satisfies 
\begin{equation}\label{eqn::W_sum}
\sum_{i=1}^n W_{ni}(t;h) =1. 
\end{equation}
See e.g., Proposition~1.12 in \cite{tsybakov2009}, for a rigorous proof of \eqref{eqn::W_sum}. We introduce the following assumptions on the kernel function $K(\cdot)$ and the time points $t_1,\ldots,t_n$. These assumptions are common in the study of local polynomial estimators (see e.g. \citealp{tsybakov2009}).

\begin{assumption}\label{asmp::LP1}
There exists a real number $\lambda_0>0$ and a positive integer $n_0$ such that the smallest eigenvalue $\Lambda_{\min}(B_{nt})$ of $B_{nt}$ satisfies 
$$\Lambda_{\min} (B_{nt}) \geq \lambda_0$$
for all $n\geq n_0$ and any $t \in [0,1]$.
\end{assumption}

\begin{assumption}\label{asmp::LP2}
The time points $t_1,\ldots, t_n$ are evenly-spaced on the interval $[0,1]$.
\end{assumption}

\begin{assumption}\label{asmp::LP3}
The kernel $K$ has compact support belonging to $[-1,1]$, and there exists a number $K_{\max} < \infty$ such that $|K(u)|\leq K_{\max}$, $\forall u \in \mathds{R}$.  
\end{assumption}

These assumptions lead to the following lemma (Lemma 1.3 in  \citealp{tsybakov2009}). 
\begin{lemma}\label{lmm::W}
Under Assumptions~\ref{asmp::LP1}--\ref{asmp::LP3}, for all $n\geq n_0$, $h\geq 1/(2n)$, and $t \in [0,1]$, the weights $W_{ni}$ in \eqref{eqn::W_def} satisfy:
\begin{itemize}
\item[i.] $\sup_{i,t} |W_{ni}(t;h)|\leq C_3/nh$;
\item[ii.] $\sum_{i=1}^n |W_{ni}(t;h)| \leq C_3$,
\end{itemize} 
where the constant $C_3$ depends only on $\lambda_0$ and $K_{\max}$.
\end{lemma}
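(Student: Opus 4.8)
The plan is to reduce everything to the definition \eqref{eqn::W_def}: bound a single weight $|W_{ni}(t;h)|$ by applying Cauchy--Schwarz together with submultiplicativity of the operator norm, control each resulting factor by one of Assumptions~\ref{asmp::LP1}--\ref{asmp::LP3}, and then, for the sum in part (ii), use the compact support of $K$ and the even spacing of the $t_i$ to count how many summands can be nonzero.

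For part (i), I would write, directly from \eqref{eqn::W_def},
\[
|W_{ni}(t;h)| \le \frac{1}{nh}\,\|U(0)\|_2\,\bigl\|B_{nt}^{-1}\bigr\|_2\,\Bigl\|U\!\left(\tfrac{t_i-t}{h}\right)\Bigr\|_2\,\Bigl|K\!\left(\tfrac{t_i-t}{h}\right)\Bigr|.
\]
Here $\|U(0)\|_2 = 1$ since $U(0)=(1,0,\ldots,0)^{\T}$; by Assumption~\ref{asmp::LP1}, $\bigl\|B_{nt}^{-1}\bigr\|_2 = 1/\Lambda_{\min}(B_{nt}) \le 1/\lambda_0$ for $n\ge n_0$; by Assumption~\ref{asmp::LP3}, $|K(\cdot)|\le K_{\max}$; and whenever $K\!\left((t_i-t)/h\right)\ne 0$ the support condition forces $|(t_i-t)/h|\le 1$, so that $\|U(u)\|_2^2 = \sum_{k=0}^{\ell}(u^k/k!)^2 \le \ell+1$ for every such $u$. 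Multiplying the four bounds gives $|W_{ni}(t;h)| \le \sqrt{\ell+1}\,K_{\max}\,\lambda_0^{-1}/(nh)$, which is part (i), with a constant that also depends on the fixed polynomial degree $\ell$.

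For part (ii), I would note that the bound just derived vanishes unless $|t_i-t|\le h$. Since $t_1,\ldots,t_n$ are evenly spaced on $[0,1]$ (Assumption~\ref{asmp::LP2}), any interval of length $2h$ contains at most $2h(n-1)+1 \le 2hn+1$ design points, and the hypothesis $h\ge 1/(2n)$ gives $2hn\ge 1$, so this count is at most $4hn$. Hence
\[
\sum_{i=1}^n |W_{ni}(t;h)| \;\le\; 4hn \cdot \frac{\sqrt{\ell+1}\,K_{\max}}{\lambda_0\,nh} \;=\; \frac{4\sqrt{\ell+1}\,K_{\max}}{\lambda_0},
\]
and taking $C_3 = 4\sqrt{\ell+1}\,K_{\max}/\lambda_0$ (the larger of the two constants) proves both claims.

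There is no genuine obstacle here: the argument is bookkeeping, and the only step that takes a moment of care is the design-point count in part (ii), where one uses $h\ge 1/(2n)$ to absorb the additive ``$+1$'' into a multiple of $hn$. The substantive input is Assumption~\ref{asmp::LP1}: a uniform lower bound $\lambda_0$ on $\Lambda_{\min}(B_{nt})$ is exactly what is needed to keep $\bigl\|B_{nt}^{-1}\bigr\|_2$ under control (the moment matrix $B_{nt}$ could otherwise degenerate, for instance near the endpoints of $[0,1]$), and this is assumed outright rather than proved. The identity \eqref{eqn::W_sum} is not needed for this lemma but is invoked elsewhere in the proof of Theorem~\ref{thm::localP}.
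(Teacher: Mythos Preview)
Your proof is correct. The paper does not supply its own proof of this lemma; it merely records it as Lemma~1.3 of Tsybakov (2009), and your argument is essentially the standard one found there: bound each weight via Cauchy--Schwarz and the operator-norm estimate $\|B_{nt}^{-1}\|_2\le 1/\lambda_0$, then use the compact support of $K$ together with the regular design to count the nonvanishing summands. Your remark that the resulting constant also depends on the fixed degree $\ell$ is accurate (Tsybakov's constant does too); the paper's phrasing ``depends only on $\lambda_0$ and $K_{\max}$'' implicitly treats $\ell$ as part of the fixed setup.
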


Recall that we also assume the unknown true solutions $X_j^*$, $j=1,\ldots, p$, belong to a H\"{o}lder class in Assumption~\ref{asmp::smoothness} in the main paper. We state the definition here for completeness.
\begin{definition}\label{def::holder}
Let $T$ be an interval in $\mathds{R}$ and let $\beta_1$ and $L_1$ be two positive numbers. The H\"{o}lder class $\Sigma(\beta_1, L_1)$ on $T$ is defined as the set of $\ell=\lfloor \beta_1 \rfloor$ times differentiable functions $f: T \rightarrow \mathds{R}$ whose $\ell$th order derivative $f^{(\ell)}(\cdot)$ satisfies 
\begin{equation*}
| f^{(\ell)}(x)-f^{(\ell)}(x')| \leq L_1 | x- x'|^{\beta_1-\ell}, \quad \forall x, x' \in T.
\end{equation*}
\end{definition}
We are now ready to prove Theorem~\ref{thm::localP} of the main paper.

\begin{proof}

\begin{align*}
\vertiii{\hat{X}_j- X_j^*}^2 & = \int_0^1 \{\hat{X}_j(u;h)- X_j^*(u) \}^2 \, du = \int_0^1 \left\{ \sum_{i=1}^n Y_{ij}W_{ni}(u;h) - X_j^*(u) \right\}^2 \, du\\
&=  \int_0^1 \left[ \sum_{i=1}^n \{X_{j}^*(t_i) +\epsilon_{ji} \} W_{ni}(u;h) - X_j^*(u) \right]^2 \, du.\\
\end{align*}
Using the property \eqref{eqn::W_sum} of the weights $W_{ni}$ and the fact that $(a+b)^2 \leq 2a^2+2b^2$,
\begin{equation}\label{eqn::decomposition}
\begin{aligned}
\vertiii{\hat{X}_j- X_j^*}^2 & \leq  2 \int_0^1 \left[ \sum_{i=1}^n \{X_{j}^*(t_i)- X_j^*(u) \} W_{ni}(u;h) \right]^2 \, du  \\
& + 2 \int_0^1 \left\{ \sum_{i=1}^n \epsilon_{ji} W_{ni}(u;h)  \right\}^2 \, du\\
& \equiv 2\int_0^1 \text{bias}^2(u) \, du + 2  \int_0^1 g^2( \epsilon_j/\sigma,u,h)  \, du,
\end{aligned}
\end{equation}
where 
\begin{align}
\text{bias}(u) = & \sum_{i=1}^n \{X_{j}^*(t_i)- X_j^*(t) \} W_{ni}(u;h), \label{eqn::biasdef} \\
g( a,u,h ) = &  \sigma\sum_{i=1}^n a_i W_{ni}(u;h), \;\; 
\epsilon_j =  (\epsilon_{1j}, \ldots, \epsilon_{nj})^{\T}, \label{eqn::gdef}
\end{align}
and where $\sigma$ is defined in Assumption~\ref{asmp::errors} in the main paper. 

In what follows, for convenience, we denote the $\ell$th derivative of $X_j^*(t)$ as $X_j^{(\ell)}$. 
Under Assumption~\ref{asmp::smoothness} in the main paper and Assumptions~\ref{asmp::LP1}--\ref{asmp::LP3}, 
it follows from Proposition 1.13 in \citet{tsybakov2009} that $|\text{bias}(u)|\leq q_1 h^{\beta_1}$, where $q_1 = C_3 L_1/{\ell !}$. 
Therefore, 
\begin{equation}\label{eqn::bias}
\int_0^1 \text{bias}^2(u) \, du  \leq q_1^2 h^{2\beta_1}.
\end{equation}

Next, we bound $g(\epsilon_j/\sigma,t,h)$ in \eqref{eqn::gdef} using Theorem 5.6 in \cite{boucheron2013}. The theorem states that if $Z=(Z_1,\ldots, Z_n)$ is a vector of $n$ independent standard normal random variables and $f$ is an $L$-Lipschitz function, then for all $v>0$, 
\begin{equation*}
\text{Pr}\{f(Z)-\mathbb{E} f(Z) \geq v \} \leq \exp\{-v^2/(2L^2) \}.
\end{equation*}
Applying the theorem to $f(z)$ and $-f(z)$, we get
\begin{equation*}
\text{Pr}\{|f(Z)-\mathbb{E} f(Z)| \geq v \} \leq 2\exp\{-v^2/(2L^2) \}.
\end{equation*}

We now show that $ g(x,t,h)$ is an $L_3$-Lipschitz function with $L_3=\sigma C_{3} (nh)^{-0.5}$:
\begin{equation*}
\begin{aligned}
|g(a,u,h)-g(b,u,h)| = & \sigma \left|\sum_{i=1}^{n} (a_i-b_i)W_{ni}(u;h)\right|\\
 \leq & \sigma \left\{ \sum_{i=1}^n W^2_{ni}(u;h) \right\}^{\frac{1}{2}} \|a-b\|_2\\
 \leq & \sigma \left\{\sup_{i,u} |W_{ni}(u;h)| \sum_{i=1}^n |W_{ni}(u,h)| \right\}^{\frac{1}{2}} \|a-b\|_2\\
 \leq & \sigma C_{3} \sqrt{\frac{1}{nh}} \|a-b\|_2,
\end{aligned}
\end{equation*}
where the last inequality follows from Lemma~\ref{lmm::W}. Hence, from Theorem 5.6 in \cite{boucheron2013}, we have 
\begin{equation*}
\text{Pr}\{|{g}(\epsilon_j/\sigma,u,h)-\mathbb{E}{g}(\epsilon_j/\sigma,u,h)|  \geq v \} \leq 2\exp\{-v^2/(2L_3^2)\}.
\end{equation*}
Letting $v=n^{\alpha/2-0.5}h^{-0.5}$  and noting that $\mathbb{E}[g(\epsilon_j/\sigma,u,h)]=0$, we have 
\begin{equation}\label{eqn::g_concen}
\text{Pr}\{|{g}(\epsilon_j/\sigma,u,h)|  \geq n^{\alpha/2-0.5}h^{-0.5} \} \leq  2\exp\{-n^{\alpha}/(2\sigma^2 C^2_{3})\}.
\end{equation}

Combining \eqref{eqn::decomposition}, \eqref{eqn::bias}, and   \eqref{eqn::g_concen}, we have 
\begin{equation}\label{eqn::bound}
\begin{aligned}
\vertiii{\hat{X}_j-X_j^*}^2 & \leq 2\int_0^1 \text{bias}^2(u) \, du +2  \int_0^1 g^2( \epsilon_j/\sigma,u,h)  \, du \\
& \leq 2 q_1^2 h^{2\beta_1} + 2 n^{\alpha-1}h^{-1},
\end{aligned}
\end{equation}
with probability at least $1-2\exp\{-n^{\alpha}/(2\sigma^2 C^2_{3})\}$.

Minimizing the right-hand side of \eqref{eqn::bound} with respect to $h$, we find that the minimizer $h_n$ satisfies   
\begin{equation*}
2\beta_1 q_1^2 {h}_n^{2\beta_1+1} = n^{\alpha-1} .
\end{equation*}
Thus, for ${h}_n \propto n^{(\alpha-1)/(2\beta_1+1)}$, the error bound is 
\begin{equation*}
\vertiii{\hat{X}_j-X_j^*}^2 \leq  C_2 n^{\frac{2\beta_1 }{2\beta_1+1}(\alpha-1)}, 
\end{equation*}
for some global constant $C_2$.
\end{proof}

\subsection{Variable selection consistency of group lasso in error-in-variable models}\label{sec::group}

We first review some notation that is heavily used in this section.  In \eqref{eqn::Psihat} of the main paper, we made use of the notation 
$$ \hat{\Psi}_0(t)=t; \ \hat{\Psi}_k(t) = \int_0^t \psi(\hat{X}_k(u;h)) \, du, \ k=1, \ldots, p.$$
Therefore, $\hat{\Psi}_k(t)$ is an $M$-vector  for $k=1,\ldots, p$ and a scalar for $k=0$. We sometimes use sets, e.g. $S_j$ and $S_j^0$, as the subscripts. In this case, $\hat{\Psi}_{S_j}(t)$ is an $Ms_j$-vector, which is composed of $\hat{\Psi}_{k}$ for $k\in S_j$. Furthermore, $\hat{\Psi}_{S_j^0} = (\hat{\Psi}_0(t), \hat{\Psi}_{S_j}^{\T}(t))^{\T}$ is an $(Ms_j+1)$-vector. Without subscripts, $\hat{\Psi}(t)\equiv (\hat{\Psi}_0(t),\hat{\Psi}_1^{\T}(t), \ldots, \hat{\Psi}_p^{\T}(t))^{\T}$ is of dimension $Mp+1$. We will also apply subscripts to the quantities  $\theta_j^*$, $\hat{\theta}_j$, $\hat{g}$, and $R$. For instance, $\hat{\theta}_{jk}=(\theta_{jk1},\ldots, \theta_{jkM})^{\T}$ for $k = 1, \ldots, p$, and $\hat{\theta}_j=(\hat{\theta}_{j0}, \hat{\theta}_{j1}^{\T}, \ldots, \hat{\theta}_{jp}^{\T})^{\T}$. The products of these vectors are defined as usual, e.g.,  $\hat{\theta}_{j S^0_j}^{\T} \hat{\Psi}_{S_j^0}(t)$ is a scalar, and $\hat{\Psi}_{S_j^0}(t) \hat{\Psi}_{S_j^0}^{\T}(t) $ is an $(Ms_j+1) \times (Ms_j+1)$ matrix.

The optimization problem \eqref{eqn::us_objective} in the main paper is a standardized group lasso problem \citep{simon2012}. Because the regressors $\hat{\Psi}_1,\ldots, \hat{\Psi}_p$ are estimated, establishing variable selection consistency requires extra attention. For ease of discussion, we re-state the optimization problem \eqref{eqn::us_objective},
\begin{equation*} \small
\begin{aligned}
\hat{\theta}_j= \underset{C_0\in \mathbb{R},\theta_{j0} \in \mathbb{R}, \  \theta_{jk} \in \mathbb{R}^{M}}{\arg \min} \ &  \frac{1}{2n} \sum_{i=1}^n \left\{Y_{ij} - C_0- \theta_{j0}\hat{\Psi}_0(t_i) -\sum_{k=1}^p  \theta_{jk}^{\T} \hat{\Psi}_{k}(t_i)\right\}^2+ \\
& \lambda_{n,j} \sum_{k=1}^p \left[ \frac{1}{n} \sum_{i=1}^n  \{\theta_{jk}^{\T}\hat{\Psi}_{k}(t_i)\}^2 \right]^{1/2},  
\end{aligned}
\end{equation*}
where
\begin{equation*}
 \hat{X}(\cdot;h) = \underset{{Z}(\cdot) \in \mathcal{X}(h)}{\arg \min} \sum_{i=1}^n \|Y_i- Z(t_i)\|_2^2, \end{equation*}
\begin{equation*}
\hat{\Psi}_0(t)=t; \ \hat{\Psi}_k(t) = \int_0^t \psi(\hat{X}_k(u;h)) \, du, \ k=1, \ldots, p. 
\end{equation*}
In what follows, for simplicity we assume that $X^*_j(0)=0$, and that $\lambda_{n,1} = \cdots = \lambda_{n,p} \equiv \lambda_n$. 
For any $1\leq j,k \leq p$, let $\theta_{jk}^* \in \mathbb{R}^M$ be the coefficients of the true functions $f_{jk}^*$ on the bases $\psi(\cdot)$, i.e.,
\begin{equation}\label{eqn::thetastar}
f_{jk}^*(a) = \psi(a)^{\T} \theta_{jk}^*+\delta_{jk}(a),
\end{equation}
where $f_{jk}^*$ is introduced in Assumption~\ref{asmp::additivity} in the main paper.
Here we establish variable selection consistency for group lasso regression with errors in variables.  We extend the recent work of  \cite{loh2012} for lasso regression; related results can be found in \cite{ma2010} and \cite{rosenbaum2010}.  In order for variable selection consistency to hold, we need four conditions. In Section~\ref{sec::proof_main}, we will show that these conditions hold with high probability given Assumptions~\ref{asmp::errors}--\ref{asmp::thetamin} in the main paper and Assumptions~\ref{asmp::LP1}--\ref{asmp::psibound}. 

\begin{condition}\label{cond::coherence}
Suppose that 
\begin{equation*}
\begin{aligned}
& 0< \frac{1}{2}C_{\min} \leq \Lambda_{\min} \left( \frac{1}{n} \sum_{i=1}^n \hat{\Psi}_{S_j^0}(t_i) \hat{\Psi}_{S_j^0}^{\T}(t_i) \right), \\
& \Lambda_{\max} \left(\frac{1}{n} \sum_{i=1}^n \hat{\Psi}_{S_j^0}(t_i) \hat{\Psi}_{S_j^0}^{\T}(t_i)\right) \leq 2C_{\max},\\
& 0< \frac{1}{2}C_{\min} \leq \Lambda_{\min} \left( \frac{1}{n} \sum_{i=1}^n \hat{\Psi}_{k}(t_i) \hat{\Psi}_{k}^{\T}(t_i) \right), \quad k \notin S_j^0,
\end{aligned}
\end{equation*}
where $C_{\min}$ and $C_{\max}$ are introduced in Assumption~\ref{asmp::coherence_pop} in the main paper. 
\end{condition}

\begin{condition}\label{cond::irrepresentability}
Assume that 
\begin{equation*}
\max_{k \notin S^0_j} \left\|  \left(\frac{1}{n} \sum_{i=1}^n  \hat{\Psi}_{k}(t_i) \hat{\Psi}_{S_j^0}^{\T}(t_i) \right) \left( \frac{1}{n} \sum_{i=1}^n  \hat{\Psi}_{S_j^0}(t_i) \hat{\Psi}_{S_j^0}^{\T}(t_i) \right)^{-1}   \right\|_2 \leq  2\xi,
\end{equation*}
where $\xi$ is introduced in Assumption~\ref{asmp::irrepresentability_pop}. 
\end{condition}

The next condition  was first proposed in \cite{loh2012} as the deviation condition. Specifically, \eqref{eqn::deviation} is a special case of Equation 3.1 in \cite{loh2012}. 
Recall that the true parameters $\theta_{j0}^*$ and $\theta_{jk}^*$ are introduced in Assumption~\ref{asmp::additivity} of the main paper and \eqref{eqn::thetastar}, respectively.    
\begin{condition}\label{cond::deviation_cond} 
For $j = 1,\ldots, p$, let $\Delta \equiv \max_{j=1,\ldots,p} \vertiii{\hat{X}_j-X_j^*}$. Assume that 
\begin{equation}\label{eqn::deviation}
\left\| \frac{1}{n} \sum_{i=1}^n \hat{\Psi}_k(t_i) Y_{ij}  - \frac{1}{n} \sum_{i=1}^n \hat{\Psi}_k(t_i) \hat{\Psi}_{S^0_j}^{\T}(t_i)\theta^*_{jS^0_j}   \right\|_2 \leq \eta, \ k=0, \ldots, p
\end{equation}
where  $\eta = M^{1/2} \left\{s M^{-\beta_2} Q^{1/2} B+BD\|\theta^*_{S}\|_1 \Delta  +n^{{\alpha}/{2}-1/2 }  \right\}$.
\end{condition}
Note that the global constant $Q$ in Condition~\ref{cond::deviation_cond} also appears in Assumption~\ref{asmp::psibound} in Section~\ref{sec::lemmas}.

Condition~\ref{cond::set_group} places constraints on the quantities involved in the proof of Theorem~\ref{thm::group}. In the proof of Theorem~\ref{thm::main} in the main paper, we will show that Condition~\ref{cond::set_group} holds with high probability. 
\begin{condition}\label{cond::set_group}
The following inequalities hold:
\begin{equation*}
\frac{2\sqrt{s+1}}{C_{\min}} \eta + \lambda_n\frac{\sqrt{8sC_{\max}}}{C_{\min}} \leq \frac{2}{3}\theta_{\min},
\end{equation*}
\begin{equation*}
\frac{2\xi\sqrt{s+1}+1}{\lambda_n}\eta +2\xi \sqrt{s} \sqrt{2C_{\max}} <\sqrt{C_{\min}/2},
\end{equation*}
where $\theta_{\min}\equiv \min_{k \in S_j^0} \|\theta_{jk}^*\|_2$, and $\xi, \eta, C_{\min}$, and $C_{\max}$ are introduced in Assumptions~\ref{asmp::coherence_pop}--\ref{asmp::thetamin} of the main paper.
\end{condition}

We  arrive at the following theorem. 
\begin{theorem}\label{thm::group}
Suppose that Conditions~\ref{cond::coherence}--\ref{cond::set_group} are met. Then the estimator $\hat{\theta}_{j}$ from \eqref{eqn::us_objective} has the correct support, i.e. $\hat{S}_j=S_j$\  for all $j=1,\ldots, p$.
\end{theorem}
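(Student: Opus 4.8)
The plan is a primal--dual witness (PDW) construction, adapted to the \emph{standardized} group lasso penalty and to the errors-in-variables setting, in which the deviation bound of Condition~\ref{cond::deviation_cond} plays the role usually played by concentration of the score vector. Fix $j$ and write $\hat{\Sigma}_{A,B}=\frac1n\sum_{i=1}^n\hat{\Psi}_A(t_i)\hat{\Psi}_B^{\T}(t_i)$ for $A,B\subseteq\{0,1,\ldots,p\}$, and $\hat{\Sigma}_k=\hat{\Sigma}_{\{k\},\{k\}}$; note the penalty in \eqref{eqn::us_objective} equals $\lambda_n\sum_{k=1}^p\|\hat{\Sigma}_k^{1/2}\theta_{jk}\|_2$, whose $k$-th block has subgradient $\hat{\Sigma}_k\theta_{jk}/\|\hat{\Sigma}_k^{1/2}\theta_{jk}\|_2$ when $\theta_{jk}\neq0$ and $\{\hat{\Sigma}_k^{1/2}v:\|v\|_2\leq1\}$ when $\theta_{jk}=0$. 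First I would define the \emph{oracle} estimator $\tilde{\theta}_{jS_j^0}$ as the minimizer of \eqref{eqn::us_objective} after pinning $\theta_{jk}=0$ for all $k\notin S_j^0$, setting $\tilde{\theta}_{jk}=0$ there. By Condition~\ref{cond::coherence} the restricted objective is strictly convex (the Hessian of its quadratic part is $\hat{\Sigma}_{S_j^0,S_j^0}$ with $\Lambda_{\min}\geq C_{\min}/2>0$), so $\tilde{\theta}_{jS_j^0}$ is well defined, and inverting its block-wise stationarity conditions gives
\begin{equation*}
\tilde{\theta}_{jS_j^0}-\theta^*_{jS_j^0}=\hat{\Sigma}_{S_j^0,S_j^0}^{-1}\bigl(r_{S_j^0}-\lambda_n\hat{z}_{S_j^0}\bigr),
\end{equation*}
where $r_{S_j^0}$ stacks the per-block quantities appearing in \eqref{eqn::deviation} (so $\|r_{S_j^0}\|_2\leq\sqrt{s+1}\,\eta$), and $\hat{z}_{S_j^0}$ is the subgradient on $S_j^0$ whose penalized blocks are $\hat{\Sigma}_k^{1/2}v_k$ with $\|v_k\|_2\leq1$ and whose unpenalized intercept block vanishes, so $\|\hat{z}_{S_j^0}\|_2\leq\sqrt{2sC_{\max}}$ by Condition~\ref{cond::coherence}.

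Next I would check that no true block is killed: from the display above and $\|\hat{\Sigma}_{S_j^0,S_j^0}^{-1}\|_2\leq2/C_{\min}$,
\begin{equation*}
\|\tilde{\theta}_{jS_j^0}-\theta^*_{jS_j^0}\|_2\leq\frac{2\sqrt{s+1}}{C_{\min}}\,\eta+\lambda_n\frac{\sqrt{8sC_{\max}}}{C_{\min}}\leq\frac23\theta_{\min}
\end{equation*}
by the first inequality of Condition~\ref{cond::set_group}, hence $\|\tilde{\theta}_{jk}\|_2\geq\|\theta^*_{jk}\|_2-\frac23\theta_{\min}\geq\frac13\theta_{\min}>0$ for every $k\in S_j$, so the oracle recovers all true regulators.

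I would then establish strict dual feasibility off the support. Substituting the first display into the stationarity equation of a block $k\notin S_j^0$ (whose coefficient is pinned at $0$) gives $\lambda_n\hat{\Sigma}_k^{1/2}v_k=r_k-\hat{\Sigma}_{k,S_j^0}\hat{\Sigma}_{S_j^0,S_j^0}^{-1}(r_{S_j^0}-\lambda_n\hat{z}_{S_j^0})$ with $\|r_k\|_2\leq\eta$; bounding term by term using $\|\hat{\Sigma}_{k,S_j^0}\hat{\Sigma}_{S_j^0,S_j^0}^{-1}\|_2\leq2\xi$ from Condition~\ref{cond::irrepresentability} yields
\begin{equation*}
\|\hat{\Sigma}_k^{1/2}v_k\|_2\leq\frac{(2\xi\sqrt{s+1}+1)\eta}{\lambda_n}+2\xi\sqrt{s}\sqrt{2C_{\max}}<\sqrt{C_{\min}/2}
\end{equation*}
by the second inequality of Condition~\ref{cond::set_group}; since $\Lambda_{\min}(\hat{\Sigma}_k)\geq C_{\min}/2$ by Condition~\ref{cond::coherence}, this forces $\|v_k\|_2<1$, so the dual certificate is strictly feasible at every $k\notin S_j^0$.

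Finally I would combine these in the usual PDW manner: strict dual feasibility implies that every optimum of \eqref{eqn::us_objective} has $\hat{\theta}_{jk}=0$ for $k\notin S_j^0$, while strict convexity of the restricted problem makes the optimum unique on $S_j^0$ and equal to $\tilde{\theta}_{jS_j^0}$; together with the previous paragraphs this gives $\hat{S}_j=S_j$, and repeating for each $j=1,\ldots,p$ proves the theorem. The main obstacle I anticipate is the strict dual feasibility step: one must carry the $\hat{\Sigma}^{1/2}$ factors coming from the standardized penalty correctly through the inversion of the restricted stationarity system and verify that the slack incurred in passing between $v_k$ and $\hat{\Sigma}_k^{1/2}v_k$ via the eigenvalue bounds is precisely what the strictness built into Condition~\ref{cond::set_group} is calibrated to absorb; the remaining estimates, together with the subgradient characterization of the standardized group-lasso norm and the well-posedness of $\tilde{\theta}_{jS_j^0}$, are routine.
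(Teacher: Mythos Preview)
Your proposal is correct and follows essentially the same primal--dual witness argument as the paper: the oracle construction, the bound $\|\tilde{\theta}_{jS_j^0}-\theta^*_{jS_j^0}\|_2\leq\frac{2\sqrt{s+1}}{C_{\min}}\eta+\lambda_n\frac{\sqrt{8sC_{\max}}}{C_{\min}}$, and the strict dual feasibility step all match the paper's proof line by line, with your $\hat{\Sigma}_k^{1/2}v_k$ playing the role of the paper's $\hat{g}_k$ (the two subgradient parameterizations are equivalent since $\hat{g}_k^{\T}\hat{\Sigma}_k^{-1}\hat{g}_k=\|v_k\|_2^2$). The only cosmetic difference is that for the uniqueness step the paper invokes Theorem~2 of \cite{roth2008}, whereas you argue directly via strict convexity of the restricted problem together with strict dual feasibility; both routes are standard and yield the same conclusion.
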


\begin{proof}
 We establish variable selection consistency  using the primal-dual witness method \citep{wainwright2009}. For simplicity, we drop the subscript $j$ in what follows: for instance, we drop the subscript $j$ in $Y_{ij}$ and $\hat{\theta}_{j}$ in \eqref{eqn::us_objective}, and in the estimated neighbourhood $\hat{S}_j$. 

A vector $\hat{\theta}$ solves the optimization problem \eqref{eqn::us_objective} in the main paper if it satisfies the Karush-Kuhn-Tucker (KKT) condition, which is 
\begin{equation}\label{eqn::subgrad}
\frac{1}{n} \sum_{i=1}^n \hat{\Psi}_k(t_i)  \left\{ \hat{\Psi}^{\T}(t_i) \hat{\theta} - Y_{i}\right \} + \lambda_n \hat{g}_k =0, \quad k=1,\ldots, p,
\end{equation}
with 
\begin{equation}\label{eqn::g}
\begin{aligned}
\hat{g}_k=\frac{ \sum_{i=1}^n \hat{\Psi}_k(t_i) \hat{\Psi}_k^{\T}(t_i) \hat{\theta}_k/n }{ \sqrt{   \hat{\theta}^{\T}_k \sum_{i=1}^n \hat{\Psi}_k(t_i) \hat{\Psi}_k^{\T}(t_i)                                                                 \hat{\theta}_k/n }} \quad & \text{if} \ \hat{\theta}_k \neq 0, \\
\hat{g}^{\T}_k \left(\frac{1}{n}\sum_{i=1}^n \hat{\Psi}_k(t_i) \hat{\Psi}_k^{\T}(t_i) \right)^{-1} \hat{g}_k < 1 \quad & \text{if} \  \hat{\theta}_k =0.
\end{aligned}
\end{equation}
The KKT condition for $\hat{\theta}_0$ is                                                                                      
\begin{equation}\label{eqn::subgrad_0}
\frac{1}{n} \sum_{i=1}^n \hat{\Psi}_0(t_i)  \left\{ \hat{\Psi}^{\T}(t_i) \hat{\theta}- Y_{i}\right \} =0. 
\end{equation} 
Note that, in the previous equations, we drop the parameter $C_0$ that appears in \eqref{eqn::us_objective} of the main paper to avoid cumbersome bookkeeping. 

We will construct an oracle estimator $\hat{\theta}$ and will verify that it satisfies the KKT conditions \eqref{eqn::subgrad}, \eqref{eqn::g}, and \eqref{eqn::subgrad_0}, which means that it solves the optimization problem \eqref{eqn::us_objective} in the main paper. 

We construct an oracle primal-dual pair $(\hat{\theta},\hat{g})$ as follows: 
\begin{enumerate}
\item Set $\hat{\theta}_{k}=0$ for $k \notin S^0$.
\item Let 
\begin{equation}\label{eqn::constrained}
\begin{aligned}
\hat{\theta}_{S^0} =&  \underset{\theta_{S^0} \in \mathbb{R}^{s M +1}}{\arg \min}  \frac{1}{2n} \sum_{i=1}^n  \left\{Y_{i} - \theta_{S^0}^{\T} \hat{\Psi}_{S^0}(t_i)   \right\}^2 + \lambda_n \sum_{k \in S} \left[ \frac{1}{n} \sum_{i=1}^n  \{\theta_{jk}^{\T}\hat{\Psi}_{k}(t_i)\}^2 \right]^{1/2}. 
\end{aligned} 
\end{equation}
\item Define $\hat{g}_{S^0}=(0, \hat{g}_S^{\T})^{\T}$ as in \eqref{eqn::g}.
\item Solve $\hat{g}_{k}$ from the sub-gradient condition \eqref{eqn::subgrad} for $k \notin S^0$.
\end{enumerate}

We will verify the support recovery consistency 
\begin{equation}\label{eqn::support_rec}
\max_{k \in S} \| \hat{\theta}_k -\theta_k^*\|_2 \leq \frac{2}{3}\theta_{\min}
\end{equation}
 and strict dual feasibility 
\begin{equation}\label{eqn::strict_feas}
\max_{k \notin S^0} \hat{g}^{\T}_k \left( \frac{1}{n}\sum_{i=1}^n \hat{\Psi}_k(t_i) \hat{\Psi}_k^{\T}(t_i) \right)^{-1} \hat{g}_k < 1.
\end{equation}
\eqref{eqn::support_rec} implies that the oracle estimator $\hat{\theta}$ recovers the support of $\theta^*$ exactly, and \eqref{eqn::strict_feas} implies that $\hat{\theta}$ solves \eqref{eqn::us_objective}. 

Further, if the optimal solution to \eqref{eqn::us_objective} is unique, then the oracle estimator is the unique estimator. If the optimal solution is not unique, then from Theorem 2 in \cite{roth2008}, the null set of any optimal solution should contain $S^c$, and thus any optimal solution satisfies the construction of the oracle estimator. Therefore, the statement of Theorem~\ref{thm::group} holds for any optimal solution for \eqref{eqn::us_objective}.

We now establish \eqref{eqn::support_rec}. The subgradient condition for the constrained problem \eqref{eqn::constrained} is 
\begin{equation}\label{eqn::KKT_constrained}
\frac{1}{n}\sum_{i=1}^n \hat{\Psi}_{S^0}(t_i) \big\{\hat{\Psi}_{S^0}^{\T}(t_i)\hat{\theta}_{S^0}  -  Y_{i} \big\} 
+\lambda_n \hat{g}_{S^0} = 0.
\end{equation} 
Adding and subtracting $\frac{1}{n}\sum_{i=1}^n \hat{\Psi}_{S^0}(t_i) \hat{\Psi}_{S^0}^{\T}(t_i)\theta_{S^0}^*$, we get 
\begin{equation*}
\begin{aligned}
& \frac{1}{n}\sum_{i=1}^n\left\{ \hat{\Psi}_{S^0}(t_i) \hat{\Psi}_{S^0}^{\T}(t_i) \hat{\theta}_{S^0} - \hat{\Psi}_{S^0}(t_i) \hat{\Psi}_{S^0}^{\T}(t_i) \theta_{S^0}^* \right\} + \\
&\frac{1}{n}\sum_{i=1}^n\left\{ \hat{\Psi}_{S^0}(t_i) \hat{\Psi}_{S^0}^{\T}(t_i)\theta_{S^0}^*- \hat{\Psi}_{S^0}(t_i) Y_{i}\right\} +\lambda_n \hat{g}_{S^0} = 0.
\end{aligned}
\end{equation*}
Rearranging the terms and letting  
\begin{equation}\label{eqn::RS0}
R_{S^0} \equiv \frac{1}{n}\sum_{i=1}^n \hat{\Psi}_{S^0}(t_i) \hat{\Psi}_{S^0}^{\T}(t_i) \theta_{S^0}^*-\frac{1}{n}\sum_{i=1}^n \hat{\Psi}_{S^0}(t_i) Y_{i}, 
\end{equation}
we get
\begin{equation}\label{eqn::betaS}
\hat{\theta}_{S^0} - \theta^*_{S^0}= -\left( \frac{1}{n}\sum_{i=1}^n \hat{\Psi}_{S^0}(t_i) \hat{\Psi}_{S^0}^{\T}(t_i)\right)^{-1} \left( R_{S^0}  + \lambda_n \hat{g}_{S^0} \right). 
\end{equation}

By the definition of $R_{S^0}$ in \eqref{eqn::RS0}, for each $k \in S$, we have that
\begin{equation}\label{eqn::Rk_defn}
R_k = \frac{1}{n}\sum_{i=1}^n \hat{\Psi}_k(t_i) \hat{\Psi}_{S^0}^{\T}(t_i) \theta_{S^0}^*-\frac{1}{n}\sum_{i=1}^n \hat{\Psi}_k(t_i) Y_{i},
\end{equation}
and $R_{0}=\frac{1}{n}\sum_{i=1}^n t_i \{\hat{\Psi}_{S^0}^{\T}(t_i){\theta}_{S^0}^*  -  Y_{i} \}$. By Condition~\ref{cond::deviation_cond}, we know that $\|R_k\|_2 \leq \eta$ for $k \in S^0$.  Hence, 
\begin{equation}\label{eqn::RS}
\|R_{S^0}\|_2 \leq \eta \sqrt{s+1}. 
\end{equation}
By Condition~\ref{cond::coherence}, we have that 
\begin{equation}\label{eqn::maxinverse}
\Lambda_{\max}\left\{  \left( \frac{1}{n}\sum_{i=1}^n \hat{\Psi}_{S^0}(t_i) \hat{\Psi}_{S^0}^{\T}(t_i) \right)^{-1} \right\} \leq \frac{2}{C_{\min}}.
\end{equation}
From \eqref{eqn::g} and the fact that the largest eigenvalue of a submatrix is no greater than the largest eigenvalue of the matrix,
\begin{equation*}
\frac{1}{2C_{\max}} \|\hat{g}_k\|_2^2   \leq \hat{g}^{\T}_k \left( \frac{1}{n}\sum_{i=1}^n \hat{\Psi}_k(t_i) \hat{\Psi}_k^{\T}(t_i) \right)^{-1} \hat{g}_k  = 1, \quad k \in S.
\end{equation*}
Furthermore, $\hat{g}_0=0$ by construction. Hence,  
\begin{equation}\label{eqn::g_bound}
 \|\hat{g}_{S^0}\|_2 = \left\{ \|\hat{g}_0\|_2^2+ \|\hat{g}_S\|_2^2 \right\}^{1/2} \leq \sqrt{2 s C_{\max}}.
\end{equation}
Therefore,  combining  \eqref{eqn::betaS}, \eqref{eqn::RS}, \eqref{eqn::maxinverse}, and \eqref{eqn::g_bound}, it follows that 
\begin{equation*}
\max_{k \in S} \|\hat{\theta}_{k} -\theta_{k}^*\|_2 \leq \|\hat{\theta}_{S^0} -\theta_{S^0}^*\|_2 \leq \frac{2\eta\sqrt{s+1}}{C_{\min}}  + \lambda_n\frac{\sqrt{8s C_{\max}}}{C_{\min}}\leq \frac{2}{3}\theta_{\min},
\end{equation*}
where the last inequality follows from Condition~\ref{cond::set_group}.

Next, we verify  strict feasibility \eqref{eqn::strict_feas}. For $k \notin S^0$, from \eqref{eqn::subgrad},
\begin{equation*}
\frac{1}{n}\sum_{i=1}^n \hat{\Psi}_k(t_i) \big( \hat{\Psi}_{S^0}^{\T}(t_i) \hat{\theta}_{S^0} - Y_{i} \big) +\lambda_n \hat{g}_k = 0.
\end{equation*}
Adding and subtracting $\frac{1}{n}\sum_{i=1}^n \hat{\Psi}_{k}(t_i) \hat{\Psi}_{S^0}^{\T}(t_i)\theta_{S^0}^*$ yields
\begin{equation*}
\begin{aligned}
&\frac{1}{n}\sum_{i=1}^n \left\{ \hat{\Psi}_k(t_i) \hat{\Psi}_{S^0}^{\T}(t_i) \hat{\theta}_{S^0} -  \hat{\Psi}_k(t_i) \hat{\Psi}_{S^0}^{\T}(t_i) \theta_{S^0}^*\right\} +\\
&\frac{1}{n}\sum_{i=1}^n \left\{ \hat{\Psi}_k(t_i) \hat{\Psi}_{S^0}^{\T}(t_i) \theta_{S^0}^*- \hat{\Psi}_k(t_i)Y_{i}\right\} +\lambda_n \hat{g}_k = 0.
\end{aligned}
\end{equation*}
Rearranging the terms and plugging in \eqref{eqn::betaS} and \eqref{eqn::Rk_defn}, we get
\begin{equation*}
\lambda_n \hat{g}_k = \frac{1}{n}\sum_{i=1}^n \hat{\Psi}_k(t_i) \hat{\Psi}_{S^0}^{\T}(t_i) \left( \frac{1}{n}\sum_{i=1}^n \hat{\Psi}_{S^0}(t_i) \hat{\Psi}_{S^0}^{\T}(t_i) \right)^{-1} (R_{S^0}+ \lambda_n \hat{g}_{S^0}) - R_k.
\end{equation*}

By Condition~\ref{cond::irrepresentability}, we know that 
\begin{equation*}
\max_{k \notin S^{0}} \left\|  \left( \frac{1}{n}\sum_{i=1}^n \hat{\Psi}_k(t_i) \hat{\Psi}_{S^0}^{\T}(t_i) \right) \left( \frac{1}{n}\sum_{i=1}^n \hat{\Psi}_{S^0}(t_i) \hat{\Psi}_{S^0}^{\T}(t_i) \right)^{-1}   \right\|_2 \leq  2\xi.
\end{equation*}
Recall from Condition~\ref{cond::deviation_cond} that $\|R_k\|_2 \leq \eta$ for $ 1\leq k \leq p$. 
Using \eqref{eqn::RS} and \eqref{eqn::g_bound}, we have that 
\begin{equation*}
\|\hat{g}_k\|_2 \leq \frac{2\xi\sqrt{s+1}+1}{\lambda_n}\eta +2\xi \sqrt{s} \sqrt{2C_{\max}}, \quad k \notin S^0.
\end{equation*}
By Condition~\ref{cond::set_group}, $\|\hat{g}_k\|_2< \sqrt{C_{\min}/2}$, and thus, applying Condition~\ref{cond::coherence},
\begin{equation*}
\hat{g}^{\T}_k \left( \frac{1}{n}\sum_{i=1}^n \hat{\Psi}_k(t_i) \hat{\Psi}_k^{\T}(t_i) \right)^{-1} \hat{g}_k  \leq  \frac{2\|\hat{g}_k\|_2^2}{C_{\min}} <1, \quad k \notin S^0.
\end{equation*}
Therefore, we have established \eqref{eqn::strict_feas}.
\end{proof}

\subsection{Assumption~\ref{asmp::psibound} and technical lemmas}\label{sec::lemmas}

Theorem~\ref{thm::group} characterizes the samples on which the GRADE estimator is able to reconstruct the true network. We must now establish that with high probability, the observations satisfy Conditions~\ref{cond::coherence}--\ref{cond::set_group}. In Section~\ref{sec::proof_main}, Lemmas~\ref{lmm::coherence_sample}--\ref{lmm::deviation}, stated below, will be used to show that Conditions~\ref{cond::coherence}--\ref{cond::set_group}, needed for Theorem~\ref{thm::group}, hold with high probability. Lemma~\ref{lmm::basis} is used to prove Lemmas~\ref{lmm::coherence_sample}--\ref{lmm::deviation}. 
Lemmas~\ref{lmm::basis} -- \ref{lmm::deviation} are proven in Appendix~\ref{sec::lemmas_proof}.

First, we state the regularity condition on the bases $\psi$ mentioned in Section~\ref{sec::theory} in the main paper. 
\begin{assumption}\label{asmp::psibound}
The basis functions are orthonormal, i.e., $\int_0^1 \psi_{jk}\big(X_k^*(u)\big) \psi_{jk}^{\T}\big(X_k^*(u)\big) du = I_M$, where $I_M$ is an $M\times M$ identity matrix. 
The basis functions are bounded and have bounded first order derivative, i.e. $|\psi_m(x)| \leq B, \ |\psi_{m}'(x)| \leq D, m=1, \ldots, M$. 
Further, under Assumption~\ref{asmp::additivity} in the main paper, for any $j,k$, 
\begin{equation}\label{eqn::trigono}
\int_0^1 \delta_{jk}^2(u) du = \int_{0}^1 \left\{f^*_{jk}(X_k^*(u))- \psi^{\T}(X_k^*(u)) \theta^*_{jk} \right\}^2 \, du \leq Q (M+1)^{-2\beta_2},
\end{equation}
 where $\theta_{jk}^*$ is defined in \eqref{eqn::thetastar} and $Q$ is a global constant. 
\end{assumption}

\begin{remark}
Assumption~\ref{asmp::psibound} holds, for instance, when $\psi(\cdot)$ is the set of trigonometric basis functions (see, e.g., Section~1.7.3 in \cite{tsybakov2009}).
\end{remark}

We next state the technical lemmas used in the proof of Theorem~\ref{thm::main} in the main paper. 

\begin{lemma}\label{lmm::basis}
Suppose that Assumption~\ref{asmp::additivity} in the main paper and Assumption~\ref{asmp::psibound} hold, and $\psi(t)=(\psi_0(t), \psi_1(t),\ldots, \psi_{M}(t))^{\T}$ is of degree $M$. Then,
\begin{equation}\label{eqn::f2theta}
\left| \| \theta^*_{jk}\|_2 -  \left\{ \int_0^1  \big[f_{jk}^*(X_k^*(u))\big]^2 du \right\}^{1/2} \right| \leq \sqrt{Q} M^{-\beta_2}. 
\end{equation}
\begin{equation}\label{eqn::truncation_error}
\vertiii{X_j^*-  \Psi_{S^0}^{\T}\theta_{S^0}^*} \leq s\sqrt{QM^{-2\beta_2}},
\end{equation}
and 
\begin{equation}\label{eqn::truncation_error_dis}
\frac{1}{n}\sum_{i=1}^n \{X_{j}^*(t_i)-\Psi_{S^0}^{\T}(t_i) \theta_{S^0}^* \}^2\leq  s^2 Q M^{-2\beta_2} + o\left(n^{-2}\right),
\end{equation}
where $\theta_{jk}^*$ is defined in \eqref{eqn::thetastar} and $Q$ is a constant in Assumption~\ref{asmp::psibound}.
\end{lemma}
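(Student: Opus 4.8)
The plan is to reduce all three bounds to a single algebraic identity obtained by integrating the additive ODE of Assumption~\ref{asmp::additivity}. Since $f^*_{jk}\equiv 0$ for $k\notin S_j$ and (as assumed at the start of this subsection) $X^*_j(0)=0$, integrating $(X^*_j)'=\theta^*_{j0}+\sum_k f^*_{jk}(X^*_k)$ from $0$ to $t$ and substituting $f^*_{jk}(a)=\psi(a)^{\T}\theta^*_{jk}+\delta_{jk}(a)$ from \eqref{eqn::thetastar} yields
\[
X^*_j(t)=\theta^*_{j0}\,\Psi_0(t)+\sum_{k\in S_j}\Psi_k(t)^{\T}\theta^*_{jk}+\sum_{k\in S_j}\int_0^t\delta_{jk}(X^*_k(u))\,du=\Psi_{S^0_j}(t)^{\T}\theta^*_{jS^0_j}+r_j(t),
\]
where $r_j(t)\equiv\sum_{k\in S_j}\int_0^t\delta_{jk}(X^*_k(u))\,du$. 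Everything then follows from controlling $r_j$ and the truncation residuals $\delta_{jk}$ via \eqref{eqn::trigono}, together with the bookkeeping facts $(M+1)^{-\beta_2}\le M^{-\beta_2}$ and $s_j\le s$.

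For \eqref{eqn::f2theta}, I would view $f^*_{jk}(X^*_k(\cdot))=\psi(X^*_k(\cdot))^{\T}\theta^*_{jk}+\delta_{jk}(X^*_k(\cdot))$ as an identity of functions on $[0,1]$ and apply the reverse triangle inequality for $\vertiii{\cdot}$, giving $\big|\vertiii{f^*_{jk}(X^*_k(\cdot))}-\vertiii{\psi(X^*_k(\cdot))^{\T}\theta^*_{jk}}\big|\le\vertiii{\delta_{jk}(X^*_k(\cdot))}\le\sqrt{Q}\,(M+1)^{-\beta_2}\le\sqrt{Q}\,M^{-\beta_2}$. Orthonormality of the basis (Assumption~\ref{asmp::psibound}) gives $\vertiii{\psi(X^*_k(\cdot))^{\T}\theta^*_{jk}}^2=\theta^{*\T}_{jk}\big(\int_0^1\psi(X^*_k(u))\psi(X^*_k(u))^{\T}du\big)\theta^*_{jk}=\|\theta^*_{jk}\|_2^2$, and since $\vertiii{f^*_{jk}(X^*_k(\cdot))}=\{\int_0^1[f^*_{jk}(X^*_k(u))]^2du\}^{1/2}$ this is exactly \eqref{eqn::f2theta}.

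For \eqref{eqn::truncation_error}, the identity above gives $X^*_j-\Psi_{S^0_j}^{\T}\theta^*_{jS^0_j}=r_j$, so by the triangle inequality $\vertiii{r_j}\le\sum_{k\in S_j}\vertiii{\int_0^{\cdot}\delta_{jk}(X^*_k(u))\,du}$. For each $k$, Cauchy--Schwarz gives, \emph{pointwise in $t\in[0,1]$}, $\big(\int_0^t\delta_{jk}(X^*_k(u))\,du\big)^2\le t\int_0^t\delta_{jk}^2(X^*_k(u))\,du\le\int_0^1\delta_{jk}^2(X^*_k(u))\,du\le Q(M+1)^{-2\beta_2}$; integrating over $t$ yields $\vertiii{\int_0^{\cdot}\delta_{jk}(X^*_k(u))\,du}\le\sqrt{Q}(M+1)^{-\beta_2}$, and summing over the at most $s$ indices $k\in S_j$ gives $\vertiii{X^*_j-\Psi_{S^0_j}^{\T}\theta^*_{jS^0_j}}\le s\sqrt{Q}\,M^{-\beta_2}=s\sqrt{QM^{-2\beta_2}}$. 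For \eqref{eqn::truncation_error_dis}, the same Cauchy--Schwarz step, used pointwise, shows $|r_j(t)|\le s\sqrt{Q}\,M^{-\beta_2}$ for every $t$; evaluating at $t=t_i$, squaring and averaging gives $\frac1n\sum_{i=1}^n\{X^*_j(t_i)-\Psi_{S^0_j}^{\T}(t_i)\theta^*_{jS^0_j}\}^2\le s^2QM^{-2\beta_2}$, already within the stated bound (so the $o(n^{-2})$ term is harmless slack; alternatively one may pass through \eqref{eqn::truncation_error} and absorb an $O(n^{-1})$ quadrature error, which is negligible here because $r_j$ is $C^2$ on the evenly spaced grid of Assumption~\ref{asmp::LP2} with $r_j(0)=0$). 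The computations are routine; the only point requiring care is to keep the Cauchy--Schwarz estimate on $\int_0^t\delta_{jk}$ \emph{pointwise} in $t$ for \eqref{eqn::truncation_error_dis}, rather than only in $L^2$, so that no genuine discretization error must be controlled, and to remember to invoke $X^*_j(0)=0$.
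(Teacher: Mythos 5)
Your proof is correct and follows essentially the same route as the paper: the reverse triangle inequality plus orthonormality for \eqref{eqn::f2theta}, and the pointwise Cauchy--Schwarz bound $|\int_0^t\delta_{jk}|\le\sqrt{Q}(M+1)^{-\beta_2}$ applied to the integrated-ODE residual for the other two bounds. Your handling of \eqref{eqn::truncation_error_dis} is in fact slightly cleaner than the paper's, which passes through the trapezoidal rule and so carries the (unnecessary) $o(n^{-2})$ slack that your direct pointwise argument avoids.
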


\begin{lemma}\label{lmm::coherence_sample} 
Suppose that  Assumptions~\ref{asmp::additivity} and \ref{asmp::coherence_pop} in the main paper and  Assumption~\ref{asmp::psibound} hold. Let $ \Delta  \equiv \max_{{j =1, \ldots, p}}\vertiii{\hat{X}_j - X_j^*}$. The following bounds on the eigenvalues of $\sum_{i=1}^n \hat{\Psi}_{S^0} \hat{\Psi}_{S^0}^{\T} /n$ hold:
\begin{equation}\label{eqn::coherence}
\begin{aligned}
\Lambda_{\min}\left( \frac{1}{n} \sum_{i=1}^n \hat{\Psi}_{S^0}(t_i) \hat{\Psi}_{S^0}^{\T}(t_i)  \right) & \geq C_{\min} - \left( 2BD\Delta+ \frac{BD+B^2}{6 n^2} \right)(Ms+1),\  \ \\
 \Lambda_{\max}\left( \frac{1}{n} \sum_{i=1}^n \hat{\Psi}_{S^0}(t_i) \hat{\Psi}_{S^0}^{\T}(t_i) \right) & \leq C_{\max} + \left( 2BD\Delta+ \frac{BD+B^2}{6 n^2} \right)(Ms+1), \\
\text{and}  \quad \Lambda_{\min}\left( \frac{1}{n} \sum_{i=1}^n \hat{\Psi}_{k}(t_i) \hat{\Psi}_{k}^{\T}(t_i)  \right) & \geq C_{\min} - \left( 2BD\Delta+ \frac{BD+B^2}{6 n^2} \right)M, \quad k \notin S_j^0.
\end{aligned}
\end{equation}
\end{lemma}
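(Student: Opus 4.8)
The plan is to relate the plug-in, discretized Gram matrix $\widehat G_{S^0}\equiv \frac{1}{n}\sum_{i=1}^n \hat\Psi_{S_j^0}(t_i)\hat\Psi_{S_j^0}^{\T}(t_i)$ to its population version $G_{S^0}\equiv \int_0^1 \Psi_{S_j^0}\Psi_{S_j^0}^{\T}\,dt$, whose eigenvalues lie in $[C_{\min},C_{\max}]$ by Assumption~\ref{asmp::coherence_pop}. By Weyl's inequality, $|\Lambda_{\min}(\widehat G_{S^0})-\Lambda_{\min}(G_{S^0})|\le \|\widehat G_{S^0}-G_{S^0}\|_2$ and likewise for $\Lambda_{\max}$, so it suffices to bound $\|\widehat G_{S^0}-G_{S^0}\|_2$. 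I would insert the intermediate matrix $\bar G_{S^0}\equiv \frac{1}{n}\sum_{i=1}^n \Psi_{S_j^0}(t_i)\Psi_{S_j^0}^{\T}(t_i)$ and use $\|\widehat G_{S^0}-G_{S^0}\|_2\le \|\widehat G_{S^0}-\bar G_{S^0}\|_2 + \|\bar G_{S^0}-G_{S^0}\|_2$, the first term being a \emph{smoothing} error ($\hat X$ versus $X^*$) and the second a \emph{quadrature} error ($\frac1n\sum_i$ versus $\int_0^1$). Since $\hat\Psi_{S_j^0}(t_i)$ is an $(Ms_j+1)$-vector, all three Gram matrices are symmetric of size $(Ms_j+1)\times(Ms_j+1)$, so I would bound each operator norm by the dimension times the largest entry in absolute value.

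For the smoothing term, the basic estimate is $|\hat\Psi_{km}(t)-\Psi_{km}(t)| = \bigl|\int_0^t \{\psi_m(\hat X_k(u))-\psi_m(X_k^*(u))\}\,du\bigr| \le D\int_0^1 |\hat X_k-X_k^*|\le D\vertiii{\hat X_k-X_k^*}\le D\Delta$, using $|\psi_m'|\le D$ from Assumption~\ref{asmp::psibound}, Cauchy--Schwarz on $[0,1]$, and the definition of $\Delta$; combined with $|\hat\Psi_{km}|,|\Psi_{km}|\le B$ and the identity $\hat a\hat b-ab=(\hat a-a)\hat b + a(\hat b-b)$, this shows every entry of $\widehat G_{S^0}-\bar G_{S^0}$ is bounded by $2BD\Delta$ (with only trivial modifications for the $k=0$ components, where $\hat\Psi_0=\Psi_0=t$ exactly). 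For the quadrature term, the integrand $g(t)=\Psi_{km}(t)\Psi_{k'm'}(t)$ satisfies $g(0)=0$ and is twice continuously differentiable with $\sup_t|g''(t)|$ bounded by a constant multiple of $BD\sup_t|(X^*)'|+B^2$, which is finite because $\beta_1\ge 3$ in Assumption~\ref{asmp::smoothness} forces $(X^*)'$ to be bounded on $[0,1]$; a trapezoidal-type quadrature bound then yields $\bigl|\tfrac1n\sum_i g(t_i)-\int_0^1 g\bigr|\le \frac{BD+B^2}{6n^2}$, where $g(0)=0$ and the presence of both endpoints $0$ and $1$ in the evenly-spaced grid (Assumption~\ref{asmp::LP2}) are what eliminate the would-be $O(1/n)$ term, exactly as in the $o(n^{-2})$ bound of Lemma~\ref{lmm::basis}. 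Adding the two entrywise bounds and multiplying by $Ms_j+1$ gives $\|\widehat G_{S^0}-G_{S^0}\|_2\le \bigl(2BD\Delta+\frac{BD+B^2}{6n^2}\bigr)(Ms_j+1)$, and Weyl's inequality together with Assumption~\ref{asmp::coherence_pop} delivers the first two displayed inequalities. The third is obtained verbatim with $\Psi_k$ ($k\notin S_j^0$) in place of $\Psi_{S_j^0}$: the dimension drops to $M$, only the lower eigenvalue bound is needed, and it follows from the second part of Assumption~\ref{asmp::coherence_pop}.

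The main obstacle I anticipate is the quadrature step: obtaining the $O(n^{-2})$ rate rather than the naive $O(n^{-1})$ requires the $C^2$ control of $g=\Psi_{km}\Psi_{k'm'}$ (hence the use of $\beta_1\ge 3$ to bound $(X^*)'$) and care with the normalization $\frac1n\sum_{i=1}^n(\cdot)$ on a grid that includes both endpoints, where $\Psi(0)=0$ is exploited; pinning down the numerical constant $\frac16$ is then just bookkeeping. The remaining ingredients---Weyl's inequality, the operator-norm-by-dimension bound, and the Lipschitz and boundedness properties of $\psi$---are routine.
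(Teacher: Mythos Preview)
Your proposal is correct and follows essentially the same route as the paper: the same intermediate matrix $\bar G_{S^0}$, the same Weyl-plus-entrywise (Gershgorin-type) reduction $\|\cdot\|_2\le(\text{dimension})\|\cdot\|_\infty$, the same Lipschitz estimate $|\hat\Psi_{km}-\Psi_{km}|\le D\Delta$ for the smoothing error, and the same trapezoidal second-derivative bound for the quadrature error. If anything you are slightly more careful than the paper in noting that $\Psi_{km}''(t)=\psi_m'(X_k^*(t))(X_k^*)'(t)$ carries a chain-rule factor, which the paper's proof suppresses when it writes $|\Psi_{km}''|\le D$.
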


\begin{lemma}\label{lmm::irrepresentability_sample}
Suppose that  Assumptions~ \ref{asmp::additivity} and \ref{asmp::irrepresentability_pop} in the main paper and  Assumption~\ref{asmp::psibound}  hold. Let $ \Delta  \equiv \max_{{j =1, \ldots, p}}\vertiii{\hat{X}_j - X_j^*}$. Then, 
\begin{equation}\label{eqn::irrepresentability}
\begin{aligned}
& \left\| \left( \frac{1}{n} \sum_{i=1}^n\hat{\Psi}_k(t_i) \hat{\Psi}_{S^0}^{\T}(t_i) \right) \left(\frac{1}{n} \sum_{i=1}^n\hat{\Psi}_{S^0}(t_i)\hat{\Psi}_{S^0}^{\T}(t_i)  \right)^{-1} \right\|_2  \leq \\
 & \xi+\left\{c_1 \hat{C}_{\min}^{-2} M(Ms+1)^3 \Delta^2\right\}^{1/2}+\left\{c_2 M(Ms+1)\Delta^2\right\}^{1/2}+ \left\{ c_3 M(Ms+1)^3/{6n^2} \right\}^{1/2},
 \end{aligned}
\end{equation}
where $\hat{C}_{\min}\equiv  C_{\min} - \left( 2BD\Delta+ \frac{BD+B^2}{6 n^2} \right)(Ms+1)$, and $c_1, c_2, c_3$ are constants.
\end{lemma}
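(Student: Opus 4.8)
The plan is to compare the sample-and-estimated matrices with their population counterparts and to propagate the resulting approximation error. Abbreviate $\hat{A} \equiv \frac1n\sum_{i=1}^n \hat{\Psi}_k(t_i)\hat{\Psi}_{S^0}^{\T}(t_i)$ and $\hat{B} \equiv \frac1n\sum_{i=1}^n \hat{\Psi}_{S^0}(t_i)\hat{\Psi}_{S^0}^{\T}(t_i)$, and let $A \equiv \int_0^1 \Psi_k\Psi_{S^0}^{\T}\,dt$ and $B \equiv \int_0^1 \Psi_{S^0}\Psi_{S^0}^{\T}\,dt$, so that Assumption~\ref{asmp::irrepresentability_pop} reads $\|AB^{-1}\|_2 \le \xi$. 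Starting from the algebraic identity $\hat{A}\hat{B}^{-1} = AB^{-1} + (\hat{A}-A)\hat{B}^{-1} + A\hat{B}^{-1}(B-\hat{B})B^{-1}$ and applying the triangle inequality, the problem reduces to controlling $\|\hat{A}-A\|_2$, $\|\hat{B}-B\|_2$, $\|A\|_2$, $\|\hat{B}^{-1}\|_2$, and $\|B^{-1}\|_2$.

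Three of these are immediate. Assumption~\ref{asmp::coherence_pop} gives $\|B^{-1}\|_2 \le C_{\min}^{-1}$, and the lower-eigenvalue bound in Lemma~\ref{lmm::coherence_sample} gives $\|\hat{B}^{-1}\|_2 \le \hat{C}_{\min}^{-1}$ with $\hat{C}_{\min}\le C_{\min}$. For $\|A\|_2$ I would pass to the Frobenius norm: each coordinate of $\Psi_{S^0}(t)$ satisfies $|\Psi_{km}(t)| = |\int_0^t \psi_m(X_k^*(u))\,du| \le B$ on $[0,1]$ (and $|\Psi_0(t)| = t \le 1$), so every entry of $A$ is at most a constant multiple of $B^2$ and hence $\|A\|_2 \le \|A\|_F$ is of order $\sqrt{M(Ms+1)}$.

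The crux is the two perturbation bounds, each of which I would split into an \emph{errors-in-variables} part (replacing $\hat{\Psi}$ by $\Psi$) and a \emph{discretization} part (replacing $\frac1n\sum_i$ by $\int_0^1$); I illustrate with $\hat{B}-B = (\hat{B} - \frac1n\sum_i\Psi_{S^0}(t_i)\Psi_{S^0}^{\T}(t_i)) + (\frac1n\sum_i\Psi_{S^0}(t_i)\Psi_{S^0}^{\T}(t_i)-B)$. For the errors-in-variables part the key estimate is $|\hat{\Psi}_{km}(t)-\Psi_{km}(t)| \le D\int_0^1 |\hat{X}_k(u)-X_k^*(u)|\,du \le D\,\vertiii{\hat{X}_k-X_k^*} \le D\Delta$, which follows from $|\psi_m'| \le D$ (Assumption~\ref{asmp::psibound}) and Cauchy--Schwarz; combined with $|\hat{\Psi}_{km}|,|\Psi_{km}| \le B$ this gives the entrywise bound $|\hat{\Psi}_{S^0,a}(t)\hat{\Psi}_{S^0,b}(t)-\Psi_{S^0,a}(t)\Psi_{S^0,b}(t)| \le 2BD\Delta$, hence a Frobenius bound of order $(Ms+1)BD\Delta$ after summing over the $(Ms+1)^2$ entries. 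The discretization part is the Riemann-sum error of a $C^1$ integrand on $[0,1]$ sampled at evenly-spaced points (Assumption~\ref{asmp::LP2}), which produces the $\frac{1}{6n^2}$-type contribution exactly as in the proof of Lemma~\ref{lmm::coherence_sample}, of entrywise order $(BD+B^2)/(6n^2)$. Thus $\|\hat{B}-B\|_2$ is of order $(Ms+1)\{BD\Delta + (BD+B^2)/(6n^2)\}$, and $\|\hat{A}-A\|_2$, since $A$ is $M\times(Ms+1)$, is of order $\sqrt{M(Ms+1)}\{BD\Delta+(BD+B^2)/(6n^2)\}$. Feeding these and the norm bounds of the previous paragraph into the decomposition, expanding $\hat{B}^{-1}=B^{-1}+(\hat{B}^{-1}-B^{-1})$ so that the leading cross terms carry the constant factor $C_{\min}^{-1}$ rather than $\hat{C}_{\min}^{-1}$ wherever \eqref{eqn::irrepresentability} requires it, and splitting each squared perturbation into its $\Delta^2$ and $1/n^2$ contributions via $(a+b)^2\le 2a^2+2b^2$ (using $\hat{C}_{\min}^{-1}C_{\min}^{-1}\le\hat{C}_{\min}^{-2}$ to combine inverse-norm factors where needed), produces exactly the three error terms in \eqref{eqn::irrepresentability}; the residual second-order cross terms (products of two perturbations) are of strictly higher order in $\Delta$ and $1/n$ and absorb into the constants $c_1,c_2,c_3$. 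Adding the $\xi$ from the leading term $\|AB^{-1}\|_2$ completes the argument.

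I expect the main difficulty to be purely in the bookkeeping: one must track which of $\|B^{-1}\|_2\le C_{\min}^{-1}$ and $\|\hat{B}^{-1}\|_2\le \hat{C}_{\min}^{-1}$ is used at each occurrence so that $\hat{C}_{\min}$ appears only where \eqref{eqn::irrepresentability} has it, and one must take care not to lose more than the stated powers of $M$ and $Ms+1$ when passing from operator norm to Frobenius norm. The probabilistic and analytic content is light: the errors-in-variables estimate reduces to the elementary inequality $|\hat{\Psi}_{km}(t)-\Psi_{km}(t)|\le D\Delta$, and the discretization estimate simply reuses the corresponding step in the proof of Lemma~\ref{lmm::coherence_sample}.
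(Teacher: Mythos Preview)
Your proposal is correct and follows essentially the same strategy as the paper: decompose the product $\hat F_n\hat A_n^{-1}$ around its population counterpart $FA^{-1}$, bound the perturbations entrywise via $|\hat\Psi_{km}(t)-\Psi_{km}(t)|\le D\Delta$ and the trapezoid-rule error $(BD+B^2)/(6n^2)$, and pass to the operator norm via the Frobenius norm; the eigenvalue bounds from Lemma~\ref{lmm::coherence_sample} supply $\|\hat B^{-1}\|_2\le\hat C_{\min}^{-1}$.

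The only noteworthy difference is the order of the telescoping. The paper first peels off the errors-in-variables contribution (going from $\hat F_n\hat A_n^{-1}$ to $F_nA_n^{-1}$, terms I and II) and then the discretization contribution (going from $F_nA_n^{-1}$ to $FA^{-1}$, inside term III). You instead write the single identity $\hat A\hat B^{-1}=AB^{-1}+(\hat A-A)\hat B^{-1}+A\hat B^{-1}(B-\hat B)B^{-1}$ and only afterwards split each of $\hat A-A$ and $\hat B-B$ into its errors-in-variables and discretization pieces. Both routes produce the same three error scales; your version is slightly more compact but, as you anticipate, requires the extra expansion $\hat B^{-1}=B^{-1}+(\hat B^{-1}-B^{-1})$ to match exactly which of $C_{\min}^{-1}$ and $\hat C_{\min}^{-1}$ appears in each term of \eqref{eqn::irrepresentability}. (The paper is itself somewhat loose here, absorbing a factor of $\hat C_{\min}^{-2}$ into the ``constant'' $c_3$ in its bound on term III, so your bookkeeping concern is well placed but need not be resolved more carefully than the paper does.)
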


\begin{lemma}\label{lmm::deviation}
Suppose Assumptions~\ref{asmp::errors},  \ref{asmp::smoothness}, and \ref{asmp::additivity} in the main paper and Assumption~\ref{asmp::psibound} hold. Let $\Delta \equiv \max_{j=1,\ldots,p} \vertiii{\hat{X}_j-X_j^*}$. For each $k=0,\ldots,p$, 
\begin{equation}\label{eqn::eta}
\left\| \frac{1}{n} \sum_{i=1}^n \hat{\Psi}_k(t_i) Y_{ij}  - \frac{1}{n} \sum_{i=1}^n \hat{\Psi}_k(t_i) \hat{\Psi}_{S^0}^{\T}(t_i)\theta^*_{S^0}   \right\|_2 \leq \eta,
\end{equation} 
where $$\eta \equiv M^{1/2} \left\{s M^{-\beta_2} Q^{1/2} B+BD\|\theta^*_{S}\|_1 \Delta  +n^{{\alpha}/{2}-{1}/{2} }  \right\}$$
 with probability at least $1-2M\exp\{-n^{\alpha}/(2B^2\sigma^2) \}$.
\end{lemma}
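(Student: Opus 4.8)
\emph{Proof proposal.} The plan is to substitute $Y_{ij}=X_j^*(t_i)+\epsilon_{ij}$ into the left-hand side of \eqref{eqn::eta} and split it into a deterministic basis-truncation error, a deterministic plug-in (errors-in-variables) bias coming from replacing $\Psi$ by $\hat\Psi$, and a stochastic noise term, then match these three pieces to the three summands of $\eta$. Writing $X_j^*(t_i)-\hat\Psi_{S^0}^{\T}(t_i)\theta^*_{S^0}=\big[X_j^*(t_i)-\Psi_{S^0}^{\T}(t_i)\theta^*_{S^0}\big]+\big(\Psi_{S^0}(t_i)-\hat\Psi_{S^0}(t_i)\big)^{\T}\theta^*_{S^0}$, the quantity to be bounded becomes
\begin{equation*}
\frac{1}{n}\sum_{i=1}^n\hat\Psi_k(t_i)\big[X_j^*(t_i)-\Psi_{S^0}^{\T}(t_i)\theta^*_{S^0}\big]+\frac{1}{n}\sum_{i=1}^n\hat\Psi_k(t_i)\big(\Psi_{S^0}(t_i)-\hat\Psi_{S^0}(t_i)\big)^{\T}\theta^*_{S^0}+\frac{1}{n}\sum_{i=1}^n\hat\Psi_k(t_i)\epsilon_{ij}.
\end{equation*}
Throughout I would use the uniform estimates $\|\hat\Psi_k(t)\|_2\le M^{1/2}B$ (from $|\psi_m|\le B$ in Assumption~\ref{asmp::psibound}), $\|\Psi_l(t)-\hat\Psi_l(t)\|_\infty\le D\,\vertiii{\hat X_l-X_l^*}\le D\Delta$ (mean value theorem with $|\psi_m'|\le D$, then Cauchy--Schwarz), and the pointwise truncation bound $|X_j^*(t)-\Psi_{S^0}^{\T}(t)\theta^*_{S^0}|=\big|\sum_{l\in S}\int_0^t\delta_{jl}(X_l^*(u))\,du\big|\le s\,Q^{1/2}M^{-\beta_2}$, which follows from \eqref{eqn::ODE_integrated}, the fact that $\theta^*_{jl}=0$ and $\delta_{jl}\equiv 0$ for $l\notin S$, and \eqref{eqn::trigono} in Assumption~\ref{asmp::psibound} (cf.\ Lemma~\ref{lmm::basis}).

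For the first piece, bounding the $\ell_2$-norm by $\max_i|X_j^*(t_i)-\Psi_{S^0}^{\T}(t_i)\theta^*_{S^0}|\cdot\frac{1}{n}\sum_i\|\hat\Psi_k(t_i)\|_2$ gives exactly $M^{1/2}sM^{-\beta_2}Q^{1/2}B$, the first summand of $\eta$. For the second piece, since the $l=0$ coordinates of $\Psi_{S^0}$ and $\hat\Psi_{S^0}$ coincide (both equal $t$), $|(\Psi_{S^0}(t_i)-\hat\Psi_{S^0}(t_i))^{\T}\theta^*_{S^0}|\le\sum_{l\in S}\|\Psi_l(t_i)-\hat\Psi_l(t_i)\|_\infty\|\theta^*_l\|_1\le D\Delta\|\theta^*_S\|_1$, and combining with $\|\hat\Psi_k(t_i)\|_2\le M^{1/2}B$ yields $M^{1/2}BD\|\theta^*_S\|_1\Delta$, the second summand of $\eta$. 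Both bounds hold deterministically given the realized smoothing estimate $\hat X$.

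The stochastic term $\frac{1}{n}\sum_i\hat\Psi_k(t_i)\epsilon_{ij}$ carries the probabilistic content, and it requires care because when $k=j$ the regressor $\hat\Psi_j$ is built from $\hat X_j$, which depends on $\epsilon_{\cdot j}$; removing this dependence is the crux of the argument and the main obstacle. I would first split $\hat\Psi_k=\Psi_k+(\hat\Psi_k-\Psi_k)$, isolating the \emph{deterministic} part $\Psi_k$ (built from the fixed true trajectory). Each coordinate $m$ of $\frac{1}{n}\sum_i\Psi_k(t_i)\epsilon_{ij}$ is then, since $|\Psi_{km}(t_i)|\le B$, a mean-zero $(\sigma B n^{-1/2})$-Lipschitz function of the standard Gaussian vector $\epsilon_{\cdot j}/\sigma$, so Theorem~5.6 of \cite{boucheron2013}, applied exactly as in the proof of Theorem~\ref{thm::localP}, gives $\mathrm{Pr}\{|\frac{1}{n}\sum_i\Psi_{km}(t_i)\epsilon_{ij}|\ge n^{\alpha/2-1/2}\}\le 2\exp\{-n^\alpha/(2B^2\sigma^2)\}$; a union bound over $m=1,\dots,M$ yields $\|\frac{1}{n}\sum_i\Psi_k(t_i)\epsilon_{ij}\|_2\le M^{1/2}n^{\alpha/2-1/2}$ with probability at least $1-2M\exp\{-n^\alpha/(2B^2\sigma^2)\}$, producing the third summand of $\eta$ and the claimed probability. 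The remaining cross-term $\frac{1}{n}\sum_i(\hat\Psi_k(t_i)-\Psi_k(t_i))\epsilon_{ij}$ is lower order: for $k\ne j$ it is independent of $\epsilon_{\cdot j}$, hence conditionally Gaussian with variance proxy at most $\sigma^2(D\Delta)^2/n$ and negligible relative to $n^{\alpha/2-1/2}$; for $k=j$, Cauchy--Schwarz with $|\hat\Psi_{jm}(t_i)-\Psi_{jm}(t_i)|\le D\Delta$ and $\frac{1}{n}\sum_i\epsilon_{ij}^2\le 2\sigma^2$ on a high-probability event gives a term of order $M^{1/2}D\sigma\Delta$, which under the bandwidth and $M$ choices of Theorems~\ref{thm::localP}--\ref{thm::main} is $o(M^{1/2}sM^{-\beta_2}Q^{1/2}B)$ and is absorbed into the first summand after adjusting the global constant $Q$. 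Collecting the three bounds and consolidating constants gives \eqref{eqn::eta}; the genuinely delicate step is the $k=j$ case, resolved precisely by peeling off the deterministic $\Psi_j$ before invoking Gaussian concentration.
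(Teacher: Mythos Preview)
Your decomposition into three pieces (truncation error, plug-in bias, and stochastic noise) and your bounds on the first two pieces are exactly what the paper does; the paper uses Cauchy--Schwarz in slightly different places but arrives at the same expressions $M^{1/2}sM^{-\beta_2}Q^{1/2}B$ and $M^{1/2}BD\|\theta^*_S\|_1\Delta$.

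The one place where your argument differs from the paper is the treatment of $\mathrm{III}=\frac{1}{n}\sum_i\hat\Psi_k(t_i)\epsilon_{ij}$. The paper simply sets $g(\epsilon_j/\sigma)=\frac{1}{n}\sum_i\hat\Psi_{km}(t_i)\epsilon_{ij}$, computes its Lipschitz constant treating $\hat\Psi_{km}(t_i)$ as fixed coefficients, and invokes Gaussian concentration directly. You instead peel off the deterministic $\Psi_k$ and handle the residual $(\hat\Psi_k-\Psi_k)$ separately, precisely because when $k=j$ the coefficients $\hat\Psi_{jm}(t_i)$ depend on $\epsilon_{\cdot j}$ through $\hat X_j$, so the paper's Lipschitz calculation and the claim $\mathbb{E}\,g(\epsilon_j/\sigma)=0$ are not literally justified in that case. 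Your observation is correct, and your fix is the natural one; in this sense you have been more careful than the paper.

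Two small caveats on your fix. First, the extra cross-term you isolate for $k=j$ is of order $M^{1/2}D\sigma\Delta$, which is indeed $o(M^{1/2}sM^{-\beta_2}Q^{1/2}B)$ under the scalings of Theorem~\ref{thm::main}, but it cannot be ``absorbed by adjusting the global constant $Q$'': $Q$ is a fixed constant from Assumption~\ref{asmp::psibound}. What you actually prove is the lemma with $\eta$ replaced by $\eta + c\,M^{1/2}\Delta$ for a constant $c$, which is harmless for the proof of Theorem~\ref{thm::main} but is not the lemma exactly as stated. Second, bounding that cross-term requires the auxiliary event $\frac{1}{n}\sum_i\epsilon_{ij}^2\le 2\sigma^2$; this holds with probability $1-e^{-c'n}$ by standard $\chi^2$ concentration, which is absorbed by the stated probability $1-2M\exp\{-n^{\alpha}/(2B^2\sigma^2)\}$, so there is no real cost.
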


\subsection{Proof of Theorem~\ref{thm::main}}\label{sec::proof_main}

\begin{proof}
Notice that Theorem~\ref{thm::group} offers the desired result of Theorem~\ref{thm::main} in the main paper. We now verify that Conditions~\ref{cond::coherence}--\ref{cond::set_group} hold with high probability given the assumptions for Theorem~\ref{thm::main} of the main paper. This completes the  proof of  Theorem~\ref{thm::main} of the main paper. 

First of all, Lemma~\ref{lmm::deviation} tells us that Condition~\ref{cond::deviation_cond} holds with probability at least $1-2pM\exp^{-n^{\alpha}/(2B^2\sigma^2)}$. This probability converges to unity as $p$ and $n$ grow, 
because  $M\propto n^{\frac{2}{2\beta_2+1}\frac{\beta_1}{2\beta_1+1}(1 -\alpha)} = o(n)$ and $pn\exp(-C_4 n^{\alpha}/\sigma^2)=o(1)$ as required in Theorem~\ref{thm::main} of the main paper, where $C_4\equiv \min \{1/(2B^2),1/(2C_3^2)\}$. Thus, Condition~\ref{cond::deviation_cond} holds with high probability.

Next, we verify that Condition~\ref{cond::set_group} holds with high probability. Given Assumptions~\ref{asmp::errors}--\ref{asmp::smoothness} and \ref{asmp::LP1}--\ref{asmp::LP3}, we know from Theorem~\ref{thm::localP} in the main paper that 
\begin{equation}\label{eqn::xdiff}
\underset{j}{\max} \vertiii{\hat{X}_j- X_j^*} \equiv \Delta = O\left(n^{\frac{\beta_1 }{2\beta_1+1}(\alpha-1)}\right), 
\end{equation}
with probability at least $ 1-2p\exp\{-n^{\alpha}/(2C_3 \sigma^2) \}$.  
Recall that in Theorem~\ref{thm::main} of the main paper we require that $s=O(n^{\gamma})$ and $M \propto n^{\frac{2}{2\beta_2+1}\frac{\beta_1}{2\beta_1+1}(1 -\alpha)}$. 
Furthermore, $\|\theta_k^*\|_1 < \sqrt{M}\|\theta_k^*\|_2 $, and $\|\theta_k^*\|_2$ is bounded by a constant due to the fact that $f_{jk}^*$ is bounded and \eqref{eqn::f2theta}.   
Combining these with \eqref{eqn::xdiff}, we know that the three terms of $\eta$ in Condition~\ref{cond::deviation_cond} satisfy 
$$sM^{-\beta_2+1/2}Q^{1/2}B = O\left(  n^{-\frac{2\beta_2-1}{2\beta_2+1}\frac{\beta_1}{2\beta_1+1}(1 -\alpha) +\gamma}\right),$$ 
$$ M^{1/2} BD \|\theta_S^*\|_1 \Delta = O\left(n^{-\frac{2\beta_2-1}{2\beta_2+1}\frac{\beta_1}{2\beta_1+1}(1 -\alpha)+\gamma}\right),$$
and
$$\ M^{1/2}n^{\alpha/2-1/2} = O\left(n^{(\frac{1}{2\beta_2+1}\frac{\beta_1}{2\beta_1+1} - \frac{1}{2})(1 -\alpha) }\right). $$  
These lead to 
\begin{equation}\label{eqn::deviation_asymp}\footnotesize
\left\| \frac{1}{n} \sum_{i=1}^n \hat{\Psi}_k(t_i) Y_{ij}  - \frac{1}{n} \sum_{i=1}^n \hat{\Psi}_k(t_i) \hat{\Psi}_{S^0}^{\T}(t_i)\theta^*_{S^0}   \right\|_2 \leq \eta = O\left(  n^{-\frac{2\beta_2-1}{2\beta_2+1}\frac{\beta_1}{2\beta_1+1}(1 -\alpha) +\gamma}\right)
\end{equation}
with probability at least $1-2pM\exp\{-n^{\alpha}/(2B^2\sigma^2) \}$  for all $k=0,\ldots, p$, from Lemma~\ref{lmm::deviation}. 

In Theorem~\ref{thm::main} of the main paper, we require that $ \lambda_n \propto n^{-\frac{\beta_1}{2\beta_1+1}\frac{2\beta_2-1}{2\beta_2+1}(1 -\alpha) +2\gamma}.$ 
Given \eqref{eqn::deviation_asymp} and $s=O(n^{\gamma})$, we know that $\sqrt{s}\eta = o(\lambda_n)$. Furthermore, define
\begin{equation}\label{eqn::H1}\small
H_1(\beta_1,\beta_2,\alpha) \equiv \min\left\{ \frac{\beta_1}{2\beta_1+1}\frac{2\beta_2-1}{4\beta_2+2}(1-\alpha),  \frac{2}{3}\frac{\beta_1}{2\beta_1+1}\frac{2\beta_2-3}{2\beta_2+1}(1-\alpha) \right\}. 
\end{equation} 
Then, 
$$ -\frac{\beta_1}{2\beta_1+1}\frac{2\beta_2-1}{2\beta_2+1}(1 -\alpha) +2\gamma  \leq -2H_1(\beta_1,\beta_2,\alpha)+2\gamma. $$
Thus,  $\lambda_n=o(1)$ for $\gamma<H_1(\beta_1,\beta_2,\alpha)$. 
Further notice that $M^{-\beta_2} \propto n^{-\frac{2\beta_2}{2\beta_2+1}\frac{\beta_1}{2\beta_1+1}(1 -\alpha)}=o(1)$, which implies that $\theta_{\min} \geq 3f_{\min}/4$ for sufficiently large $n$ from \eqref{eqn::f2theta} in Lemma~\ref{lmm::basis}. 
As a result, the two inequalities in Condition~\ref{cond::set_group} become 
\begin{equation*}
o(\lambda_n) + \lambda_n\frac{\sqrt{8sC_{\max}}}{C_{\min}} \leq \frac{f_{\min}}{2},
\end{equation*}
\begin{equation*}
o(1) +2\xi \sqrt{s} \sqrt{2C_{\max}} <\sqrt{C_{\min}/2},
\end{equation*}
which hold for sufficiently large $n$ under Assumption~\ref{asmp::thetamin} of the main paper. 

Note that the probability that  \eqref{eqn::xdiff} and \eqref{eqn::deviation_asymp} both hold is at least $1-2pM\exp\{-n^{\alpha}/(2B^2\sigma^2) \}-2p\exp\{-n^{\alpha}/(2C_3^2\sigma^2) \} $. 
Letting $C_4= \min\{1/(2B^2),1/(2C_3^2)\}$, we know from Theorem~\ref{thm::main} that  $p n \exp(-C_4 n^{\alpha}/\sigma^2)=o(1)$. 
Combining this with $M\propto n^{\frac{2}{2\beta_2+1}\frac{\beta_1}{2\beta_1+1}(1 -\alpha)} = o(n)$, we know that $1-2pM\exp\{-n^{\alpha}/(2B^2\sigma^2) \}-2p\exp\{-n^{\alpha}/(2C_3^2\sigma^2) \} $ converges to $1$ as $p$, $s$, and $n$ grow. Therefore, Condition~\ref{cond::set_group} holds with high probability.

Finally, we establish that Conditions~\ref{cond::coherence} and \ref{cond::irrepresentability} hold with high probability. Note that the dominant terms not involving $C_{\min}, C_{\max}$ or $\xi$ in the bounds in \eqref{eqn::coherence} in Lemma~\ref{lmm::coherence_sample} and \eqref{eqn::irrepresentability} in Lemma~\ref{lmm::irrepresentability_sample} involve $sM\Delta$ and $s^{3/2}M^2\Delta$, respectively. Given \eqref{eqn::xdiff}, one can check that
\begin{align}\label{eqn::asymp1}
sM\Delta \propto & \  n^{ \frac{\beta_1}{2\beta_1+1}\frac{2\beta_2-1}{2\beta_2+1}(1-\alpha)+\gamma}=o(1), \ \text{and} \\
 s^{3/2}M^2 \Delta \propto & \   n^{\frac{\beta_1}{2\beta_1+1}\frac{2\beta_2-3}{2\beta_2+1}(1-\alpha)+\frac{3}{2}\gamma}=o(1),\label{eqn::asymp2}
\end{align}
where we have used the fact that  $\beta_2\geq 3$ in Assumption~\ref{asmp::additivity} in the main paper as well as the fact that $\gamma < H_1(\beta_1,\beta_2,\alpha)$ from the statement of Theorem~\ref{thm::main} in the main paper. Since \eqref{eqn::xdiff} and \eqref{eqn::deviation_asymp} hold with high probability, combining the inequalities in Lemmas~\ref{lmm::coherence_sample} and \ref{lmm::irrepresentability_sample} with \eqref{eqn::asymp1} and \eqref{eqn::asymp2}, we see that Conditions~\ref{cond::coherence} and \ref{cond::irrepresentability} hold with high probability  given Assumptions~\ref{asmp::additivity}, \ref{asmp::coherence_pop} and \ref{asmp::irrepresentability_pop} in the main paper. 

In summary, we have shown that Conditions~\ref{cond::coherence}--\ref{cond::set_group} hold with high probability. Applying Theorem~\ref{thm::group} establishes that the GRADE estimator $\hat{S}_j$ in \eqref{eqn::us} in the main paper recovers the true support $S^*_j$. 
\end{proof}

\subsection{Proof of Proposition~\ref{prop::main}}\label{sec::proof_prop}

In Proposition~\ref{prop::main}, the choice of bandwidth $h_n$ is different from that in Theorems~\ref{thm::localP} and \ref{thm::main} of the main paper. In order to prove Proposition~\ref{prop::main} of the main paper, we establish the following concentration inequality for $\vertiii{\hat{X}_j- X_j^*}$, where the bandwidth is chosen as specified in Proposition~\ref{prop::main}  of the main paper.
\begin{proposition}\label{prop::localP}
Suppose that Assumptions~\ref{asmp::errors}--\ref{asmp::smoothness} in the main paper and \ref{asmp::LP1}--\ref{asmp::LP3}  hold. Let $\hat{X}_j$ be the local polynomial regression estimator of order $\ell=\lfloor \beta_1 \rfloor$ with bandwidth 
\begin{equation*}
h_n\propto n^{-1/(2\beta_1+1)}.
\end{equation*} 
There exists a constant $C_2 < \infty$ such that for each $j=1, \ldots, p$, 
\begin{equation*}
 \vertiii{\hat{X}_j- X_j^*}^2 \leq C_2 n^{\alpha-\frac{2\beta_1 }{2\beta_1+1}} 
\end{equation*}
holds with probability at least $ 1-2\exp\{-n^{\alpha}/(2\sigma^2 C^2_{3})\}.$
\end{proposition}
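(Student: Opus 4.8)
The plan is to recycle, almost verbatim, the argument used to prove Theorem~\ref{thm::localP}, stopping just before the bandwidth-optimization step and instead substituting the fixed bandwidth $h_n\propto n^{-1/(2\beta_1+1)}$. First I would reproduce the bias--variance decomposition: using $\sum_{i=1}^n W_{ni}(u;h)=1$ from \eqref{eqn::W_sum} together with $(a+b)^2\le 2a^2+2b^2$, I obtain
\[
\vertiii{\hat{X}_j-X_j^*}^2 \le 2\int_0^1 \text{bias}^2(u)\,du + 2\int_0^1 g^2(\epsilon_j/\sigma,u,h)\,du,
\]
with $\text{bias}(u)$ and $g(\cdot)$ as in \eqref{eqn::biasdef}--\eqref{eqn::gdef}. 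The bias term is controlled exactly as before: under Assumption~\ref{asmp::smoothness} and Assumptions~\ref{asmp::LP1}--\ref{asmp::LP3}, Proposition~1.13 of \cite{tsybakov2009} gives $|\text{bias}(u)|\le q_1 h^{\beta_1}$, hence $\int_0^1 \text{bias}^2(u)\,du \le q_1^2 h^{2\beta_1}$.

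Next I would bound the stochastic term precisely as in the proof of Theorem~\ref{thm::localP}: Lemma~\ref{lmm::W} shows that $g(\cdot,u,h)$ is $L_3$-Lipschitz with $L_3=\sigma C_3(nh)^{-1/2}$, and applying the Gaussian concentration bound (Theorem~5.6 of \cite{boucheron2013}) to $g$ and $-g$ with $v=n^{\alpha/2-1/2}h^{-1/2}$, using $\mathbb{E}[g(\epsilon_j/\sigma,u,h)]=0$, yields
\[
\text{Pr}\bigl\{ |g(\epsilon_j/\sigma,u,h)| \ge n^{\alpha/2-1/2}h^{-1/2} \bigr\} \le 2\exp\{-n^{\alpha}/(2\sigma^2 C_3^2)\}.
\]
Combining the two pieces, on an event of probability at least $1-2\exp\{-n^{\alpha}/(2\sigma^2 C_3^2)\}$ we have $\vertiii{\hat{X}_j-X_j^*}^2 \le 2q_1^2 h^{2\beta_1}+2n^{\alpha-1}h^{-1}$, which is exactly display~\eqref{eqn::bound}; nothing up to this point uses the particular choice of $h$.

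The only genuinely new step is the last one. Rather than minimizing $2q_1^2 h^{2\beta_1}+2n^{\alpha-1}h^{-1}$ over $h$, I substitute $h_n\propto n^{-1/(2\beta_1+1)}$ directly: the bias contribution becomes of order $n^{-2\beta_1/(2\beta_1+1)}$, while the stochastic contribution is of order $n^{\alpha-1}\,n^{1/(2\beta_1+1)}=n^{\alpha-2\beta_1/(2\beta_1+1)}$. Since $\alpha>0$ the stochastic term dominates, so $\vertiii{\hat{X}_j-X_j^*}^2\le C_2\, n^{\alpha-2\beta_1/(2\beta_1+1)}$ for a finite constant $C_2$, as claimed. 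I do not expect a real obstacle here; the only point worth flagging is that this bandwidth, which matches the cross-validation/\textsc{gcv} rate, is \emph{not} the one that balances bias and variance in the bound \eqref{eqn::bound}, so the rate is inflated by a factor $n^{\alpha}$ relative to the optimal $n^{-2\beta_1/(2\beta_1+1)}$ of Theorem~\ref{thm::localP}; the exponent $\alpha$ is precisely the slack bought by the $\exp(-n^{\alpha}/\cdot)$ tail probability, and the bound is informative (tends to $0$) exactly when $\alpha<2\beta_1/(2\beta_1+1)$, which is the range assumed in Proposition~\ref{prop::main}.
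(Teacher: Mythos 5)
Your proposal is correct and follows exactly the route the paper takes: the paper's own proof consists of the single remark that one reruns the argument for Theorem~\ref{thm::localP} up to the bound $\vertiii{\hat{X}_j-X_j^*}^2 \le 2q_1^2 h^{2\beta_1}+2n^{\alpha-1}h^{-1}$ in \eqref{eqn::bound} and then substitutes $h_n\propto n^{-1/(2\beta_1+1)}$ rather than optimizing over $h$. Your closing observation---that the stochastic term dominates and the resulting rate carries the extra factor $n^{\alpha}$, which is why Proposition~\ref{prop::main} restricts to $\alpha<2\beta_1/(2\beta_1+1)$---is accurate and consistent with the paper.
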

The proof of Proposition~\ref{prop::localP} is similar to that for Theorem~\ref{thm::localP} in the main paper by plugging in $
h_n\propto n^{-1/(2\beta_1+1)}$ in \eqref{eqn::bound}.

Given Proposition~\ref{prop::localP}, the proof of Proposition~\ref{prop::main} in the main paper follows from a similar argument as in the proof of Theorem~\ref{thm::main} in the main paper, and is thus omitted here. The constant $H_2(\beta_1,\beta_2,\alpha)$ is defined as
\begin{equation*}
H_2(\beta_1,\beta_2,\alpha) \equiv \min\left\{ \frac{\beta_1}{2\beta_1+1}\frac{2\beta_2-1}{2\beta_2+1}-\alpha,  \frac{1}{3}\frac{\beta_1}{2\beta_1+1}\frac{2\beta_2-3}{2\beta_2+1}-\alpha\right\}.
\end{equation*}

\section{PROOFS OF TECHNICAL LEMMAS}\label{sec::lemmas_proof}
\subsection{Proof of Lemma~\ref{lmm::basis}}\label{sec::basis}

In this section, in the interest of clarity, we  bring back the subscript $j$ in $\theta_{j}^*, \theta_{jk}^*, \theta_{jS^0}^*$ and $f_{jk}^*$. 

\begin{proof}
Recall that in Assumption~\ref{asmp::psibound}, \eqref{eqn::trigono} says that 
\begin{equation*}
\int_0^1 \delta_{jk}^2(u) du = \int_{0}^1 \left\{f^*_{jk}(X_k^*(u))- \psi^{\T}(X_k^*(u)) \theta^*_{jk} \right\}^2 \, du \leq Q (M+1)^{-2\beta_2}.
\end{equation*} 

 It follows from the triangle inequality that
 \begin{equation*}
\left| \left\{ \int_0^1  \big[\psi^{\T}(X_k^*(u)) \theta^*_{jk}\big]^2 du \right\}^{1/2} - \left\{ \int_0^1  \big[f_{jk}^*(X_k^*(u))\big]^2 du \right\}^{1/2} \right| \leq \sqrt{Q} M^{-\beta_2}. 
 \end{equation*}
 The orthogonality of $\psi$ in Assumption~\ref{asmp::psibound} then leads to \eqref{eqn::f2theta}, i.e.,  
 \begin{equation*}
 \left| \| \theta^*_{jk}\|_2 -  \left\{ \int_0^1  \big[f_{jk}^*(X_k^*(u))\big]^2 du \right\}^{1/2} \right| \leq \sqrt{Q} M^{-\beta_2}. 
 \end{equation*}

 From \eqref{eqn::trigono}, we can also see that
\begin{equation*}
\begin{aligned}
\left| \int_0^t \delta_{jk}(u) \, du \right| \leq & \left\{ \int_0^t \delta^2_{jk}(u) \, du\right\}^{1/2} \left\{ \int_0^t 1^2 \, du\right\}^{1/2} \leq  \left\{ \int_0^1 \delta^2_{jk}(u) \, du\right\}^{1/2}\\
 \leq &  \sqrt{Q(M+1)^{-2\beta_2}} \leq \sqrt{QM^{-2\beta_2}},
\end{aligned}
\end{equation*}
where we use the fact that $t \in [0,1]$. 

Recall from \eqref{eqn::ODE_integrated} in the main paper and \eqref{eqn::thetastar} that
$$ X_j^*(t)=\theta_{j0}^* t +\sum_{k=1}^p \Psi_{k}^{\T}(t)\theta_{jk}^*  + \sum_{k=1}^p\int_0^t \delta_{jk}(u) \, du,$$
where we let $X^*_j(0)=0$ for ease of discussion. We know that both $\theta_{jk}^*$ and $\delta_{jk}$ are  zero for $k \notin S$. Thus, the errors that result from  the use of truncated bases are bounded by
\begin{equation*}
\begin{aligned}
\vertiii{X_j^*-  \Psi_{S_j^0}^{\T}\theta_{jS_j^0}^*} = & \vertiii{X_j^*- \theta_{j0}^* t- \sum_{k \in S_j} \Psi_{k}^{\T}\theta_{jk}^*}=  \left[\int_0^1 \left\{\sum_{k \in S_j} \int_0^t \delta_{jk}(u) du\right\}^2 dt \right]^{\frac{1}{2}} \\
\leq  &\left[\int_0^1 \left\{s \sqrt{QM^{-2\beta_2}} \right\}^2 dt \right]^{\frac{1}{2}} \leq s\sqrt{QM^{-2\beta_2}}.
\end{aligned}
\end{equation*}

The error bound in \eqref{eqn::truncation_error} is on the whole trajectories, whereas we only observe discrete measurements of the trajectories in reality. 
The bound in \eqref{eqn::truncation_error_dis} addresses this case and is proved below.
\begin{equation*}
\begin{aligned}
\frac{1}{n}\sum_{i=1}^n \{X_{j}^*(t_i)-\Psi_{S_j^0}^{\T}(t_i) \theta_{jS_j^0}^* \}^2 &= \frac{1}{n}\sum_{i=1}^n \left\{\sum_{k \in S_j} \int_0^{t_i} \delta_{jk}(u) du\right\}^2\\
& \leq  \int_0^1 \left\{\sum_{k \in S_j} \int_0^t \delta_{jk}(u) du\right\}^2\, dt +  o\left(\frac{1}{n^2} \right) \\
& \leq  s^2 Q M^{-2\beta_2} + o\left(n^{-2} \right),
\end{aligned}
\end{equation*}
where the last inequality follows from \eqref{eqn::truncation_error} and the second to last inequality follows from the trapezoidal rule on a uniform grid. 

\end{proof}

\subsection{Proof of Lemma~\ref{lmm::coherence_sample}}\label{sec::coherence_sample}

We first review some known results on matrix norms and eigenvalues. For an $m\times n$ matrix $A$,  
\begin{equation}\label{eqn::norm}
\|A\|_2 = \sup_{x \in \mathbb{R}^n} \frac{\|Ax\|_2}{\|x\|_2} =\sup_{ \|x\|_2 =1} \left\{ \sum_{i=1}^m \left(\sum_{j=1}^n a_{ij}x_j \right)^2 \right\}^{\frac{1}{2}}
\leq \left(\sum_{i=1}^m \sum_{j=1}^n a_{ij}^2 \right)^{\frac{1}{2}}\equiv\|A\|_F,
\end{equation}
where $\| \cdot\|_F$ is the Frobenius norm. We remind the reader that for a symmetric matrix $A$ that is not positive semi-definite, $\Lambda_{\max}(A)\leq \|A\|_2$. The following two inequalities are useful in the proofs. Let $A$ and $\hat{A}$ be two $n \times n$ symmetric matrices.
\begin{enumerate}
\item Weyl's inequality \citep{weyl1912} states that
\begin{equation*}
 \Lambda_{\min} (A) -\Lambda_{\max}(\hat{A}-A) \leq \Lambda_{\min}(\hat{A}), \ \text{and} \ \Lambda_{\max}(\hat{A}) \leq \Lambda_{\max}(A)+\Lambda_{\max}(\hat{A}-A),
\end{equation*}
which leads to
\begin{equation}\label{eqn::weyl}\small
 \Lambda_{\min} (A) -\|\hat{A}-A\|_2 \leq \Lambda_{\min}(\hat{A}), \ \text{and} \ \Lambda_{\max}(\hat{A}) \leq \Lambda_{\max}(A)+\|\hat{A}-A\|_2.
\end{equation}
\item The Gershgorin circle theorem \citep{gershgorin1931} states that
\begin{equation}\label{eqn::GC}
\|\hat{A}-A\|_2 \leq \max_{i} \sum_{j=1}^{n} |(\hat{A}-A)_{ij}| \leq n\|\hat{A}-A\|_{\infty},
\end{equation}
where the norm $\|\cdot\|_{\infty}$ is defined as $\|A\|_{\infty} = \max_{i,j} |A_{ij}|$.
\end{enumerate}

We are now ready to prove Lemma~\ref{lmm::coherence_sample}.

\begin{proof}
Let $A\equiv \int_0^1 \Psi_{S^0}(t) \Psi_{S^0}^T(t) \, dt$, $A_n \equiv \frac{1}{n}\sum_{i=1}^n \Psi_{S^0}(t_i) \Psi_{S^0}^T(t_i)$, $\hat{A}_n \equiv \frac{1}{n}\sum_{i=1}^n \hat{\Psi}_{S^0}(t_i) \hat{\Psi}_{S^0}^T(t_i)$, which are $(Ms+1)\times (Ms+1)$ matrices. Then, 
\begin{equation}\label{eqn::A_decomp}
\begin{aligned}
\Lambda_{\min} (\hat{A}_n) \geq & \Lambda_{\min}(A)-\|\hat{A}_n-A\|_2 \\
\geq & \Lambda_{\min}(A)-\|A_n-A\|_2 -\|\hat{A}_n - A_n\|_2,
\end{aligned}
\end{equation}
where the first inequality follows from  \eqref{eqn::weyl} and the second follows from the triangle inequality.

Furthermore,
\begin{equation}\label{eqn::Ahatn} \small
\begin{aligned}
\|\hat{A}_n- A_n\|_2 \leq & (M s+1) \| \hat{A}_n- A_n\|_{\infty} \\
\leq & (Ms+1)\left\| \frac{1}{n} \sum_{i=1}^n \left\{ \hat{\Psi}_{S^0}(t_i)\hat{\Psi}_{S^0}^{\T}(t_i) - {\Psi}_{S^0}(t_i){\Psi}_{S^0}^{\T}(t_i) \right\}  \right\|_{\infty}\\
\leq & \frac{Ms+1}{n} \left\| \sum_{i=1}^n \hat{\Psi}_{S^0}(t_i) \left\{ \hat{\Psi}^{\T}_{S^0}(t_i)  -{\Psi}^{\T}_{S^0}(t_i) \right\}\right\|_{\infty} + \\
& \frac{Ms+1}{n} \left\| \sum_{i=1}^n \Psi_{S^0}(t_i) \left\{ \hat{\Psi}^{\T}_{S^0}(t_i)  -{\Psi}^{\T}_{S^0}(t_i) \right\}\right\|_{\infty} \\
\leq & \frac{Ms+1}{n}  \left\| \sum_{i=1}^n \hat{\Psi}_{S^0}(t_i) D \Delta \right\|_{\infty} +\frac{Ms+1}{n}  \left\| \sum_{i=1}^n {\Psi}_{S^0}(t_i) D \Delta \right\|_{\infty} \\
\leq & \frac{2Ms+2}{n}\|nBD\Delta\|_{\infty} = 2(Ms+1)BD\Delta,
\end{aligned}
\end{equation}
where the first inequality follows from \eqref{eqn::GC}, the last inequality follows from the bounds in Assumption~\ref{asmp::psibound}, and the second to last inequality follows from the following inequality: for $k \in S^0$ and $m=1,\ldots, M$,
\begin{equation}\label{eqn::meanvalue}
\small
\begin{aligned}
|\hat{\Psi}_{km}(t_i)-\Psi_{km}(t_i)| & = \left| \int_0^{t_i} \psi_m(\hat{X}_k(u)) \, du - \int_0^{t_i} \psi_m({X}^*_k(u)) \, du \right|\\
& =\left| \int_0^{t_i} \{\psi_m(\hat{X}_k(u)) - \psi_m({X}^*_k(u))\} \, du \right|\\
& \leq \left| \int_0^{t_i} |D \{\hat{X}_k(u)-{X}^*_k(u)\}| \, du \right| \\
& \leq \left\{\int_0^{t_i} D^2 \, du \right\}^{1/2} \left\{ \int_0^{t_i} (\hat{X}_k(u)-{X}^*_k(u))^2\,du \right\}^{1/2} \\
& \leq D \vertiii{\hat{X}_k-X^*_k} \leq D\Delta.
\end{aligned}
\end{equation}
Here the first inequality follows from the mean-value theorem and the bounds in Assumption~\ref{asmp::psibound}.

Now, from \eqref{eqn::GC},
\begin{equation}\label{eqn::AnA}
\| {A}_n- A\|_2 \leq  (Ms+1) \| A_n - A\|_{\infty}  \leq (Ms+1)\frac{BD+B^2}{6n^2},
\end{equation}
where for each element of the matrix $A_n - A = \frac{1}{n}\sum_{i=1}^n \Psi_{S^0}(t_i) \Psi_{S^0}^{\T}(t_i) - \int_0^1 \Psi_{S^0}(t)\Psi^{\T}_{S^0}(t) \, dt$,
\begin{equation*}
\begin{aligned}
& \left| \frac{1}{n}\sum_{i=1}^n \Psi_{km_1}(t_i)\Psi_{lm_2}(t_i) - \int_0^1 \Psi_{km_1}(t)\Psi_{lm_2}(t) \, dt \right| \\
 \leq  & \frac{ \left|\left\{\Psi_{km_1}(u)\Psi_{lm_2}(u)\right\}^{''}\right|  } {12n^2} \leq \frac{ \left|2\Psi'_{km_1}(u)\Psi'_{lm_2}(u)+ \Psi^{''}_{km_1}(u)\Psi_{lm_2}(u)+\Psi'_{km_1}(u)\Psi^{''}_{lm_2}(u) \right| } {12n^2} \\
\leq & \frac{ 2B^2+BD+BD  } {12n^2}  = \frac{BD+B^2}{6n^2},
\end{aligned}
\end{equation*}
where derivatives are taken with respect to $t$. 
By the trapezoid rule on a uniform grid, the first inequality holds for some $u \in [0,1]$.
The second inequality makes use of the bounds in Assumption~\ref{asmp::psibound}, which imply that
$$|\Psi_{km}'(t)|=\left| \left(\int_0^t \psi_{km}(s)\, ds\right)'\right|=| \psi_{km}(t)| \leq B \quad $$
and
$$ \quad |\Psi_{km}''(t)|=\left| \left(\int_0^t \psi_{km}(s)\, ds\right)^{''}\right|=| \psi'_{km}(t)| \leq D. $$

In summary,  combining \eqref{eqn::A_decomp}, \eqref{eqn::Ahatn}, and \eqref{eqn::AnA},
\begin{equation*}
\begin{aligned}
\Lambda_{\min}(\hat{A}_n) \geq & \Lambda_{\min}(A) - \left( 2BD\Delta+ \frac{BD+B^2}{6 n^2} \right)(Ms+1) \\
\geq & C_{\min} - \left( 2BD\Delta+ \frac{BD+B^2}{6 n^2} \right)(Ms+1).
\end{aligned}
\end{equation*}

The upper bound for $\Lambda_{\max}(\hat{A}_n)$ and the lower bound for $\Lambda_{\min}\left( \frac{1}{n} \sum_{i=1}^n \hat{\Psi}_{k}(t_i) \hat{\Psi}_{k}^{\T}(t_i)  \right)$ can be established in a similar manner.
\end{proof}

\subsection{Proof of Lemma~\ref{lmm::irrepresentability_sample}}\label{sec::irrepresent_sample}

\begin{proof}
Define $A$, $A_n$, and $\hat{A}_n$ as in the proof for Lemma~\ref{lmm::coherence_sample}. We let $F=\int_0^1 \Psi_k \Psi_{S^0}^{\T} \, dt$,  $F_n =\sum_{i=1}^n\Psi_k(t_i) \Psi_{S^0}^{\T}(t_i)/n$, and $\hat{F}_n =\sum_{i=1}^n\hat{\Psi}_k(t_i) \hat{\Psi}_{S^0}^{\T}(t_i)/n$. $F, F_n,$ and $\hat{F}_n$ are $M \times (Ms+1)$ matrices. We let $\hat{C}_{\min}$  denote the lower bound of $\Lambda_{\min}(\hat{A}_n)$ established in Lemma~\ref{lmm::coherence_sample}, i.e.,  
$$ \hat{C}_{\min} \equiv C_{\min} - \left( 2BD\Delta+ \frac{BD+B^2}{6 n^2} \right)(Ms+1). $$
To prove the result, we need to bound $\|\hat{F}_n\hat{A}^{-1}_n \|_2$. Note that 
\begin{equation*}
\begin{aligned}
\|\hat{F}_n\hat{A}^{-1}_n \|_2 & \leq \|  \hat{F}_n\hat{A}^{-1}_n - \hat{F}_n{A}^{-1}_n +\hat{F}_n{A}^{-1}_n - {F}_n{A}^{-1}_n +{F}_n{A}^{-1}_n\|_2\\
& \leq \|  \hat{F}_n( \hat{A}^{-1}_n-{A}^{-1}_n)\|_2 +\| (\hat{F}_n-F_n) {A}^{-1}_n\|_2 + \|{F}_n{A}^{-1}_n\|_2 \\ \equiv & \|\textrm{I}\|_2 + \|\textrm{II}\|_2 + \|\textrm{III}\|_2.
\end{aligned}
\end{equation*}
Using sub-multiplicity of the $\ell_2$-norm of matrices,
$$\| \textrm{I}\|_2^2 \leq  \|\hat{F}_n\|_2^2 \|\hat{A}_n^{-1}-{A}_n^{-1}\|_2^2.$$
\begin{align*}
\intertext{Applying \eqref{eqn::norm} to $\hat{F}_n$, we get}
\| \textrm{I}\|_2^2 \leq & M(Ms+1) \left(\max_{i,j} \hat{F}_{n,ij}^2\right)  \|\hat{A}_n^{-1}-{A}_n^{-1}\|^2_2. \\
\intertext{Recalling that $\hat{F}_n =\sum_{i=1}^n\hat{\Psi}_k(t_i) \hat{\Psi}_{S^0}^{\T}(t_i)/n$ and that $|\hat{\Psi}_{km}(t_i)|\leq B$,}
\| \textrm{I}\|_2^2 \leq & M(Ms+1) (\sum_{i=1}^n  B^2/n)^2 \|\hat{A}_n^{-1}-{A}_n^{-1}\|^2_2. \\
\intertext{Note that $\hat{A}_n^{-1}-{A}_n^{-1}=\hat{A}_n^{-1}({A}_n-\hat{A}_n)A_n^{-1}$. Thus,}
\| \textrm{I}\|_2^2 \leq & M(Ms+1) B^4 \|\hat{A}^{-1}_n\|^2_2 \|\hat{A}_n - A_n\|_2^2 \|A_n^{-1}\|_2^2\\
\leq & M(Ms+1) B^4 \hat{C}_{\min}^{-2} \| \hat{A}_n -A_n\|^2_2 {C}_{\min}^{-2} \\
\leq & M(Ms+1) B^4 \{2(M s+1) D B \Delta\}^2 \hat{C}_{\min}^{-2} {C}_{\min}^{-2},\\
\equiv & c_1 \hat{C}_{\min}^{-2} M(Ms+1)^3 \Delta^2,
\end{align*}
where the last two inequalities follow from the proof of Lemma~\ref{lmm::coherence_sample}.

Next, note that
\begin{align*}
\|\textrm{II}\|_2^2 = & \| (\hat{F}_n- F_n) A_n^{-1} \|_2^2 \\
\leq & C_{\min}^{-2} \left\| \frac{1}{n}\sum_{i=1}^n  \hat{\Psi}_k(t_i) \hat{\Psi}^{\T}_{S^0}(t_i) - \frac{1}{n}\sum_{i=1}^n  {\Psi}_k(t_i) {\Psi}^{\T}_{S^0}(t_i)\right\|_2^2 \\
\leq & C_{\min}^{-2} \left\| \frac{1}{n}\sum_{i=1}^n  \hat{\Psi}_k(t_i) \left\{\hat{\Psi}^{\T}_{S^0}(t_i)- {\Psi}^{\T}_{S^0}(t_i) \right\}\right\|_2^2 +\\
& C_{\min}^{-2} \left\| \frac{1}{n}\sum_{i=1}^n \left\{ \hat{\Psi}_k(t_i)-\Psi_k(t_i) \right\} {\Psi}^{\T}_{S^0}(t_i) \right\|_2^2 \\
\leq & 2 C_{\min}^{-2} B^2 D^2 \Delta^2 M(Ms+1)\\
\equiv & c_2 M(Ms+1)\Delta^2,
\end{align*}
where the first inequality follows from sub-multiplicity of norms of matrices, and the last from \eqref{eqn::norm}, \eqref{eqn::meanvalue}, and the bounds in Assumption~\ref{asmp::psibound}.

Finally,
\begin{equation*}
\begin{aligned}
\|\textrm{III}\|_2 = & \| F_n A^{-1}_n\|_2  = \| F_n(A^{-1}_n-A^{-1}) + (F_n-F) A^{-1} +FA^{-1}\|_2\\
\leq & \xi + \|F_n\|_2 \| A^{-1}_n - A^{-1}\|_2 +\|(F_n-F) A^{-1}\|_2\\
\leq  & \xi + \left\{ M(Ms+1)B^4  \| A^{-1}_n - A^{-1}\|_2^2 \right\}^{1/2} + \{\|F_n-F\|_2^2 C_{\min}^{-2}\}^{1/2}\\
\leq  & \xi + \left\{ M(Ms+1)B^4  \|A^{-1}_n\|_2^2 \| A_n - A\|_2^2 \|A^{-1}\|_2^2 \right\}^{1/2}+\{\|F_n-F\|_2^2 C_{\min}^{-2}\}^{1/2}\\
\leq  & \xi + \left\{ M(Ms+1)B^4 \hat{C}_{\min}^{-2} {C}_{\min}^{-2} \| A_n - A\|_2^2 \right\}^{1/2}+\{\|F_n-F\|_2^2 C_{\min}^{-2}\}^{1/2}\\
\leq  & \xi + \left\{ M(Ms+1)B^4 \hat{C}_{\min}^{-2} {C}_{\min}^{-2} (Ms+1)^2 \frac{BD+B^2}{6n^2} \right\}^{1/2}+\\
& \left\{ M(Ms+1){C}_{\min}^{-2}\frac{BD+B^2}{6n^2} \right\}^{1/2}\\
\leq & \xi + \left\{ c_3 M(Ms+1)^3/{6n^2} \right\}^{1/2},
\end{aligned}
\end{equation*}
where the first inequality follows from Assumption~\ref{asmp::irrepresentability_pop} in the main paper and  the second to last inequality follows from \eqref{eqn::AnA}.

In summary,
\begin{equation*}
\begin{aligned}
& \left\| \frac{1}{n} \sum_{i=1}^n\hat{\Psi}_k(t_i) \hat{\Psi}_{S^0}^{\T}(t_i)  \left(\frac{1}{n} \sum_{i=1}^n\hat{\Psi}_{S^0}(t_i)\hat{\Psi}_{S^0}^{\T}(t_i)  \right)^{-1} \right\|_2  \leq \\
 & \xi + \left\{c_1 M(Ms+1)^3 \Delta^2\right\}^{1/2}+\left\{c_2 M(Ms+1)\Delta^2\right\}^{1/2}+ \left\{ c_3 M(Ms+1)^3/{6n^2} \right\}^{1/2}. 
\end{aligned}
\end{equation*}
where $c_1, c_2, c_3$ are constants. 
\end{proof}

\subsection{Proof of Lemma~\ref{lmm::deviation}}\label{sec::deviation}

\begin{proof}
For $k=1,\ldots, p$,
\begin{equation*}
\begin{aligned}
&\left\| \frac{1}{n} \sum_{i=1}^n \hat{\Psi}_k(t_i) Y_{ij} - \frac{1}{n} \sum_{i=1}^n \hat{\Psi}_k(t_i) \hat{\Psi}_{S^0}^{\T}(t_i) \theta^*_{S^0} \right\|_2 \\
= & \frac{1}{n} \left\| \sum_{i=1}^n \hat{\Psi}_k(t_i) X^*_{j}(t_i) + \sum_{i=1}^n \hat{\Psi}_k(t_i)\epsilon_{ji} - \sum_{i=1}^n \hat{\Psi}_k(t_i) {\Psi}_{S^0}^{\T}(t_i) \theta^*_{S^0} +  \right.\\
& \left. \sum_{i=1}^n \hat{\Psi}_k(t_i) {\Psi}_{S^0}^{\T}(t_i) \theta^*_{S^0} - \sum_{i=1}^n \hat{\Psi}_k(t_i) \hat{\Psi}_{S^0}^{\T}(t_i) \theta^*_{S^0} \right\|_2 \\
\leq & \left\|\frac{1}{n} \sum_{i=1}^n \hat{\Psi}_k(t_i) \{ X_{j}^*(t_i)-\Psi_{S^0}^{\T}(t_i)\theta^*_{S^0}\} \right\|_2 + \left\|\frac{1}{n} \sum_{i=1}^n \hat{\Psi}_k(t_i) \{{\Psi}_{S^0}^{\T}(t_i)-\hat{\Psi}_{S^0}^{\T}(t_i)\} \theta^*_{S^0} \right\|_2 + \\
&  \left\|\frac{1}{n} \sum_{i=1}^n \hat{\Psi}_k(t_i)\epsilon_{ji} \right\|_2\\
\equiv & \|\textrm{I}\|_2+\|\textrm{II}\|_2+\|\textrm{III}\|_2.
\end{aligned}
\end{equation*}

First, applying the Cauchy-Schwarz inequality to $\|\textrm{I}\|_2^2$,
\begin{align*}
\| \textrm{I}\|_2^2 \leq & \sum_{m=1}^M \left[ \frac{1}{n^2} \sum_{i=1}^n \hat{\Psi}^2_{km}(t_i) \sum_{i=1}^n \left\{X_j^*(t_i)- \Psi_{S^0}^{\T}(t_i) \theta^*_{S^0} \right\}^2 \right].\\
\intertext{From the bounds in Assumption~\ref{asmp::psibound} and \eqref{eqn::truncation_error_dis} ,}
\| \textrm{I}\|_2^2 \leq & M\left\{ \frac{1}{n^2} \left(n B^2\right) \left( s^2 n Q M^{-2\beta_2}\right) \right\} = s^2 M^{-2\beta_2 +1} Q B^2.  
\end{align*}

Next, note that $\hat{\Psi}_0(t_i) -  \Psi_0(t_i) = t_i- t_i=0$, we have $\left\{ \Psi_{S^0}^{\T}(t_i) - \hat{\Psi}_{S^0}^{\T}(t_i) \right\}\theta_{S^0}^*=\left\{ \Psi_{S}(t_i) - \hat{\Psi}_{S}(t_i) \right\}^{\T}\theta_{S}^*$.  Thus applying the Cauchy-Schwarz inequality to $\|\textrm{II}\|_2^2$, 
\begin{align*}
\|\textrm{II}\|_2^2 \leq &  \sum_{m=1}^M \left(  \frac{1}{n^2}\sum_{i=1}^n \hat{\Psi}^2_{km}(t_i) \sum_{i=1}^n\left[ \left\{ \Psi_{S}(t_i) - \hat{\Psi}_{S}(t_i) \right\}^{\T}\theta_{S}^*\right]^2 \right).\\
\intertext{Applying the norm inequality $a^{\T} b \leq \|a\|_{\infty} \|b\|_1$ to $\left\{ \Psi_{S}(t_i) - \hat{\Psi}_{S}(t_i) \right\}^{\T}\theta_{S}^*$ and using the inequality \eqref{eqn::meanvalue} as well as the bounds in Assumption~\ref{asmp::psibound}, we get}
\|\textrm{II}\|_2^2 \leq & M\left\{\frac{1}{n^2} n B^2 \sum_{i=1}^n \|\theta_{S}^*\|_1^2 D^2 \Delta^2  \right\} \leq  M B^2 D^2  \|\theta^*_{S}\|_1^2 \Delta^2. 
\end{align*}

Finally,  $\textrm{III} = \frac{1}{n} \sum_{i=1}^n \hat{\Psi}_k(t_i) \epsilon_{ji}$ is an $M$-vector. For each $m=1,\ldots, M$, we let $g(\epsilon_j/\sigma) = \sum_{i=1}^n \hat{\Psi}_{km}(t_i) \epsilon_{ji}/n$. Then, for $a,b \in \mathbb{R}^p$,
\begin{equation*}
\begin{aligned}
|g(a)-g(b)| = &  \left|\sigma \sum_{i=1}^n \hat{\Psi}_{km}(t_i)(a_i-b_i)/n\right| \\
\leq  & \frac{\sigma}{n} \left\{ \sum_{i=1}^n \hat{\Psi}^2_{km}(t_i) \right\}^{0.5} \|a-b\|_2 \leq \frac{\sigma }{n}\sqrt{nB^2} \|a-b\|_2.
\end{aligned}
\end{equation*}
This shows that  $g(\cdot)$ is an $L_3$-Lipshitz function with  $L_3=\sigma B/\sqrt{n}$.  Note that $\mathbb{E} g(\epsilon_j/\sigma)=0$. Thus, by Theorem 5.6 in \cite{boucheron2013} presented in Section~\ref{sec::proof_local}, we have 
\begin{equation*}
\text{Pr}(|g(\epsilon_j/\sigma)|\geq v) \leq 2\exp\{-v^2n/(2B^2\sigma^2) \}.
\end{equation*}
Letting $v=n^{\alpha/2-0.5}$, $\|\textrm{III}\|_2^2 \leq n^{\alpha-1} M $ holds with probability at least $1-2M\exp\{-n^{\alpha}/(2B^2\sigma^2) \}$.

Combining all of the pieces, we find that 
\begin{equation*}
\|\textrm{I}\|_2+\|\textrm{II}\|_2+\|\textrm{III}\|_3 \leq \eta \equiv M^{1/2} \left\{s M^{-\beta_1} Q^{1/2} B+BD\|\theta^*_{S}\|_1 \Delta +n^{\frac{\alpha}{2}-\frac{1}{2} }  \right\}
\end{equation*}
with probability at least $1-2M\exp\{-n^{\alpha}/(2B^2\sigma^2) \}$. 

For $k=0$, 
\begin{equation*} \small
\begin{aligned}
& \left\| \frac{1}{n} \sum_{i=1}^n \hat{\Psi}_0(t_i) \left\{ Y_{ij} - \hat{\Psi}_{S^0}^{\T}(t_i) \theta^*_{S^0} \right\}\right\|_2 =  \left\| \frac{1}{n} \sum_{i=1}^n t_i \left\{ Y_{ij} - \hat{\Psi}_{S^0}^{\T}(t_i) \theta^*_{S^0} \right\}\right\|_2 \\
\leq & \left\|\frac{1}{n} \sum_{i=1}^n t_i \{ X_{j}^*(t_i)-\Psi_{S^0}^{\T}(t_i)\theta^*_{S^0}\} \right\|_2 + \left\|\frac{1}{n} \sum_{i=1}^n t_i \{{\Psi}_{S^0}^{\T}(t_i)-\hat{\Psi}_{S^0}^{\T}(t_i)\} \theta^*_{S^0} \right\|_2 +\\
& \left\|\frac{1}{n} \sum_{i=1}^n t_i\epsilon_{ji} \right\|_2.
\end{aligned}
\end{equation*}
Recall that $t \in [0,1]$ and, without loss of generality, let $B\geq 1$. Thus, we can see from the same argument that $\left\| \frac{1}{n} \sum_{i=1}^n t_i \left\{ Y_{ij} - \hat{\Psi}_{S^0}^{\T}(t_i) \theta^*_{S^0} \right\}\right\|_2\leq \eta$ holds with the same probability.

\end{proof}

\section{DETAILS ABOUT DATA GENERATION}\label{sec::data_appendix}
In this section, we provide details about the parameters used for generating data in Section~\ref{sec::sparse} of the main paper (see Equation~\ref{eqn::FHN}). 
Three pairs of variables, $(X_1,X_2), (X_3,X_4), (X_5,X_6)$, are solutions of \eqref{eqn::FHN} in the main paper with the following parameters and initial values: 
\begin{enumerate}
\item $(X_1,X_2)$ are generated according to \eqref{eqn::FHN} from  the main paper with $\theta_{1,0}=0$, $\theta_{1,1}=(1.2,0.3,-0.6)^{\T}$, $\theta_{1,2}=(0.1,0.2,0.2)^{\T}$, $\theta_{2,0}=0.4$, $\theta_{2,1}=(-2,0,0.4)^{\T}$, $\theta_{2,2}=(0.5,0.2,-0.3)^{\T}$, and initial values $X_1(0)=-2, X_2(0)=2$. 
\item $(X_3,X_4)$ are generated according to \eqref{eqn::FHN} from  the main paper with $\theta_{3,0}=-0.2, \theta_{3,3}=(0,0,0)^{\T}, \theta_{3,4}=(-0.3,0.4,0.1)^{\T}, \theta_{4,0}= -0.2, \theta_{4,3}=(0.2,-0.1,-0.2)^{\T}, \theta_{4,4}=(0,0,0)^{\T}$, and initial values $X_3(0)=2, X_4(0)=-2$. 
\item $(X_5,X_6)$ are generated according to \eqref{eqn::FHN} from  the main paper with $\theta_{5,0}=0.05$, $\theta_{5,5} = (0,0,0)^{\T}$, $\theta_{5,6}=(0.1,0,-0.8)^{\T}$, $\theta_{6,0}=-0.05$, $\theta_{6,5}=(0,0,0.5)^{\T}$, $\theta_{6,6}=(0,0,0)^{\T}$, and initial values $X_5(0)=-1.5, X_6(0)=1.5$. 
\end{enumerate}
Solution trajectories of $X_1, \ldots, X_6$ are shown in Figure~\ref{fig::solution}.  For $X_7, \ldots, X_{10}$, we drew the initial values $X_j(0), j=7, \ldots, 10,$ and the $\theta_{j,0}, j=7,\ldots, 10$, from a normal distribution. All other parameters were set to zero, so that $X_7, \ldots, X_{10}$ represent ``noise" variables. 
The directed graph of $X_1,\ldots, X_{10}$ is showing in Figure~\ref{fig::network}. 

\begin{figure}
\begin{center}
\includegraphics[scale=0.35]{./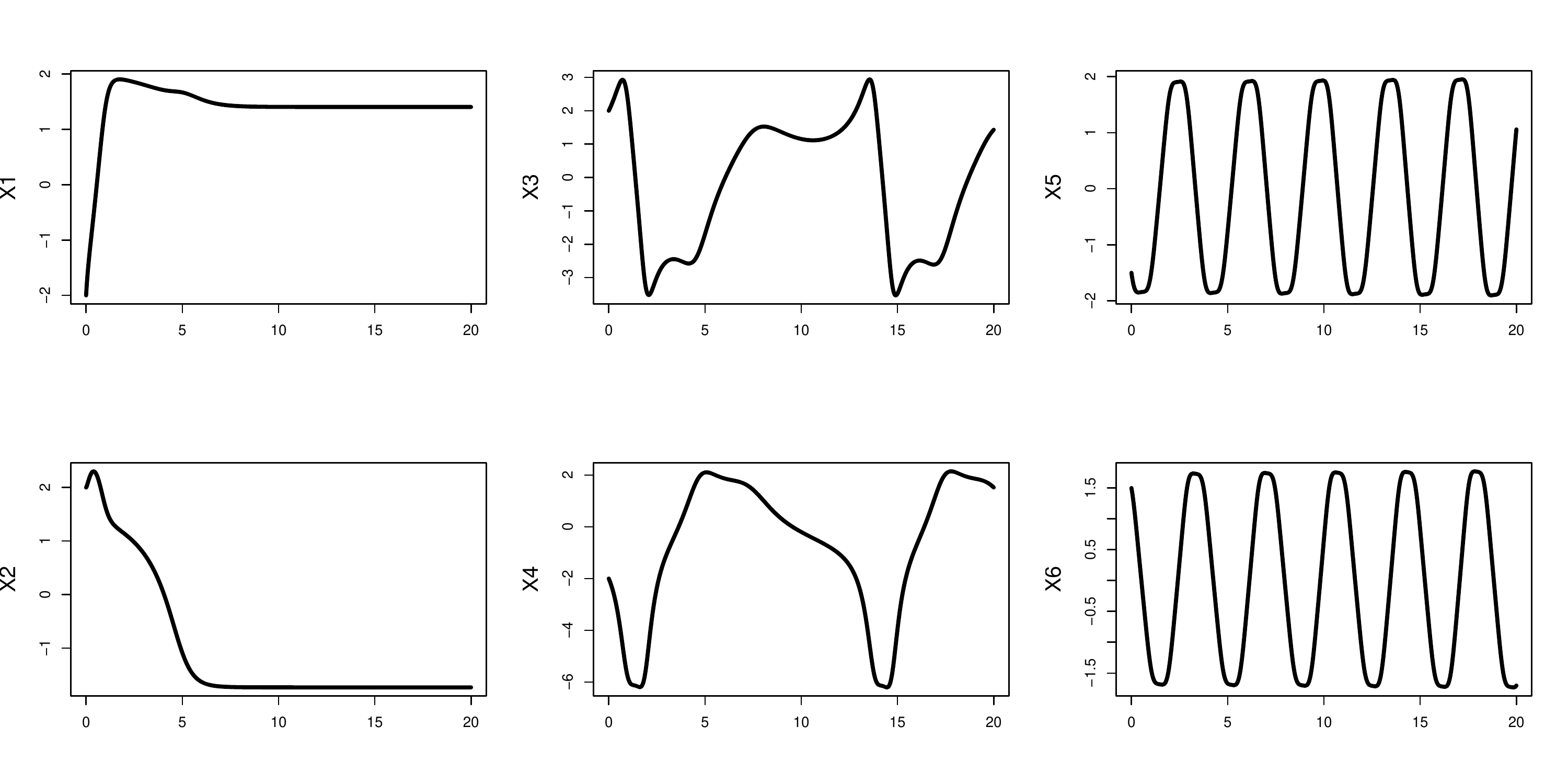}
\caption[]{The curves $X_1,\ldots, X_{6}$ on $[0,20]$ described in Section~\ref{sec::sparse} of the main paper and Section~\ref{sec::data_appendix} of the supplementary material.} 
\label{fig::solution}
\end{center}
\end{figure}
\end{appendices}

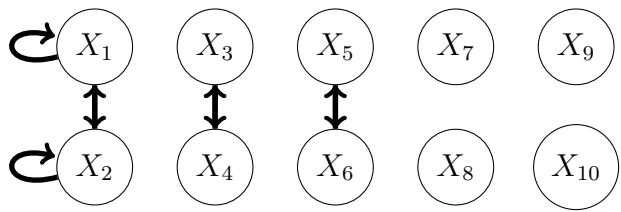
\begin{figure}[ht]
\begin{center}
\begin{tikzpicture}
  [scale=.8,auto=left,  every node/.style={circle,fill=blue!20}]

  \node[draw=black, fill=white,circle]  (X1) at (1,3) {$X_1$};
  \node[draw=black, fill=white,circle]  (X2) at (1,1)  {$X_2$};
  \node[draw=black,fill=white,circle] (X3) at (3,3) {$X_3$};
  \node[draw=black, fill=white,circle]  (X4) at (3,1)  {$X_4$};
  \node[draw=black, fill=white,circle]  (X5) at (5,3)  {$X_5$};
   \node[draw=black, fill=white,circle]  (X6) at (5,1)  {$X_6$};
\node[draw=black, fill=white,circle]  (X7) at (7,3)  {$X_7$};
   \node[draw=black, fill=white,circle]  (X8) at (7,1)  {$X_8$};
  \node[draw=black, fill=white,circle]  (X9) at (9,3)  {$X_9$};
   \node[draw=black, fill=white,circle]  (X10) at (9,1)  {$X_{10}$};

  \foreach \from/\to in {X1/X2,X2/X1, X3/X4,X4/X3,X5/X6,X6/X5}
	 \draw[->, line width=2] (\from) --  (\to);
\path
    (X1) edge [loop left, line width=2]  (X1)
     (X2) edge [loop left, line width=2] (X2);
\end{tikzpicture}
\end{center}
\caption{The network of $\{X_1, \ldots, X_{10}\}$. A directed edge $j \to k$ indicates that the $j$th node regulates the $k$th node.}
\label{fig::network}
\end{figure}

\end{document}